\newmdenv[skipabove=7pt,
skipbelow=7pt,
backgroundcolor=darkblue!15,
innerleftmargin=5pt,
innerrightmargin=5pt,
innertopmargin=5pt,
leftmargin=0cm,
rightmargin=0cm,
innerbottommargin=5pt,
linewidth=1pt]{tBox}
\newmdenv[skipabove=7pt,
skipbelow=7pt,
backgroundcolor=darkkblue!15,
innerleftmargin=5pt,
innerrightmargin=5pt,
innertopmargin=5pt,
leftmargin=0cm,
rightmargin=0cm,
innerbottommargin=5pt,
linewidth=1pt]{sBox}
\newmdenv[skipabove=7pt,
skipbelow=7pt,
backgroundcolor=blue2!25,
innerleftmargin=5pt,
innerrightmargin=5pt,
innertopmargin=5pt,
leftmargin=0cm,
rightmargin=0cm,
innerbottommargin=5pt,
linewidth=1pt]{dBox}
\definecolor{darkblue}{RGB}{0,76,156}
\definecolor{darkkblue}{RGB}{0,0,153}
\definecolor{blue2}{RGB}{102,178,255}
\theoremstyle{plain}
\newtheorem{lemma}{Lemma}[]
\newtheorem{thm}[lemma]{Theorem}
\newtheorem{stp2}{Step}
\newtheorem{stp}{Step}
\newtheorem{prop}{Proposition}
\newtheorem{cor}{Corollary}
\theoremstyle{definition}
\newtheorem{definition}{Definition}
\theoremstyle{remark}
\newtheorem{remark}{Remark}
\theoremstyle{problem}
\newtheorem{problem}{Problem}
\newenvironment{theorem}{\begin{tBox}\begin{thm}}{\end{thm}\end{tBox}}
\newenvironment{defi}{\begin{dBox}\begin{definition}}{\end{definition}\end{dBox}}
\newenvironment{step2}{\begin{sBox}\begin{stp2}}{\end{stp2}\end{sBox}}
\newcommand{\identity}{\ensuremath{\mathds{1}}}
\newcommand{\hs}{\ensuremath{\mathcal H}}
\newcommand{\ent}[2]{D ( #1 || #2 )}
\newcommand{\entA}[3]{D_{#1}(#2||#3)}
\newcommand{\enteA}[3]{D_{#1}^E(#2||#3)}
\newcommand{\R}{\mathbb{R}}
\newcommand{\Z}{\mathbb{Z}}
\newcommand{\N}{\mathbb{N}}
\newcommand{\E}{\mathbb{E}}
\newcommand{\LLL}{\mathbb{L}}
\newcommand{\LL}{\mathcal{L}}
\newcommand{\HH}{\mathcal{H}}
\newcommand{\BB}{\mathcal{B}}
\newcommand{\A}{\mathcal{A}}
\newcommand{\SSS}{\mathcal{S}}
\newcommand{\ds}{\displaystyle}
\def\ov{\overset}
\def\un{\underset}
\title{Quantum conditional relative entropy and quasi-factorization of the relative entropy}
\author[Capel]{Ángela Capel}
\author[Lucia]{Angelo Lucia}
\author[Pérez-García]{David Pérez-García}
\address[Capel]{Instituto de Ciencias Matemáticas (CSIC-UAM-UC3M-UCM), C/ Nicolás Cabrera 13-15, Campus de Cantoblanco, 28049 Madrid, Spain}
\email{angela.capel@icmat.es}
\address[Lucia]{QMATH, Department of Mathematical Sciences, University of Copenhagen, Universitetsparken 5, 2100 Copenhagen, Denmark and NBIA, Niels Bohr Institute, University of Copenhagen, Blegdamsvej 17, 2100 Copenhagen, Denmark}
\email{angelo@math.ku.dk}
\address[Pérez-García]{Departamento de Análisis Matemático, Universidad Complutense de Madrid, 28040 Madrid, Spain and 
Instituto de Ciencias Matemáticas (CSIC-UAM-UC3M-UCM), C/ Nicolás Cabrera 13-15, Campus de Cantoblanco, 28049 Madrid, Spain}
\email{dperezga@ucm.es}
\date{\today}
\begin{document}
\maketitle

\begin{abstract}
 The existence of a positive log-Sobolev constant implies a bound on the mixing time of a quantum dissipative evolution under the Markov approximation. For classical spin systems, such constant was proven to exist, under the assumption of a mixing condition in the Gibbs measure associated to their dynamics, via a quasi-factorization of the entropy in terms of the conditional entropy in some sub-$\sigma$-algebras.

In this work we analyze analogous quasi-factorization results in the quantum case. For that, we define the \textit{quantum conditional relative entropy} and prove several \textit{quasi-factorization} results for it. As an illustration of their potential, we use one of them to obtain a positive log-Sobolev constant for the heat-bath dynamics with product fixed point.
\end{abstract}

\section{Introduction}

Quantum dissipative evolutions which are governed by local Lindbladians model several kinds of noise in quantum many body systems. Their study is, hence, fundamental for the field of theoretical and experimental quantum physics in general, as well as for the implementation of quantum memory devices in particular.

In the theoretical proposal of dissipative state engineering, made in 2009, by Verstraete et al. \cite{diss-engin} and Kraus et al. \cite{kraus}, the authors proposed that a robust way of constructing interesting quantum systems which preserve the coherence for longer periods might be based on these systems. They base this proposal in the dissipative nature of noise, since it eliminates the problem of having to initialize the system carefully, due to the fact that the system is driven to a stationary fixed state that is independent of the initial state. Moreover, some experimental results of the past few years have given value to this proposal, inducing a remarkable growth in the interest on such systems. 

Concerning these systems, there are two main topics to study. On the one hand, the \textit{mixing time}, i.e., the time that it takes for an initial state to reach the fixed point, since it indicates how fast the evolution converges to its fixed point. On the other hand, the structure of the \textit{fixed point} (or set of fixed points) of the evolution, as it provides long-term properties to the system. 

Several bounds for the mixing time can be obtained by means of the optimal constants for some quantum functional inequalities, such as the \textit{spectral gap} for the \textit{Poincaré inequality} \cite{chi2} and the  \textit{logarithmic Sobolev constant} for the  \textit{log-Sobolev inequality} \cite{kast-temme}.  In this work we will focus on the latter, which provides an exponential improvement with respect to the spectral gap. 

In \cite{clasico} and \cite{cesi}, the authors consider a classical spin system in a finite lattice, in which the spins take values in the set of natural numbers, and show that, for a certain class of dynamics of this system, a modified log-Sobolev inequality is satisfied, under the assumption of a mixing condition in the Gibbs measure associated to this dynamics, improving the seminal result of \cite{marti-oliv}. The key step in the proof is a \textit{quasi-factorization of the classical entropy} in terms of the conditional entropy  in some sub-$\sigma$-algebras. 

The main purpose of this paper is to present quantum analogues of the aforementioned classical result of quasi-factorization of the entropy. For that, we will define the notion of \textit{quantum conditional relative entropy}.

Those results, following the steps of \cite{clasico} and \cite{cesi} in the classical case, and those of \cite{kast-brand} (where the authors obtained a positive spectral gap from a result of quasi-factorization of the variance), open a way to prove the existence of a positive log-Sobolev constant. In the last part of this work, we illustrate that and prove that this indeed holds for the heat-bath dynamics when the associated fixed point is product.

The paper is organized as follows: In Section \ref{sec:prelim}, we introduce some preliminaries which are necessary for the rest of the manuscript. More specifically, we set up notation, recall the definition and some basic properties of Schatten p-norms, and also present some properties concerning the von Neumann entropy of a state and the relative entropy of two states. After that, we introduce the concept of conditional expectation and show a specific example that will be of use in subsequent sections.  Finally, we recall the classical result of quasi-factorization of a function whose quantum analogue we will present in this work. 

In Section \ref{sec:cre}, we define the \textit{conditional relative entropy} of two states on a subregion and characterize axiomatically that definition. Later, we remove one property from  the definition of conditional relative entropy and present an example of this modified definition, which we call \textit{conditional relative entropy by expectations}, and compare both quantities. In the last part of this section, we show that both definitions extend their classical analogue.

In Section \ref{sec:qf}, we present the desired results of \textit{quasi-factorization of the relative entropy}. For that, we show in increasing order of difficulty several results of quasi-factorization, classifying them into two classes depending on the possible overlap of the subregions where the relative entropy is conditioned and the number of such subregions.  Moreover, we remark that one of them is equivalent to a result that was proven in \cite{us}.

Finally, in Section \ref{sec:logSob}, we review  the well-known result that a log-Sobolev constant provides an upper bound for the mixing time in a quantum spin lattice, and prove that for the heat-bath dynamics with product fixed point there exists a positive log-Sobolev constant.

\section{Preliminaries}\label{sec:prelim}

\subsection{Notation}

This paper concerns finite dimensional Hilbert spaces. In most of the paper, we will consider a bipartite $\HH_{AB}= \HH_A \otimes \HH_B$ or tripartite $\HH_{ABC}= \HH_A \otimes \HH_B \otimes \HH_C $   Hilbert space. In general, for $\Lambda$ a set of $\abs{\Lambda}$ parties, we will denote by $\HH_\Lambda = \un{x \in \Lambda}{\bigotimes} \HH_x$ the corresponding $\abs{\Lambda}-$partite finite dimensional Hilbert space.

For every $\hs_\Lambda$, we denote the set of bounded linear operators on $\hs_\Lambda$ by $\BB_\Lambda= \BB(\hs_\Lambda)$, and its subset of Hermitian operators by $\A_\Lambda \subseteq \BB_\Lambda$. These elements will be called \textit{observables} and denoted by lowercase Latin letters, such as $ f_\Lambda, g_\Lambda$, which we will usually write with a subscript denoting where they are defined to avoid confusion. We also denote the set of density operators by $\SSS_\Lambda= \qty{f_\Lambda \in \A_\Lambda\, : \, f_\Lambda \geq 0 \text{ and } \tr[f_{\Lambda}]=1}  $. Its elements will be also called \textit{states} and denoted by lowercase Greek letters, like $\rho_\Lambda, \sigma_\Lambda$. 

A \textit{quantum channel} \cite{wolf} is a completely positive and trace preserving map. We call a linear map $\mathcal{T}: \BB_\Lambda \rightarrow \BB_\Lambda $  a \textit{superoperator}, and say that it is \textit{positive} if it maps positive operators to positive operators. Moreover, we call $\mathcal{T}$ \textit{completely positive} if, given $ \mathcal{M}_n $ the space of complex $n \times n$ matrices, $\mathcal{T} \otimes \identity : \mathcal{B}_\Lambda \otimes \mathcal{M}_n \rightarrow \mathcal{B}_\Lambda \otimes \mathcal{M}_n $ is positive for every $n \in \N$. Finally, we say that $\mathcal{T}$ is \textit{trace preserving} if $\tr[\mathcal{T}(f_\Lambda)]= \tr[f_\Lambda]$ for all $f_\Lambda \in \BB_\Lambda$. 

In the case of a bipartite Hilbert space, when $\Lambda=AB$, there is a natural inclusion of $\A_{A}$ in $\A_{AB}$ by identifying $\A_{A} = \A_{A} \otimes \identity_{B} $. We will say that an operator $f_{AB}\in \A_{AB}$ has support on $A$ if it can be written as $f_A \otimes \identity_B$, for a certain operator $f_A \in \A_A$. Throughout the whole paper, we will make use of the modified partial trace, which we will denote, in a slight abuse of notation, by $\tr_A: f_{AB} \mapsto \identity_A \otimes f_B$, where $f_B = \tr_A[f_{AB}]$ is the usual partial trace. 

\subsection{Schatten p-norms and noncommutative $\LLL_p$ spaces}

In the following sections, we will make use of some results concerning Schatten p-norms. Let $\hs_\Lambda$ be a separable Hilbert space and $T \in \BB_\Lambda$. Given $p \in [1, \infty)$, the \textit{Schatten $p$-norm} of $T$ is given by:
\begin{center}
$\ds  \norm{T}_p:=(\tr[ \abs{T}^p] )^{1/p} $, 
\end{center}
where
\begin{center}
$ \ds \abs{T}:= \sqrt{T^* T} $, 
\end{center}
and $T^*$ is the dual of $T$ with respect to the Hilbert-Schmidt product. If $T$ is a positive semi-definite operator, we have $\norm{T}_p=(\tr[ T^p] )^{1/p} $. For every $p \in [1, \infty)$, it is a norm, and  $\norm{\cdot}_\infty:= \un{p \rightarrow \infty}{\text{lim}} \norm{\cdot}_p$ coincides with the operator norm. For $p=1$ it is indeed the trace norm. However, for $p<1$, this is no longer a norm, since it does not satisfy the triangle inequality.

In the following proposition, we collect some interesting properties that Schatten p-norms satisfy \cite{bhatia},  \cite{pisier-xu}.

\begin{prop}\label{prop:schatten}
 Let $\hs_\Lambda$ be a separable Hilbert space and $S, T \in \BB_\Lambda$. Let $p \in [1, \infty ]$, and consider the Schatten p-norm, extending the definition in $\infty$ by $\norm{\cdot}_\infty:= \un{p \rightarrow \infty}{\text{lim}} \norm{\cdot}_p$ and taking $p=\infty$ as the dual of $p=1$. The following properties hold:
\begin{enumerate}
\item \textbf{Monotonicity.} For $1\leq p \leq p' \leq \infty$, $\norm{T}_1 \geq \norm{T}_p \geq \norm{T}_{p'} \geq \norm{T}_\infty$.
\item \textbf{Duality.} For $q\in [1, \infty]$ such that $\ds \frac{1}{p} + \frac{1}{q}=1$, $\norm{S}_q = \text{sup} \qty{\abs{ \left\langle  S,T\right\rangle } \, | \, \norm{T}_p=1}  $,\\
where $ \left\langle  S,T\right\rangle = \tr[S^* T]$ is the Hilbert-Schmidt inner product.
\item \textbf{Unitary invariance.} $\norm{UTV}_p=\norm{T}_p$ for all unitaries $U, V$.
\item \textbf{Hölder's inequality.} For $q\in [1, \infty]$ such that $\ds \frac{1}{p} + \frac{1}{q}=1$, $\norm{ST}_1 \leq \norm{S}_p \norm{T}_q$.
\item \textbf{Sub-multiplicativity.} $\norm{ST}_p \leq \norm{S}_p \norm{T}_p$.
\end{enumerate}

\end{prop}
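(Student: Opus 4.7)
The plan is to reduce every statement to the commutative case via the singular value decomposition (SVD): write $T = U\Sigma V^*$ with $U,V$ unitary and $\Sigma = \operatorname{diag}(s_1(T),\ldots,s_n(T))$, so that $\lvert T\rvert = V\Sigma V^*$ and $\norm{T}_p = \left(\sum_i s_i(T)^p\right)^{1/p}$. Once the Schatten norm is identified with an $\ell^p$ norm of the singular value sequence, several properties become immediate.

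I would take them in the following order. First, \emph{unitary invariance} (3): from $(UTV)^*(UTV) = V^*T^*TV$ we see $\lvert UTV\rvert$ and $\lvert T\rvert$ have the same spectrum, so $\norm{UTV}_p = \norm{T}_p$. Second, \emph{monotonicity} (1): using SVD, $\norm{T}_p = \norm{(s_i(T))}_{\ell^p}$, so monotonicity reduces to the familiar nesting $\norm{x}_{\ell^1}\geq\norm{x}_{\ell^p}\geq\norm{x}_{\ell^{p'}}\geq\norm{x}_{\ell^\infty}$ for nonnegative sequences, which follows from the power mean inequality. Third, \emph{Hölder's inequality} (4): the key lemma is von Neumann's trace inequality $\abs{\tr[AB]} \leq \sum_i s_i(A) s_i(B)$; combined with scalar Hölder on the singular value sequences one gets $\abs{\tr[AB]}\leq \norm{A}_p\norm{B}_q$ for conjugate $p,q$. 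Applying this to $A=SW$ and $B=\lvert ST\rvert^{p-1}$ where $ST=W\lvert ST\rvert$ is the polar decomposition, and using unitary invariance, yields $\norm{ST}_1\leq\norm{S}_p\norm{T}_q$ (with minor bookkeeping to pass from the $\norm{\cdot}_1$ version to the operator version).

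Fourth, \emph{duality} (2): one direction is immediate from Hölder applied to $\abs{\langle S,T\rangle}=\abs{\tr[S^*T]}\leq\norm{S}_q\norm{T}_p$; for the reverse, take $T=W\lvert S\rvert^{q-1}/\norm{\lvert S\rvert^{q-1}}_p$ where $S=W\lvert S\rvert$, and compute directly that $\norm{T}_p=1$ while $\langle S,T\rangle = \norm{S}_q$, so the supremum is attained. Finally, \emph{sub-multiplicativity} (5) is a consequence of Hölder and monotonicity: by Hölder with exponents $(p,\infty)$, $\norm{ST}_p\leq\norm{S}_p\norm{T}_\infty$, and by monotonicity $\norm{T}_\infty\leq\norm{T}_p$.

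The main technical obstacle is Hölder's inequality, since the pointwise Young inequality used in the scalar proof has no direct noncommutative analogue; the substitute is von Neumann's trace inequality (proved via Ky Fan's majorization or the Schur–Horn type rearrangement for singular values), and a proper choice of dual operator through the polar decomposition. Everything else reduces cleanly to facts about nonnegative real sequences. Since these are entirely classical results in matrix analysis, I would reference \cite{bhatia} and \cite{pisier-xu} rather than reproduce the full derivations.
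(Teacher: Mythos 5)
The paper does not actually prove this proposition: it is presented as a collection of classical facts with citations to \cite{bhatia} and \cite{pisier-xu}, which is exactly what your sketch ultimately defers to as well. Your outline (SVD reduction to $\ell^p$ norms of singular values, von Neumann's trace inequality plus scalar H\"older for (4), the explicit dual optimizer $T=W\abs{S}^{q-1}/\norm{\abs{S}^{q-1}}_p$ for (2), and generalized H\"older with exponents $(p,\infty)$ plus monotonicity for (5)) is the standard correct route, modulo a small bookkeeping slip in (4) where the natural choice is to apply the trace inequality to $W^*S$ and $T$ with $ST=W\abs{ST}$ rather than to $SW$ and $\abs{ST}^{p-1}$.
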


In the definition of non-commutative $\LLL_p$, one can consider a similar norm with a state $\rho$ acting as a weight, i.e., one can use a $\rho$-weighted inner product to define a non-commutative $\LLL_p$ space. Non-commutative $\LLL_p$ spaces are equipped with a \textit{weighted inner product}, which, for a full rank state $\rho\in \SSS_\Lambda$, is given by 
\begin{center}
$\ds \left\langle f, g \right\rangle_\rho := \tr[\sqrt{\rho} f \sqrt{\rho} g]  \, $ for every $\,f,g \in \A_\Lambda$. 
\end{center}
Analogously, for every $1 \leq p  \leq \infty$, the \textit{non-commutative $\LLL_p$-norm} is given by
\begin{center}
$\ds \norm{f}_{p,\rho} := \tr[\abs{\rho^{1/2p} f \rho^{1/2p}}^p]^{1/p}  \, $ for every $\,f,g \in \A_\Lambda$. 
\end{center} 









\subsection{Von Neumann entropy and relative entropy}

Let $\hs_\Lambda$ be a finite dimensional Hilbert space, and $\rho_\Lambda \in \SSS_\Lambda$. The \textit{von Neumann entropy}, or just \textit{quantum  entropy}, of $\rho_\Lambda$ is given by:
\begin{equation}
S ( {\rho_\Lambda} ) := - \tr \left[ {\rho_\Lambda} \log \rho_\Lambda \right].
\end{equation}

This quantity is widely used in quantum statistical mechanics and is named after John von Neumann. In the following proposition we collect some properties of the von Neumann entropy that will be of use in further sections.

\begin{prop}[Properties of the von Neumann entropy, \cite{wehrl}, \cite{strongsub}]\label{prop:vNEprop} \text{\phantom{s}}\\
Let $\hs_{AB}$ be a bipartite finite dimensional Hilbert space, $\hs_{AB} = \hs_A \otimes \hs_B$. Let $\rho_{AB} \in \SSS_{AB}$. The following properties hold:
\begin{enumerate}
\item \textbf{Continuity.} The map $\rho_{AB} \mapsto S( {\rho_{AB}}) $ is continuous.
\item \textbf{Nullity.} $S(\rho_{AB})$ is zero if, and only if, $\rho_{AB}$ represents a pure state.
\item \textbf{Maximality.} $S(\rho_{AB}) $ is maximal, and equal to $\log N$, for $N=\text{dim}(\hs_{AB})$, when $\rho_{AB}$ is a maximally mixed state.
\item \textbf{Additivity.}  $S(\rho_A \otimes \rho_B) = S(\rho_A) + S(\rho_B)$.
\item \textbf{Subadditivity.} $S(\rho_{AB}) \leq S(\rho_A) + S(\rho_B)$.
\item \textbf{Strong subadditivity.} For any three systems $A$, $B$ and $C$,
\begin{center}
$\ds S(\rho_{ABC}) + S(\rho_B) \leq S(\rho_{AB}) + S(\rho_{BC})  $.
\end{center}
\end{enumerate}

\end{prop}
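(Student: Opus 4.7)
The plan is to reduce the first four statements to direct spectral computations and to handle the two subadditivity bounds through the non-negativity and monotonicity of the relative entropy. Writing $\rho_{AB} = \sum_i \lambda_i \ket{i}\bra{i}$ in spectral form gives $S(\rho_{AB}) = \sum_i h(\lambda_i)$ for $h(x)=-x\log x$. \textbf{Continuity} then follows from continuity of the spectrum as a function of a Hermitian matrix together with continuity of $h$ on $[0,1]$ under the convention $h(0)=0$; \textbf{Nullity} uses that $h\geq 0$ on $[0,1]$ with equality only at $0$ and $1$, which combined with $\sum_i \lambda_i=1$ forces exactly one eigenvalue to equal $1$; \textbf{Maximality} follows from the identity $\log N - S(\rho_{AB}) = \ent{\rho_{AB}}{\identity_{AB}/N}$ together with Klein's inequality $\ent{\rho}{\sigma}\geq 0$; and \textbf{Additivity} is an immediate consequence of $\log(\rho_A \otimes \rho_B) = \log\rho_A \otimes \identity_B + \identity_A \otimes \log\rho_B$ and linearity of the trace.

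For \textbf{Subadditivity}, I would rewrite the gap as the mutual information
\begin{equation*}
S(\rho_A) + S(\rho_B) - S(\rho_{AB}) = \ent{\rho_{AB}}{\rho_A \otimes \rho_B},
\end{equation*}
which is non-negative by Klein's inequality (itself a consequence of Jensen applied to the operator convex function $-\log$), with equality iff $\rho_{AB}=\rho_A\otimes\rho_B$.

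The main obstacle is \textbf{Strong subadditivity}. My plan is to recast it as monotonicity of the relative entropy under the partial trace $\tr_C$. A short computation using $\log(\identity_A \otimes \rho_{BC}) = \identity_A \otimes \log \rho_{BC}$ (up to an additive normalisation constant that cancels between the two sides) yields $\ent{\rho_{ABC}}{\identity_A \otimes \rho_{BC}} = S(\rho_{BC}) - S(\rho_{ABC})$ and similarly $\ent{\rho_{AB}}{\identity_A \otimes \rho_B} = S(\rho_B) - S(\rho_{AB})$, so the claim becomes
\begin{equation*}
\ent{\rho_{ABC}}{\identity_A \otimes \rho_{BC}} \geq \ent{\rho_{AB}}{\identity_A \otimes \rho_B},
\end{equation*}
which is precisely the monotonicity of $D$ under tracing out $C$. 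The substantive obstruction is therefore the proof of monotonicity of the relative entropy under CPTP maps; the canonical route is via Lieb's concavity theorem — that $(X,Y)\mapsto \tr[K^* X^s K Y^{1-s}]$ is jointly concave on positive operators for $s\in[0,1]$ — from which joint convexity of $D$ follows by differentiation at $s=0$, and monotonicity under partial trace then follows via a Stinespring-type averaging argument. Since Lieb's concavity theorem is itself non-trivial, I would invoke it rather than reprove it and cite \cite{strongsub} for the original Lieb--Ruskai argument.
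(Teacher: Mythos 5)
Your proposal is correct. The paper itself offers no proof of this proposition --- it is stated as a collection of standard facts with citations to Wehrl's review and to Lieb--Ruskai --- and your argument is precisely the standard derivation found in those references: spectral computations for the first four items, non-negativity of $\ent{\rho_{AB}}{\rho_A\otimes\rho_B}$ for subadditivity, and the reduction of strong subadditivity to monotonicity of the relative entropy under $\tr_C$ (via the identity $\ent{\rho_{ABC}}{\identity_A\otimes\rho_{BC}}=S(\rho_{BC})-S(\rho_{ABC})$), with Lieb's concavity theorem invoked as the one genuinely deep ingredient. The only point worth flagging is that you apply monotonicity with the unnormalised second argument $\identity_A\otimes\rho_{BC}$; this is harmless since the normalisation contributes $\log d_A$ to both sides, as you note, and is also consistent with how the paper itself uses $\sigma_{ABC}=\identity_A\otimes\rho_{BC}$ later in Subsection 3.3.
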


We introduce now  a measure of distinguishability of two states that will be strongly used throughout the whole manuscript, and mention some of its more fundamental properties. Let $\hs_\Lambda$ be a finite dimensional Hilbert space, and $\rho_\Lambda	, \sigma_\Lambda \in \SSS_\Lambda$. The \textit{quantum relative entropy} of $\rho_\Lambda$ and $\sigma_\Lambda$ is given by:
\begin{equation}
\ent {\rho_\Lambda} {\sigma_\Lambda} := \tr \left[ {\rho_\Lambda} (\log \rho_\Lambda - \log \sigma_\Lambda) \right].
\end{equation}

We can find in the next proposition some fundamental properties of the relative entropy. 

\begin{prop}[Properties of the relative entropy, \cite{wehrl}, \cite{strongsub}]\label{prop:REprop} \text{\phantom{s}}\\
Let $\hs_{AB}$ be a bipartite finite dimensional Hilbert space, $\hs_{AB} = \hs_A \otimes \hs_B$. Let $\rho_{AB}, \sigma_{AB} \in \SSS_{AB}$. The following properties hold:
\begin{enumerate}
\item \textbf{Continuity.} The map $\rho_{AB} \mapsto \ent {\rho_{AB}} {\sigma_{AB}} $ is continuous.
\item \textbf{Non-negativity.} $\ent {\rho_{AB}} {\sigma_{AB}} \geq 0$ and $\ent {\rho_{AB}} {\sigma_{AB}}=0 \Leftrightarrow {\rho_{AB}}=\sigma_{AB}$.
\item \textbf{Finiteness.} $\ent {\rho_{AB}} {\sigma_{AB}} < \infty $ if, and only if, $\text{supp}(\rho_{AB}) \subseteq \text{supp}(\sigma_{AB})$, where supp stands for support.
\item \textbf{Monotonicity.} $\ent {\rho_{AB}} {\sigma_{AB}} \geq \ent {T(\rho_{AB})} {T(\sigma_{AB})}$ for every quantum channel $T$.
\item \textbf{Additivity.} $\ent {\rho_A \otimes \rho_B} {\sigma_A \otimes \sigma_B}= \ent {\rho_A} {\sigma_A} + \ent {\rho_B} {\sigma_B}$.
\item \textbf{Superadditivity.} $\ent {\rho_{AB}} {\sigma_A \otimes \sigma_B}\geq \ent {\rho_A} {\sigma_A} + \ent {\rho_B} {\sigma_B}$.
\end{enumerate}
\end{prop}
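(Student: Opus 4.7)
The plan is to dispatch the six properties in order of depth: the three algebraic items (continuity, finiteness, additivity) follow from the finite-dimensional functional calculus, non-negativity is Klein's inequality, superadditivity is reduced to subadditivity of the von Neumann entropy via a one-line identity, and monotonicity is the only item requiring a nontrivial ingredient, namely the Lindblad--Uhlmann theorem.

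For \emph{continuity} with $\sigma_{AB}$ fixed, I would note that $\rho\mapsto\log\rho$ is continuous on strictly positive operators by the functional calculus, so $\rho_{AB}\mapsto\ent{\rho_{AB}}{\sigma_{AB}}$ is continuous on the set of states whose support is contained in $\text{supp}(\sigma_{AB})$; with the usual convention that the entropy is $+\infty$ outside, it is at least lower semicontinuous on all of $\SSS_{AB}$. For \emph{finiteness}, if there exists a unit vector $\ket{v}\in\text{supp}(\rho_{AB})$ with $\sigma_{AB}\ket{v}=0$, the contribution of $\log\sigma_{AB}$ on that direction is $-\infty$ and the trace diverges, while on the common support the functional calculus yields a finite value. \emph{Additivity} is a direct calculation from $\log(\rho_A\otimes\rho_B)=\log\rho_A\otimes\identity_B+\identity_A\otimes\log\rho_B$ together with $\tr_B[\rho_A\otimes\rho_B]=\rho_A$.

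For \emph{non-negativity}, I would invoke Klein's inequality: the scalar inequality $t\log t - t\log s\geq t-s$ for $t,s\geq 0$, applied in a joint diagonalisation of $\rho_{AB}$ and $\sigma_{AB}$ followed by the standard transition-matrix manipulation, gives $\ent{\rho_{AB}}{\sigma_{AB}}\geq\tr[\rho_{AB}-\sigma_{AB}]=0$, with equality in the scalar inequality forcing $\rho_{AB}=\sigma_{AB}$. For \emph{superadditivity}, I would exploit the identity
\begin{equation*}
\ent{\rho_{AB}}{\sigma_A\otimes\sigma_B}-\ent{\rho_A}{\sigma_A}-\ent{\rho_B}{\sigma_B}=S(\rho_A)+S(\rho_B)-S(\rho_{AB}),
\end{equation*}
obtained by expanding the three logarithms of tensor products and collecting terms via the partial trace, and then invoke subadditivity of the von Neumann entropy from Proposition \ref{prop:vNEprop} to conclude that the right-hand side is non-negative.

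The main obstacle is \emph{monotonicity} under a general quantum channel, the celebrated data-processing inequality of Lindblad and Uhlmann. The cleanest route is to dilate $T$ via Stinespring as $T(\rho)=\tr_E[U\rho U^*]$, with $U$ an isometry into an enlarged system including an environment $E$; unitary invariance of the relative entropy then reduces the claim to monotonicity under the partial trace $\tr_E$. This residual inequality is in turn equivalent, via a standard purification argument that rewrites the relative entropies in terms of conditional entropies, to strong subadditivity of the von Neumann entropy as recorded in Proposition \ref{prop:vNEprop}. Thus the deep input is precisely strong subadditivity, which here is quoted rather than reproved.
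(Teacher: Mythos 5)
The paper does not prove this proposition at all: it is quoted as a collection of known facts with citations to \cite{wehrl} and \cite{strongsub}, and the standard arguments you sketch (Klein's inequality for non-negativity, the tensor-product logarithm identity for additivity, the reduction of superadditivity to subadditivity of the von Neumann entropy, and Stinespring dilation plus strong subadditivity for monotonicity) are exactly the ones found in those references. Your proposal is correct — including the careful caveat that continuity, as literally stated, only holds on the set of states whose support lies in that of $\sigma_{AB}$ (which is consistent with the paper's standing full-rank assumption) — so there is nothing to object to.
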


Using some properties of the previous two propositions, one can prove the following well-known result, which will be of use in the following sections. We include a proof for completeness.

\begin{prop}\label{prop:sigmaprod}
 Let $\hs_{ABC}=\hs_A \otimes \hs_B \otimes \hs_C $ and  $\rho_{ABC} \in \SSS_{ABC}$. Then,
\begin{center}
$\ds I_\rho  (A:BC)  \geq I_\rho (A:B) $,
\end{center}
where $I_\rho(A:B)=\ent {\rho_{AB}} {\rho_A \otimes \rho_B}$ is the mutual information. 
\end{prop}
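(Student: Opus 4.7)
The plan is to obtain the inequality as an immediate consequence of the monotonicity of the relative entropy under quantum channels (item 4 of Proposition \ref{prop:REprop}), applied to the partial trace over the $C$ system.

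More concretely, I would first recall that the partial trace $\mathrm{tr}_C \colon \BB_{ABC} \to \BB_{AB}$ is a completely positive and trace preserving map, hence a quantum channel. Then I would apply the monotonicity of $D(\cdot \| \cdot)$ under this channel to the pair of states $\rho_{ABC}$ and $\rho_A \otimes \rho_{BC}$, yielding
\begin{equation*}
D(\rho_{ABC} \| \rho_A \otimes \rho_{BC}) \;\geq\; D\bigl(\mathrm{tr}_C[\rho_{ABC}] \,\big\|\, \mathrm{tr}_C[\rho_A \otimes \rho_{BC}]\bigr).
\end{equation*}
The final step is to simplify both partial traces: $\mathrm{tr}_C[\rho_{ABC}] = \rho_{AB}$ by definition of the reduced state, and $\mathrm{tr}_C[\rho_A \otimes \rho_{BC}] = \rho_A \otimes \mathrm{tr}_C[\rho_{BC}] = \rho_A \otimes \rho_B$ since the partial trace factors through tensor products and since $\mathrm{tr}_C[\rho_{BC}]$ agrees with the marginal of $\rho_{ABC}$ on $B$. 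Combining these identifications with the definition of mutual information gives exactly $I_\rho(A:BC) \geq I_\rho(A:B)$.

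There is essentially no obstacle here: the argument is a textbook application of the data processing inequality, and the only thing one has to be mildly careful about is that the partial trace of the product state $\rho_A \otimes \rho_{BC}$ produces the correct marginal $\rho_A \otimes \rho_B$ appearing in the definition of $I_\rho(A:B)$. (A small notational caveat: the paper uses $\mathrm{tr}_A$ for the \emph{modified} partial trace $f_{AB}\mapsto \identity_A\otimes f_B$, but the quantum channel underlying monotonicity is the ordinary partial trace; once this is kept in mind, no difficulty arises.)
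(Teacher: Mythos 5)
Your proof is correct, but it takes a genuinely different route from the paper. You invoke the monotonicity of the relative entropy under the quantum channel $\tr_C$ (property (4) of Proposition \ref{prop:REprop}) applied to the pair $(\rho_{ABC},\,\rho_A\otimes\rho_{BC})$, and the identification $\tr_C[\rho_A\otimes\rho_{BC}]=\rho_A\otimes\rho_B$ immediately gives $I_\rho(A:BC)\geq I_\rho(A:B)$. The paper instead expands the difference $I_\rho(A:BC)-I_\rho(A:B)$ explicitly, cancels the logarithms of the tensor products, and arrives at the identity
\[
I_\rho(A:BC)-I_\rho(A:B) \;=\; -S(\rho_{ABC})+S(\rho_{BC})+S(\rho_{AB})-S(\rho_B),
\]
whose non-negativity is exactly strong subadditivity of the von Neumann entropy (property (6) of Proposition \ref{prop:vNEprop}). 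The two facts you each rely on are classically known to be equivalent (Lindblad and Uhlmann derived monotonicity from SSA, and conversely), so neither argument is logically ``cheaper''; your version is shorter and requires no computation, while the paper's computation has the side benefit of exhibiting the conditional mutual information $I_\rho(A:C\,|\,B)$ as the SSA combination of entropies, an identity that is reused later in the manuscript (e.g.\ in the comparison of the two conditional relative entropies in Subsection \ref{subsec:comparison}). Your notational caveat about the modified versus ordinary partial trace is apt and does not affect the argument.
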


\begin{proof}

We have
\begin{eqnarray*}
 I_\rho  (A:BC)  - I_\rho (A:B) &=& \tr \left[ \rho_{ABC}(\log \rho_{ABC}  - \log \rho_{A}\otimes \rho_{BC})\right] \\
 & &  -  \tr[\rho_{AB}(\log \rho_{AB} - \log \rho_{A}\otimes \rho_{B})] \\
 &=& \tr[\rho_{ABC}(\log \rho_{ABC} - \log \rho_{A}\otimes \rho_{BC} -\log \rho_{AB} + \log \rho_{A}\otimes \rho_{B})] \\
 &=& \tr[\rho_{ABC}(\log \rho_{ABC} - \log \rho_{BC} -\log \rho_{AB} + \log \rho_{B})] \\
 &=& - S[\rho_{ABC}] + S[\rho_{BC}] + S[\rho_{AB}] - S[\rho_B] \geq 0,
\end{eqnarray*}
where we are using the property of strong subadditivity of Proposition \ref{prop:vNEprop} in the last inequality \cite{strongsub}. We are also using the fact that the logarithm of a tensor product is the sum of logarithms (tensored with the identity) and the following property of the trace: If $A \subseteq \Lambda$, $f_A \in \A_A$ and $g_\Lambda \in \SSS_\Lambda $, then
\begin{center}
$\ds \tr[f_A g_\Lambda] = \tr[f_A g_A]  $.
\end{center}
\vspace{-0.4cm}
\end{proof}

The difference between the two terms in the statement of this proposition is called \textit{conditional mutual information}. This result may be seen, hence, as the positivity of this quantity.

\subsection{Conditional expectation}\label{subsec-cond-exp}

In this subsection, we introduce a set of maps called \textit{conditional expectations}, which we denote by $\E$ (Section 3 of \cite{kast-brand}, \cite{libropetz}).

\begin{definition}
Let $\hs_{AB} = \hs_A \otimes \hs_B$ be a bipartite Hilbert space, and $\sigma_{AB}$ a full rank state on $\hs_{AB} $. We define a \emph{conditional expectation of $\sigma_{AB}$ on $\hs_{B}$} by a map $\E_A: \hs_{AB}  \to \hs_{B}$ that satisfies the following:
\begin{enumerate}
\item \textbf{Complete positivity.} $\E_A$ is completely positive and unital. 
\item \textbf{Consistency.} For every $f_{AB} \in \A_{AB}$,
\begin{center}
  $\tr[\sigma_{AB} \E_A(f_{AB})]=\tr[\sigma_{AB} f_{AB}]$. 
\end{center}
\item \textbf{Reversibility.} For every $f_{AB},g_{AB} \in \A_{AB}$,
\begin{center}
 $\langle \E_A(f_{AB}) , g_{AB} \rangle_{\sigma_{AB}} = \langle f_{AB}, \E_A(g_{AB}) \rangle_{\sigma_{AB}} $. 
\end{center}
\item \textbf{Monotonicity.}  For every $f_{AB} \in \A_{AB}$  and $n \in \N$, 
\begin{center}
$ \langle \E_A^n(f_{AB}), f_{AB} \rangle_{\sigma_{AB}} \geq \langle
  \E_A^{n+1}(f_{AB}), f_{AB} \rangle_{\sigma_{AB}}$.
\end{center} 
\end{enumerate}
\end{definition}

\begin{remark}
From the properties in the definition of conditional expectation, we have:
\begin{itemize}
\item Property (2) yields the fact that $\E_A^*(\sigma_{AB}) =  \sigma_{AB}$, where the dual is taken with respect to the Hilbert-Schmidt  scalar product.
  \item From property (3) we can deduce that  $\E_A$ is self-adjoint in $L_2(\sigma_{AB})$.
\end{itemize}

\end{remark}

We consider now a specific example of conditional expectation. Let $\hs_{AB}= \hs_A \otimes \hs_B$ and $\sigma_{AB} \in \SSS_{AB}$ a full-rank state. We define the \textit{minimal conditional expectation} of $\rho_{AB}\in \SSS_{AB}$ with respect to $\sigma_{AB}$ on $A$ by
\begin{equation}
\E^\sigma_A(\rho_{AB}):=\tr_A[\eta_A^\sigma \, \rho_{AB} \, \eta_A^{\sigma \dagger}],
\end{equation}
where $\eta_A^\sigma:=(\tr_A[\sigma_{AB}])^{-1/2} \sigma_{AB}^{1/2}$. This map has also been previously called \textit{coarse graining map} and \textit{block spin flip}, among other names \cite{petzcorse, majewski}. Recalling that $\tr_A[\rho_{AB}]=\rho_B$, we can write
\begin{center}
$\ds \E^\sigma_A(\rho_{AB})= \sigma_B^{-1/2} \, \tr_A[\sigma^{1/2}_{AB} \, \rho_{AB} \, \sigma^{1/2}_{AB} ] \, \sigma_B^{-1/2} $.
\end{center}

If we recall now that the partial trace is tensored with the identity in $A$, we can see that $\E^\sigma_A(\rho)$ is a Hermitian operator and, indeed, $\E^\sigma_A$ is a conditional expectation with respect to $\sigma_{AB}$ (\cite[Proposition 10]{kast-brand}). Notice that $(\E^\sigma_A)^*$, the adjoint of $\E_A^\sigma$ with respect to the Hilbert-Schmidt product, which we hereafter denote by $\E^*_A$ to simplify the notation (since we are always considering conditional expectations with respect to $\sigma_{AB}$), is given by
\begin{equation}
\E^*_A(\rho_{AB}):=\sigma_{AB}^{1/2} \, \sigma_{B}^{-1/2} \, \rho _{B} \, \sigma_{B}^{-1/2} \, \sigma_{AB}^{1/2}.
\end{equation}

This map coincides with the Petz recovery map \cite{petzrecov} for the partial trace $\tr_A$ and is a quantum channel. In particular, for every density matrix $\rho_{AB} \in S_{AB}$, $\E^*_A(\rho_{AB})$ is also a density matrix.  

This is the conditional expectation we are going to consider hereafter. One should remember that the subscript is used in the same sense as in the partial trace, i.e., denoting the subsystem which is being removed, not the one which is being kept.

\subsection{Classical case}

In \cite{clasico}, the authors consider a spin system in a finite lattice, whose spins take values in the set of positive integers, and show that, for a certain class of dynamics of this system, under the assumption of a mixing condition in the Gibbs measure associated to this dynamics, a modified log-Sobolev inequality is satisfied. For that, they first need to prove a result of quasi-factorization of the entropy of a function in terms of a conditional entropy defined in sub-$\sigma$-algebras of the initial $\sigma$-algebra. 

Consider a probability space $(\Omega, \mathcal{F}, \mu)$ and define, for every $f>0$, the \textit{entropy} of $f$ by
\begin{center}
$\text{Ent}_\mu (f):= \mu(f \log f ) - \mu(f) \log \mu(f)$.  
\end{center}

Given a sub-$\sigma$-algebra $\mathcal{G}\subseteq \mathcal{F}$, we define the \textit{conditional entropy} of $f$ in $\mathcal{G}$ by
\begin{center}
$\text{Ent}_\mu (f \mid \mathcal{G}):= \mu(f \log f \mid \mathcal{G}) - \mu(f\mid \mathcal{G}) \log \mu(f\mid \mathcal{G})$,
\end{center}
where $\mu(f \mid \mathcal{G})$ is given by
\begin{center}
$\ds  \int_G \mu(f \mid \mathcal{G}) \, d \mu = \int_G f d \mu  \, \, $ for each $G \in \mathcal{G}$.
\end{center}

With these definitions, they prove the following result of quasi-factorization of the entropy.

\begin{lemma}[Quasi-factorization. Lemmas 5.1 and 5.2 of \cite{clasico}]\label{lemma:clasico}
Let $(\Omega, \mathcal{F}, \mu)$ be a probability space, and $\mathcal{F}_1, \mathcal{F}_2$ sub-$\sigma$-algebras of $\mathcal{F}$. Suppose that there exists a probability measure $\bar{\mu}$ that makes $\mathcal{F}_1$ and $ \mathcal{F}_2$ independent, $\mu \ll \bar{\mu}$ and $\mu\mid\mathcal{F}_i = \bar{\mu}\mid\mathcal{F}_i$ for $i=1,2$. Then, for every $f \geq 0$ such that $f \log f \in L^1(\mu)$ and $\mu(f)=1$,
\begin{center}
$\ds \text{Ent}_\mu(f) \leq \frac{1}{1-4\norm{h-1}_\infty} \, \mu  \left[   \text{Ent}_\mu(f \mid \mathcal{F}_1) +  \text{Ent}_\mu(f \mid \mathcal{F}_2)\right]$,
\end{center}
where $\ds h= \frac{d\mu}{d \bar{\mu}}$ is the Radon-Nikodym derivative of $\mu$ with respect to $\bar{\mu}$.
\end{lemma}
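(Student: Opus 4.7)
The plan is to reduce to the case $\mu = \bar{\mu}$, where $\mathcal{F}_1$ and $\mathcal{F}_2$ are independent, and then transfer the resulting inequality to the general case by a perturbation argument driven by the smallness of $\norm{h-1}_\infty$. My first step would be to invoke the chain rule of entropy,
\begin{equation*}
\text{Ent}_\mu(f) = \mu[\text{Ent}_\mu(f \mid \mathcal{F}_i)] + \text{Ent}_\mu(\mu(f\mid \mathcal{F}_i)), \qquad i = 1,2,
\end{equation*}
sum the two identities, and write $f_i := \mu(f\mid \mathcal{F}_i)$ to obtain
\begin{equation*}
2\,\text{Ent}_\mu(f) - \mu[\text{Ent}_\mu(f\mid\mathcal{F}_1) + \text{Ent}_\mu(f\mid\mathcal{F}_2)] = \text{Ent}_\mu(f_1) + \text{Ent}_\mu(f_2).
\end{equation*}
The desired inequality is then equivalent to the one-sided estimate $\text{Ent}_\mu(f_1) + \text{Ent}_\mu(f_2) \leq (1 + 4\norm{h-1}_\infty)\,\text{Ent}_\mu(f)$.

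In the exactly-product case $\mu = \bar{\mu}$ this reduces to the classical super-additivity of the relative entropy on a product space (the classical analogue of property (6) in Proposition \ref{prop:REprop}), which yields directly $\text{Ent}_{\bar\mu}(f_1) + \text{Ent}_{\bar\mu}(f_2) \leq \text{Ent}_{\bar\mu}(f)$. To extend this to general $\mu$, I would exploit the marginal hypothesis $\mu\mid\mathcal{F}_i = \bar{\mu}\mid\mathcal{F}_i$ in two complementary ways: first, it forces $\bar{\mu}(h\mid\mathcal{F}_i)=1$, and hence $\text{Ent}_\mu(g) = \text{Ent}_{\bar{\mu}}(g)$ for any $\mathcal{F}_i$-measurable density $g$, so in particular $\text{Ent}_\mu(f_i) = \text{Ent}_{\bar{\mu}}(f_i)$; second, a short calculation with the tower property identifies $f_i$ with $\bar{\mu}(fh\mid\mathcal{F}_i)$, which can then be compared with $\bar{\mu}(f\mid\mathcal{F}_i)$ at cost $O(\norm{h-1}_\infty)$.

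The main obstacle, and where the constant $4\norm{h-1}_\infty$ ultimately comes from, is the precise comparison between $\mu$- and $\bar{\mu}$-entropies. The standard tool is the variational representation $\text{Ent}_\nu(g) = \sup\{\nu(g\phi) : \nu(e^\phi) \leq 1\}$ together with the elementary bound $|\mu(g\phi) - \bar{\mu}(g\phi)| \leq \norm{h-1}_\infty\,\bar{\mu}(|g\phi|)$, which yield two-sided control of $|\text{Ent}_\mu(g) - \text{Ent}_{\bar{\mu}}(g)|$ by $\norm{h-1}_\infty$ times an entropy. Applying this to $f$, $f_1$, and $f_2$ and combining with the product-case inequality above should, after careful bookkeeping, produce the desired factor $(1+4\norm{h-1}_\infty)$ in front of $\text{Ent}_\mu(f)$; rearranging then yields the prefactor $1/(1-4\norm{h-1}_\infty)$. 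The delicate part is collecting the several $O(\norm{h-1}_\infty)$ error terms coherently so that the final constant is exactly $4$ and not some larger multiple.
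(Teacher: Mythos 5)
First, note that the paper does not prove this lemma at all: it is quoted verbatim from \cite{clasico} (Lemmas 5.1 and 5.2 there), so your attempt can only be measured against the source. Your skeleton is the right one and matches it in outline: the chain rule $\text{Ent}_\mu(f) = \mu[\text{Ent}_\mu(f\mid\mathcal{F}_i)] + \text{Ent}_\mu(f_i)$ is correct, the reduction to $\text{Ent}_\mu(f_1)+\text{Ent}_\mu(f_2) \leq (1+4\norm{h-1}_\infty)\text{Ent}_\mu(f)$ is the right equivalent form, the identities $\bar{\mu}(h\mid\mathcal{F}_i)=1$, $\text{Ent}_\mu(f_i)=\text{Ent}_{\bar\mu}(f_i)$ and $f_i=\bar{\mu}(fh\mid\mathcal{F}_i)$ all check out, and superadditivity under $\bar\mu$ handles the case $\mu=\bar\mu$.

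The gap is in the perturbation step, and it is not mere bookkeeping. The tools you name produce error terms of the wrong order. If you apply superadditivity to the $\bar\mu$-density $fh$ you get $\text{Ent}_\mu(f_1)+\text{Ent}_\mu(f_2)\le \text{Ent}_{\bar\mu}(fh)=\text{Ent}_\mu(f)+\mu(f\log h)$, and $\mu(f\log h)$ is \emph{not} $O(\norm{h-1}_\infty\,\text{Ent}_\mu(f))$: for $f$ close to $1$ it tends to $\bar\mu(h\log h)>0$, an additive $O(\norm{h-1}_\infty^2)$ term that cannot be absorbed into a multiplicative prefactor. Likewise, your variational comparison bounds $|\text{Ent}_\mu(g)-\text{Ent}_{\bar\mu}(g)|$ by $\norm{h-1}_\infty\,\bar\mu(|g\phi|)$, and $\bar\mu(|g\phi|)$ is controlled by $\sqrt{\text{Ent}}$ (via Pinsker) or by an additive constant (from the negative part of $g\log g$), never by $\text{Ent}$ itself; since entropy is quadratic near the constant density, any single $L^1$- or $L^\infty$-perturbation of size $\varepsilon$ moves it by $O(\varepsilon\sqrt{\text{Ent}})+O(\varepsilon^2)$, which dominates $\varepsilon\,\text{Ent}$ when $\text{Ent}$ is small. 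The missing idea in \cite{clasico} is structural: write the total defect as $\mu\bigl(f\log\tfrac{f_1f_2}{f}\bigr)\le\log\mu(f_1f_2)$ by Jensen, then use $\bar\mu(h-1)=\bar\mu((h-1)(f_i-1))=0$ and the $\bar\mu$-independence of $\mathcal{F}_1,\mathcal{F}_2$ to reduce $\mu(f_1f_2)-1$ to the \emph{trilinear} centered expression $\bar\mu\bigl((h-1)(f_1-1)(f_2-1)\bigr)$, which factorizes as $\norm{h-1}_\infty\,\bar\mu(|f_1-1|)\,\bar\mu(|f_2-1|)$ and is then $\le\norm{h-1}_\infty\cdot 2\sqrt{\text{Ent}_\mu(f_1)\text{Ent}_\mu(f_2)}\le 2\norm{h-1}_\infty\,\text{Ent}_\mu(f)$ by Pinsker applied to \emph{each} factor. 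It is this product of two independent deviations, each of size $\sqrt{\text{Ent}}$, that turns the error into a genuinely multiplicative one; without it your several $O(\norm{h-1}_\infty)$ terms cannot be collected into the stated prefactor.
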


Some quantum analogues for this result will be presented in the following sections. However, first, we need to introduce the notion of quantum conditional relative entropy.

\section{Conditional relative entropy}\label{sec:cre}

In this section, we introduce an axiomatic definition of \textit{conditional relative entropy}. We aim to present a concept that, given the value of the distinguishability between two states in a certain subsystem, quantifies their distinguishability in the whole space. In other words, for two states in a bipartite Hilbert space $\hs_A \otimes \hs_B$, the conditional relative entropy in $A$ should provide the effect of the relative entropy of those states in the global space conditioned to the value of their relative entropy in $B$, extending the classical definition of \textit{conditional entropy} of a function.  

Providing axiomatic definitions or presenting axiomatic characterizations for information theory measures is a natural problem in quantum information theory. In particular, one can find in the literature several characterizations for the relative entropy, or related quantities (see \cite{axchar5}, \cite{axchar1}, \cite{axchar2}, \cite{axchar3}, \cite{axchar4}, among others). However, for our definition,  we rely on the recent work \cite{axcharRE}, where the authors present an axiomatic characterization of the relative entropy, using strongly a previous result of Matsumoto \cite{matsumoto}. Indeed, they show that the properties of continuity (with respect to the first state), monotonicity, additivity and superadditivity are sufficient for a function to be the relative entropy. They base their proof in two facts: The first one is that the properties of continuity, additivity and superadditivity imply what they call lower asymptotic semicontinuity, and the other one, the aforementioned result \cite{matsumoto}, where the author proved that any function satisfying monotonicity, additivity and lower asymptotic semicontinuity is the relative entropy itself.

We present the concept of quantum relative entropy as a function of two states verifying certain properties. The property of monotonicity is not expected to hold in such concept, since the effect of $A$ and $B$ is not considered equally in the conditional relative entropy in $A$, so an arbitrary quantum channel (more specifically, the partial trace in $B$) is not expected to decrease this quantity. For the same reason, additivity and superadditivity are neither expected to be true; however, for them, a property with the same spirit can be considered.

\begin{defi}\label{def:CRE}
Let $\hs_{AB}=\hs_A \otimes \hs_B$. We define a \textit{conditional relative entropy} in $A$ as a function
\begin{center}
$\ds D_A(\cdot || \cdot) : \SSS_{AB} \times \SSS_{AB} \rightarrow \R_0^+ $
\end{center} 
verifying the following properties  for every $\rho_{AB}, \sigma_{AB} \in \SSS_{AB}$:
\begin{enumerate}
\item  \textbf{Continuity: } The map $\rho_{AB} \mapsto  \entA A {\rho_{AB}} {\sigma_{AB}}$ is continuous.
\item \textbf{Non-negativity: } $ \entA A {\rho_{AB}} {\sigma_{AB}} \geq 0 $  and 
\begin{enumerate}
\item[(2.1)] $\entA A {\rho_{AB}} {\sigma_{AB}} $=0 if, and only if, $\rho_{AB}= \E^*_A( \rho_{AB} )$.
\end{enumerate}
\item \textbf{Semi-superadditivity: }$ \entA A {\rho_{AB}} {\sigma_{A} \otimes \sigma_B} \geq \ent {\rho_{A}} {\sigma_{A}} $ and
\begin{enumerate}
\item[(3.1)] \textbf{Semi-additivity:} if $\rho_{AB}=\rho_A \otimes \rho_B$,  $ \entA A {\rho_{A}\otimes \rho_B} {\sigma_{A} \otimes \sigma_B} = \ent {\rho_{A}} {\sigma_{A}}   $.
\end{enumerate}
\item \textbf{Semi-monotonicity:} For every quantum channel $\mathcal{T}$, the following inequality holds:
\begin{align*}
 D_A(\mathcal{T}(\rho_{AB})|| \mathcal{T}(\sigma_{AB})) +&  D_B((\tr_A \circ \mathcal{T}) (\rho_{AB}) || (\tr_A \circ \mathcal{T}) (\sigma_{AB}) ) \\
& \leq D_A(\rho_{AB}|| \sigma_{AB}) + D_B( \tr_A(\rho_{AB}) || \tr_A(\sigma_{AB}) ),
\end{align*}
where $\ds D_B({\rho_{AB}} || {\sigma_{AB}} ) $ is the conditional relative entropy in $B$.
\end{enumerate}

\end{defi}

\begin{remark}\label{remark:recoverabilityRE}
From property (3.1) we can deduce that if we consider states with support in $A$, we recover the usual definition of relative entropy, i.e.,
\begin{center}
$\ds   \entA A {\rho_{A}\otimes \identity_B} {\sigma_{A} \otimes \identity_B} = \ent {\rho_{A}} {\sigma_{A}}   $ 
\end{center}

In general, if $\mathcal{T}$ is a quantum channel, notice that the following holds,
\begin{center}
$\ds   \entA A { (\tr_B \circ \mathcal{T})( \rho_{AB} ) } { (\tr_B \circ \mathcal{T}) (\sigma_{AB} )} =  \ent { (\tr_B \circ \mathcal{T})( \rho_{AB} ) } { (\tr_B \circ \mathcal{T}) (\sigma_{AB} )}    $.
\end{center}
since $(\tr_B \circ \mathcal{T})( \rho_{AB} )$ and $ (\tr_B \circ \mathcal{T}) (\sigma_{AB} )$ have support in $A$.
\end{remark}

\begin{remark}
Notice that, by property (2.1), if $\rho_{AB}=\sigma_{AB}$, in particular $\rho_{AB}= \E^*_A( \rho_{AB} )$ and, thus, $\entA A {\rho_{AB}} {\sigma_{AB}} $=0, but the converse implication is false in general (differently from the case of the relative entropy). Both implications cannot hold, since that would be incompatible with property (3.1).
\end{remark}

Properties (1), (2), (3) and (3.1) come from the necessity that the conditional relative entropy extends the relative entropy. The names of properties (3) and (3.1) come from the fact that $D^+_{A,B}$ actually satisfies the properties of additivity and superadditivity.

Let us define

\begin{center}
$D_{A,B}^+(\rho_{AB} || \sigma_{AB}):= D_A (\rho_{AB} || \sigma_{AB}) + D_B (\rho_{AB} || \sigma_{AB})$.
\end{center} 

Then, we can prove a couple of properties for $D^+_{A,B}$ that yield the relation of this concept with the usual relative entropy.

\begin{prop}\label{prop:D+}
Let $\hs_{AB}=\hs_A \otimes \hs_B$ and $ {\rho_{AB}}, {\sigma_{AB}} \in \SSS_{AB}$. $D^+_{A,B}$ satisfies the following properties:
\begin{enumerate}
\item \textbf{Additivity:} $D^+_{A,B}({\rho_{A}\otimes \rho_B}||  {\sigma_{A} \otimes \sigma_B}  )=  \ent {\rho_{A}} {\sigma_{A}} +  \ent {\rho_{B}} {\sigma_{B}} $.
\item \textbf{Superadditivity:} $D^+_{A,B}({\rho_{AB}}||  {\sigma_{A} \otimes \sigma_B}  ) \geq \ent {\rho_{A}} {\sigma_{A}} +  \ent {\rho_{B}} {\sigma_{B}} $.
\end{enumerate}

\end{prop}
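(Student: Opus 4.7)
The plan is to derive both statements as essentially immediate consequences of the axioms imposed in Definition \ref{def:CRE}, applied symmetrically in $A$ and $B$. Since $D^+_{A,B}$ is defined as the sum $D_A + D_B$, each property for $D^+_{A,B}$ should follow by adding the corresponding statement for $D_A$ to its twin for $D_B$, with the axiom doing all the real work.

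For the additivity claim, I would apply property (3.1) (semi-additivity) of the conditional relative entropy twice. Substituting $\rho_{AB} = \rho_A \otimes \rho_B$ and $\sigma_{AB} = \sigma_A \otimes \sigma_B$ directly yields
\begin{equation*}
 D_A(\rho_A \otimes \rho_B \,\|\, \sigma_A \otimes \sigma_B) = \ent{\rho_A}{\sigma_A}, \qquad D_B(\rho_A \otimes \rho_B \,\|\, \sigma_A \otimes \sigma_B) = \ent{\rho_B}{\sigma_B},
\end{equation*}
where the second identity uses property (3.1) with the roles of $A$ and $B$ interchanged (which is legitimate since the axioms in Definition \ref{def:CRE} are symmetric under relabeling the subsystems). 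Summing these two equations gives exactly the desired additivity identity.

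For the superadditivity claim, the argument is analogous but uses the inequality in property (3) rather than its equality refinement. Applied to $\sigma_{AB} = \sigma_A \otimes \sigma_B$ we obtain
\begin{equation*}
 D_A(\rho_{AB} \,\|\, \sigma_A \otimes \sigma_B) \geq \ent{\rho_A}{\sigma_A},\qquad  D_B(\rho_{AB} \,\|\, \sigma_A \otimes \sigma_B) \geq \ent{\rho_B}{\sigma_B},
\end{equation*}
and adding these two inequalities produces the stated bound on $D^+_{A,B}(\rho_{AB} \,\|\, \sigma_A \otimes \sigma_B)$.

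There is really no obstacle here: both assertions are designed into the axioms, and the proposition is essentially a bookkeeping statement that records why the symbol $D^+_{A,B}$ deserves the names \emph{additivity} and \emph{superadditivity} in the classical sense of Proposition \ref{prop:REprop}. The only subtlety worth flagging explicitly in the write-up is that Definition \ref{def:CRE} is phrased in terms of $D_A$, so one must observe that the same four axioms, read with $A$ and $B$ swapped, apply equally to $D_B$; this symmetry is what allows both of the two inequalities (resp.\ equalities) above to be invoked in parallel.
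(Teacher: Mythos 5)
Your proposal is correct and follows exactly the same route as the paper: additivity from two applications of axiom (3.1) (for $D_A$ and, by relabeling, for $D_B$) and superadditivity from two applications of axiom (3), summed. The paper's proof is a one-line citation of those axioms, so your version is simply a more explicit write-up of the identical argument.
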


\begin{proof}
\begin{itemize}
\item (1) follows from property (3.1) in the definition of conditional relative entropy.
\item (2) is obtained from property (3) in the definition of conditional relative entropy.
\end{itemize}
\end{proof}

Notice from the previous definition that the main difference between the relative entropy and $D^+_{A,B}$ lies in the fact that the latter lacks the property of monotonicity. Indeed, as mentioned above, since $D^+_{A,B}$ verifies the properties of continuity, additivity and superadditivity, we know that it cannot verify the property of monotonicity (i.e., data processing for every quantum channel), as it would imply that it is a multiple of the relative entropy \cite{axcharRE}. This motivates the appearance of the property of ``semi-monotonicity".

To justify the name for  that property, let us comment a bit on every term of the inequality. Comparing the first term of both sides of the inequality, we can see a data-processing inequality for $D_A$. We know that such inequality cannot be true in general, since,  for the conditional relative entropy in $A$, a quantum channel with support in $B$ is not expected to decrease this quantity. This fact justifies the presence of the second term on both sides of the inequality, to compensate the non-decreasing effect of the  ``$B$-part" of a channel in the conditional relative entropy in $A$. Hence, we add the conditional relative entropy of this ``$B$-part" of the channel in $B$, where we know that the decreasing effect actually holds.  

In the following subsection we will show that there exists a unique  conditional relative entropy.

\subsection{A formula for the conditional relative entropy}

\begin{theorem}\label{thm:formula}
Let $D_A(\cdot || \cdot)$ be a conditional relative entropy, according to Definition \ref{def:CRE}. Then, $D_A(\cdot || \cdot)$ is explicitly given by  
\begin{center}
$\entA A {\rho_{AB}} {\sigma_{AB}} = \ent {\rho_{AB}} {\sigma_{AB}} - \ent {\rho_{B}} {\sigma_{B}}$,
\end{center}
for every $\rho_{AB}, \sigma_{AB} \in \SSS_{AB}$.
\end{theorem}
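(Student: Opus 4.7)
The plan is to establish uniqueness by invoking the axiomatic characterization of the relative entropy proved in \cite{axcharRE} and summarized in the paragraph preceding the definition of conditional relative entropy. Specifically, I define
\[
F(\rho_{AB}\,\|\,\sigma_{AB}) := D_A(\rho_{AB}\,\|\,\sigma_{AB}) + \ent{\rho_B}{\sigma_B}
\]
and aim to show that $F$ coincides with the relative entropy $D$, which, after rearranging, yields exactly the claimed formula. Since \cite{axcharRE} guarantees that a function of two states is the relative entropy as soon as it is continuous in the first argument, monotone under quantum channels, additive on product states, and superadditive for product second argument, it suffices to verify these four properties for $F$.

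Continuity is immediate: the first summand is continuous by axiom (1) in the definition of conditional relative entropy, and the second is continuous because both the partial trace and the relative entropy are. Additivity on products follows from axiom (3.1): if $\rho_{AB}=\rho_A\otimes\rho_B$ and $\sigma_{AB}=\sigma_A\otimes\sigma_B$, then $D_A(\rho_A\otimes\rho_B\,\|\,\sigma_A\otimes\sigma_B)=\ent{\rho_A}{\sigma_A}$, so $F=\ent{\rho_A}{\sigma_A}+\ent{\rho_B}{\sigma_B}=\ent{\rho_A\otimes\rho_B}{\sigma_A\otimes\sigma_B}$. Superadditivity is immediate from axiom (3): for $\sigma_{AB}=\sigma_A\otimes\sigma_B$, axiom (3) gives $D_A(\rho_{AB}\,\|\,\sigma_A\otimes\sigma_B)\geq\ent{\rho_A}{\sigma_A}$, whence $F(\rho_{AB}\,\|\,\sigma_A\otimes\sigma_B)\geq\ent{\rho_A}{\sigma_A}+\ent{\rho_B}{\sigma_B}$.

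The key step, and the main obstacle, is monotonicity of $F$. For any quantum channel $\mathcal{T}\colon\BB_{AB}\to\BB_{AB}$, axiom (4) reads
\[
D_A(\mathcal{T}(\rho_{AB})\,\|\,\mathcal{T}(\sigma_{AB})) + D_B(\tr_A\mathcal{T}(\rho_{AB})\,\|\,\tr_A\mathcal{T}(\sigma_{AB})) \leq D_A(\rho_{AB}\,\|\,\sigma_{AB}) + D_B(\tr_A\rho_{AB}\,\|\,\tr_A\sigma_{AB}).
\]
Because $\tr_A(\xi_{AB})=\identity_A\otimes\xi_B$ is trivial on the $A$ factor, the symmetric counterpart of Remark \ref{remark:recoverabilityRE} applied to $D_B$ shows that $D_B(\tr_A\xi_{AB}\,\|\,\tr_A\eta_{AB})=\ent{\xi_B}{\eta_B}$. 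Substituting both $D_B$ terms by ordinary relative entropies of reduced states on $B$ collapses axiom (4) into exactly $F(\mathcal{T}(\rho_{AB})\,\|\,\mathcal{T}(\sigma_{AB}))\leq F(\rho_{AB}\,\|\,\sigma_{AB})$, which is monotonicity of $F$. The delicate point is justifying the identification of the $D_B$ terms, which relies on recognizing that the modified partial trace produces an operator with identity on $A$, to which the semi-additivity axiom for $D_B$ applies in the sense of Remark \ref{remark:recoverabilityRE}.

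Having verified the four axioms, the characterization theorem of \cite{axcharRE} forces $F=D$, i.e., $D_A(\rho_{AB}\,\|\,\sigma_{AB}) + \ent{\rho_B}{\sigma_B} = \ent{\rho_{AB}}{\sigma_{AB}}$, which is the claimed formula. A short sanity check (not strictly needed for the logic of the proof, but worth including) is that the explicit formula $\ent{\rho_{AB}}{\sigma_{AB}}-\ent{\rho_B}{\sigma_B}$ does satisfy every axiom in Definition \ref{def:CRE}: non-negativity with the stated equality case follows from monotonicity of relative entropy under the partial trace together with the Petz recovery condition, while the remaining axioms reduce to standard properties of the relative entropy listed in Proposition \ref{prop:REprop}.
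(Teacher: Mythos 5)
Your proposal is correct and follows essentially the same route as the paper: it defines the same auxiliary function $F(\rho_{AB}\,\|\,\sigma_{AB})=D_A(\rho_{AB}\,\|\,\sigma_{AB})+\ent{\rho_B}{\sigma_B}$, verifies continuity, additivity, superadditivity and monotonicity (the last via axiom (4) together with Remark \ref{remark:recoverabilityRE}), and invokes the characterization of \cite{axcharRE}. The only cosmetic difference is that the paper first concludes $F\propto D$ and then fixes the constant by evaluating on product states, a normalization your additivity computation already supplies implicitly.
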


\begin{proof}

Let us first prove that the quantity $D_A$ fulfils all conditions in Definition \ref{def:CRE}. Recall that we need prove the following properties:

\begin{enumerate}
\item \textit{The map $\rho_{AB} \mapsto  \entA A {\rho_{AB}} {\sigma_{AB}}$ is continuous.}

\vspace{0.2cm}

It is clear that $\ent {\rho_{AB}} {\sigma_{AB}}$ is continuous in $\rho_{AB}$, so $\ent {\rho_{B}} {\sigma_{B}}$ also is. Hence, their difference is also continuous. 

\vspace{0.2cm}

\item \textit{$ \entA A {\rho_{AB}} {\sigma_{AB}} \geq 0 $  and 
\begin{enumerate}
\item[(2.1)] $\entA A {\rho_{AB}} {\sigma_{AB}} $=0 if, and only if, $\rho_{AB}= \E^*_A( \rho_{AB} )$.
\end{enumerate}}

\vspace{0.2cm}

Notice that (2) is the monotonicity of the relative entropy (property (3) of Proposition \ref{prop:REprop}) for the channel $\tr_A$, and Property (2.1) was proven in \cite{petzmre}. 

\vspace{0.2cm}

\item \textit{$ \entA A {\rho_{AB}} {\sigma_{A} \otimes \sigma_B} \geq \ent {\rho_{A}} {\sigma_{A}} $ and
\begin{enumerate}
\item[(3.1)]  if $\rho_{AB}=\rho_A \otimes \rho_B$,  $ \entA A {\rho_{A}\otimes \rho_B} {\sigma_{A} \otimes \sigma_B} = \ent {\rho_{A}} {\sigma_{A}}   $.
\end{enumerate}}

\vspace{0.2cm}

In (3), using the superadditivity of the relative entropy,
\begin{center}
$\ds  \entA A {\rho_{AB}} {\sigma_{A} \otimes \sigma_B} \geq \ent {\rho_{A}} {\sigma_{A}}  +\ent {\rho_{B}} {\sigma_{B}} -  \ent {\rho_{B}} {\sigma_{B}} =\ent {\rho_{A}} {\sigma_{A}}  $.
\end{center}

 For (3.1), we have equality in the previous inequality:
\begin{center}
$\ds  \entA A  {\rho_{A}\otimes \rho_B} {\sigma_{A} \otimes \sigma_B} = \ent {\rho_{A}} {\sigma_{A}}  +\ent {\rho_{B}} {\sigma_{B}} -  \ent {\rho_{B}} {\sigma_{B}} =\ent {\rho_{A}} {\sigma_{A}}  $.

\end{center}

\vspace{0.2cm}

\item \textit{For every quantum channel $\mathcal{T}$, the following holds:
\begin{align*}
 D_A(\mathcal{T}(\rho_{AB})|| \mathcal{T}(\sigma_{AB})) +&  D_B((\tr_A \circ \mathcal{T}) (\rho_{AB}) || (\tr_A \circ \mathcal{T}) (\sigma_{AB}) ) \\
& \leq D_A(\rho_{AB}|| \sigma_{AB}) + D_B( \tr_A(\rho_{AB}) || \tr_A(\sigma_{AB}) ),
\end{align*}
where $\ds D_B({\rho_{AB}} || {\sigma_{AB}} ) $ is the conditional relative entropy in $B$.}

\vspace{0.2cm}

The first term in the LHS is expressed as:
\begin{align*}
 \entA A {\mathcal{T}(\rho_{AB})}  {\mathcal{T}(\sigma_{AB})} 
& = \ent {\mathcal{T}(\rho_{AB})}  {\mathcal{T}(\sigma_{AB})} - D_B((\tr_A \circ \mathcal{T}) (\rho_{AB}) || (\tr_A \circ \mathcal{T}) (\sigma_{AB}) ).
\end{align*}

Hence, the LHS of the statement of the proposition is actually given by
\begin{align*}
 D_A(\mathcal{T}(\rho_{AB})|| \mathcal{T}(\sigma_{AB})) +&  D_B((\tr_A \circ \mathcal{T}) (\rho_{AB}) || (\tr_A \circ \mathcal{T}) (\sigma_{AB}) ) =  \ent {\mathcal{T}(\rho_{AB})}  {\mathcal{T}(\sigma_{AB})}.
\end{align*}

Now, for the first term in the RHS, we have
\begin{center}
$ D_A(\rho_{AB}|| \sigma_{AB})=  \ent {\rho_{AB}} {\sigma_{AB}} - \ent {\rho_{B}} {\sigma_{B}} $.
\end{center}

Thus, the RHS can be rewritten as
\begin{align*}
 D_A(\rho_{AB}|| \sigma_{AB}) + D_B( \tr_A(\rho_{AB}) || \tr_A(\sigma_{AB}) ) &=  D_A(\rho_{AB}|| \sigma_{AB}) + \ent {\rho_{B}} {\sigma_{B}} \\
 &=  \ent {\rho_{AB}} {\sigma_{AB}},
\end{align*}
where in the first line we have used Remark \ref{remark:recoverabilityRE}.

In conclusion, the statement of the property is equivalent to the following inequality:
\begin{center}
$\ent {\mathcal{T}(\rho_{AB})}  {\mathcal{T}(\sigma_{AB})} \leq \ent {\rho_{AB}} {\sigma_{AB}} $,
\end{center}
which holds for every quantum channel $\mathcal{T}$ (property of monotonicity in Proposition \ref{prop:REprop}).

\end{enumerate}

Once we have proven that this definition of $D_A$ is indeed a conditional relative entropy according to Definition \ref{def:CRE}, we can move forward to the proof of the converse implication.

Let us define $f : \SSS_{AB} \times \SSS_{AB} \rightarrow \R_0^+ $ by:
\begin{center}
$\ds f(\rho_{AB}, \sigma_{AB}) = D_A (\rho_{AB} || \sigma_{AB})+ D ( \rho_{B} || \sigma_{B})$,
\end{center}
where $D_A$ is a conditional relative entropy (and, hence, satisfies the properties of Definition \ref{def:CRE}).

We are going to prove that 
\begin{center}
$\ds  f (\rho_{AB}, \sigma_{AB}) = \ent {\rho_{AB}} {\sigma_{AB}} .$
\end{center}

In virtue of the characterization for the relative entropy shown in \cite{axcharRE}, we just need to prove that $f$ satisfies the following properties for every $\rho_{AB}, \sigma_{AB} \in \SSS_{AB}$:
\begin{enumerate}
\item \textit{\textbf{Continuity:} $\rho_{AB} \mapsto f(\rho_{AB}, \sigma_{AB}) $ is continuous.}

\vspace{0.2cm}

It is a direct consequence of property (1) in Definition \ref{def:CRE}.

\vspace{0.2cm}

\item \textit{\textbf{Additivity:} $f(\rho_A \otimes \rho_B, \sigma_A \otimes \sigma_B) = f(\rho_A, \sigma_A) + f(\rho_B, \sigma_B)$.}

\vspace{0.2cm}

This follows from property (3.1) in Definition \ref{def:CRE}.

\vspace{0.2cm}

\item \textit{\textbf{Superadditivity:} $f(\rho_{AB}, \sigma_A \otimes \sigma_B) \geq f(\rho_A, \sigma_A) + f(\rho_B, \sigma_B)$.}

\vspace{0.2cm}

It is straightforward from property (3) in Definition \ref{def:CRE}.

\vspace{0.2cm}

\item \textit{\textbf{Monotonicity:} For every quantum channel $\mathcal{T}$, 
\begin{center}
$ f(\mathcal{T}(\rho_{AB}), \mathcal{T}(\sigma_{AB})) \leq  f(\rho_{AB}, \sigma_{AB}).$
\end{center}}

\vspace{0.2cm}

Let us study the second term in the definition of $f$:
\begin{align*}
\ent {\rho_B} {\sigma_B}  & =   \entA B {\rho_B} {\sigma_B} \\
& =   \entA B {\tr_A[\rho_{AB}]} {\tr_A[\sigma_{AB}]},
\end{align*}
where we have used Remark \ref{remark:recoverabilityRE} in the second line.

Therefore, we can rewrite the property of monotonicity of $f$ as:
\begin{align*}
 D_A(\mathcal{T}(\rho_{AB})|| \mathcal{T}(\sigma_{AB})) +&  D_B((\tr_A \circ \mathcal{T}) (\rho_{AB}) || (\tr_A \circ \mathcal{T}) (\sigma_{AB}) ) \\
& \leq D_A(\rho_{AB}|| \sigma_{AB}) + D( \rho_{B} || \sigma_{B} ),
\end{align*}
and this property holds by assumption, because of the property of semi-monotonicity.

\end{enumerate}

Hence, recalling \cite{axcharRE}, we can deduce that
\begin{center}
$ f(\rho_{AB}, \sigma_{AB}) \propto \ent {\rho_{AB}} {\sigma_{AB}} $.
\end{center}

Moreover, if we take $\rho_{AB}= \rho_A \otimes \rho_B$ and $\sigma_{AB}= \sigma_A \otimes \sigma_B$, we have
\begin{align*}
f(\rho_A \otimes \rho_B, \sigma_A \otimes \sigma_B) & = D_A (\rho_A \otimes \rho_B ||\sigma_A \otimes \sigma_B)+ D ( \rho_{B} || \sigma_{B}) \\
&= \ent {\rho_A} {\sigma_A} +  \ent {\rho_B} {\sigma_B} \\
&= \ent {\rho_A \otimes \rho_B} {\sigma_A \otimes \sigma_B},
\end{align*}
from which we can conclude
\begin{center}
$ f(\rho_{AB}, \sigma_{AB}) = \ent {\rho_{AB}} {\sigma_{AB}} $.
\end{center}

This fact immediately yields the statement of the theorem.
\end{proof}

\begin{remark}
Throughout the whole paper we are assuming that all the states considered are full-rank, and, thus, their relative entropy is finite. Hence, the conditional relative entropy, which we have just seen that can be expressed as a difference of relative entropies, is the difference of two finite quantities, so it is always well-defined. 

\end{remark}

The formula obtained in this subsection for the conditional relative entropy allows us to give an operational interpretation to this quantity. In the context of thermodynamics and cost of quantum processes, in \cite{faist1}, the authors introduced the concept of coherent relative entropy to give a measure of the amount of information forgotten by a logical process, conditioned to the output of the process, and relative to certain weights encoded in an operator. In thermodynamics, this quantity can be seen as the work cost of a certain quantum process (some applications and interesting properties of this quantity have appeared in \cite{faist2}). 

Our conditional relative entropy coincides with the coherent relative entropy when the process considered is a partial trace and taking the i.i.d. limit, which allows us to think that the relative entropy might be of use in a wider range of physical and information-theoretic situations.

\subsection{Conditional relative entropy by expectations}\label{subsec:CREexp}

In the previous result we have shown that there exists a unique conditional relative entropy fulfilling all properties from Definition \ref{def:CRE}. However, the properties of continuity, non-negativity, semi-additivity and semi-superadditivity are expected to hold in such concept of conditional relative entropy, but the property of semi-monotonicity, although  justified below the definition, may seem less natural. 

One could then think of modifying, or just removing this property from the definition. That would leave space for more possible examples of this new \textit{modified conditional relative entropy}. The purpose of this subsection is, indeed, to introduce an example of a conditional relative entropy that lacks the property of semi-monotonicity.

One quantity widely used in quantum information theory is the relative entropy of a state and the Petz recovery map of the partial trace of that state. Indeed, it is known that there are cases where this quantity coincides with the aforementioned conditional relative entropy (this will be discussed in Subsection \ref{subsec:comparison}). Hence, it is also natural to study this quantity as a possible modified conditional relative entropy.

\begin{defi}[Conditional relative entropy by expectations]

Let $\hs_{AB}=\hs_A \otimes \hs_B$ be a composite Hilbert space and $\rho_{AB}, \sigma_{AB} \in S_{AB}$. Let $\E_A^*$ be the adjoint of the minimal conditional expectation introduced in subsection \ref{subsec-cond-exp}. We define the \textit{conditional relative entropy by expectations} of $\rho_{AB}$ and $\sigma_{AB}$ in $A$ by:
\begin{center}
$ \ds \enteA A {\rho_{AB}} {\sigma_{AB}} = \ent {\rho_{AB}} {\E_A^*(\rho_{AB})} $.
\end{center}
\end{defi}

We prove now that this definition fulfils all properties of conditional relative entropy, except for the semi-monotonicity.

\begin{prop}
Let $\hs_{AB}=\hs_A \otimes \hs_B$. The following properties hold for every $\rho_{AB}, \sigma_{AB} \in \SSS_{AB}$:
\begin{enumerate}
\item The map $\rho_{AB} \mapsto  \enteA A {\rho_{AB}} {\sigma_{AB}}$ is continuous.
\item $ \enteA A {\rho_{AB}} {\sigma_{AB}} \geq 0 $  and 
\begin{enumerate}
\item[(2.1)] $\enteA A {\rho_{AB}} {\sigma_{AB}} $=0 if, and only if, $\rho_{AB}= \E^*_A( \rho_{AB} )$.
\end{enumerate}
\item $ \enteA A {\rho_{AB}} {\sigma_{A} \otimes \sigma_B} \geq \ent {\rho_{A}} {\sigma_{A}} $ and
\begin{enumerate}
\item[(3.1)]  if $\rho_{AB}=\rho_A \otimes \rho_B$,  $ \enteA A {\rho_{A}\otimes \rho_B} {\sigma_{A} \otimes \sigma_B} = \ent {\rho_{A}} {\sigma_{A}}   $.
\end{enumerate}
\end{enumerate}
\end{prop}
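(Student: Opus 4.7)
My approach is to verify each of the five conditions by direct computation, using the explicit formula
\[\E_A^*(\rho_{AB})=\sigma_{AB}^{1/2}\sigma_B^{-1/2}\rho_B\sigma_B^{-1/2}\sigma_{AB}^{1/2}\]
together with the standard properties of the quantum relative entropy (Proposition \ref{prop:REprop}) and the von Neumann entropy (Proposition \ref{prop:vNEprop}). The key structural observation I will exploit throughout is that $\E_A^*(\rho_{AB})$ depends on $\rho_{AB}$ only through the partial trace $\rho_B$, and does so linearly; this makes both regularity considerations and explicit computations on product states transparent.

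Properties (1), (2), and (2.1) follow almost immediately. Continuity of $\rho_{AB}\mapsto\enteA A{\rho_{AB}}{\sigma_{AB}}$ reduces to continuity of $\ent{\rho_{AB}}{\E_A^*(\rho_{AB})}$ in $\rho_{AB}$; since $\E_A^*$ is linear and sends full-rank states to full-rank states (as $\sigma_{AB}$ is full rank), the joint continuity of $D(\cdot||\cdot)$ on pairs of full-rank states does the job. Non-negativity in (2) is precisely the non-negativity of the relative entropy applied to the pair $(\rho_{AB},\E_A^*(\rho_{AB}))$, and (2.1) follows from the characterization ``$\ent{\rho}{\sigma}=0$ if and only if $\rho=\sigma$'' of Proposition \ref{prop:REprop}(2).

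The substance of (3) and (3.1) lies in evaluating $\E_A^*$ in the product case $\sigma_{AB}=\sigma_A\otimes\sigma_B$. Using $\sigma_{AB}^{1/2}=\sigma_A^{1/2}\otimes\sigma_B^{1/2}$ and the cancellation $\sigma_B^{1/2}\sigma_B^{-1/2}\rho_B\sigma_B^{-1/2}\sigma_B^{1/2}=\rho_B$, a direct substitution yields $\E_A^*(\rho_{AB})=\sigma_A\otimes\rho_B$. Expanding the logarithms inside the trace then gives
\begin{align*}
\enteA A{\rho_{AB}}{\sigma_A\otimes\sigma_B}&=\ent{\rho_{AB}}{\sigma_A\otimes\rho_B}\\
&=\ent{\rho_A}{\sigma_A}+\bigl(S(\rho_A)+S(\rho_B)-S(\rho_{AB})\bigr),
\end{align*}
and subadditivity of the von Neumann entropy (Proposition \ref{prop:vNEprop}(5)) immediately gives (3). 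For (3.1), if $\rho_{AB}=\rho_A\otimes\rho_B$ the bracketed term vanishes by additivity of the von Neumann entropy, so equality holds; equivalently, one may invoke additivity of the relative entropy directly, since $\ent{\rho_A\otimes\rho_B}{\sigma_A\otimes\rho_B}=\ent{\rho_A}{\sigma_A}+\ent{\rho_B}{\rho_B}=\ent{\rho_A}{\sigma_A}$.

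I do not expect any real obstacle. The only point that requires some care is the paper's ``modified partial trace'' convention, i.e.\ keeping track of which occurrences of $\rho_B$ and $\sigma_B$ are implicitly tensored with $\identity_A$, so that the cancellation in the computation of $\E_A^*(\rho_{AB})$ for $\sigma_{AB}=\sigma_A\otimes\sigma_B$ is performed rigorously.
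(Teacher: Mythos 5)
Your proposal is correct and follows essentially the same route as the paper: continuity from linearity of $\E_A^*$, non-negativity and the vanishing criterion from the corresponding properties of the relative entropy, the computation $\E_A^*(\rho_{AB})=\sigma_A\otimes\rho_B$ in the product case, and the decomposition $\ent{\rho_{AB}}{\sigma_A\otimes\rho_B}=\ent{\rho_{AB}}{\rho_A\otimes\rho_B}+\ent{\rho_A}{\sigma_A}$ for (3) and (3.1). Your use of subadditivity of the von Neumann entropy in (3) is just the paper's non-negativity of $\ent{\rho_{AB}}{\rho_A\otimes\rho_B}$ written as $I_\rho(A:B)\geq 0$, so there is no substantive difference.
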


\begin{proof}
\begin{itemize}
\item (1) is due to the facts that $\E^*_A(\rho_{AB})$ is linear in $\rho_{AB}$ and the relative entropy is continuous.
\item Property (2) comes from the fact that the conditional relative entropy by expectations is in particular a relative entropy of density matrices.
\item (2.1) is a consequence of the fact that the relative entropy of two states vanishes if, and only if, they coincide.
\item For (3), observe that if $\sigma_{AB}= \sigma_A \otimes \sigma_B$,
\begin{align*}
\E_{A}^*(\rho_{AB})&=  \sigma_{A}^{1/2} \otimes \sigma_B^{1/2} \sigma_B^{-1/2}\rho_B \sigma_B^{-1/2}  \sigma_{A}^{1/2} \otimes \sigma_B^{1/2} \\
&= \sigma_A \otimes \rho_B. 
\end{align*}
Hence,
\begin{align*}
 \enteA A {\rho_{AB}} {\sigma_{A} \otimes \sigma_B}&= \ent {\rho_{AB}} {\sigma_A \otimes \rho_B}\\
 &= \ent {\rho_{AB}} {\rho_A \otimes \rho_B} + \ent {\rho_{A}} {\sigma_A } \\
 &\geq \ent {\rho_{A}} {\sigma_A },
\end{align*}
where we have used the non-negativity of the relative entropy.
\item In (3.1), if both $\rho_{AB}$ and $\sigma_{AB}$ are tensor products, we have equality in the previous inequality:
\begin{align*}
\enteA A {\rho_{A}\otimes \rho_B} {\sigma_{A} \otimes \sigma_B}& = \ent {\rho_{A}\otimes \rho_B} {\sigma_{A}\otimes \rho_B}\\
& = \ent {\rho_{A}} {\sigma_{A}},
\end{align*}
since $ \ent {\rho_A \otimes \rho_B}{\sigma_A \otimes \rho_B}= \ent {\rho_A }{\sigma_A } + \ent { \rho_B}{\rho_B}$ and the second term is zero.
\end{itemize}
\end{proof}


\subsection{Comparison of definitions}\label{subsec:comparison} 

Once we have presented both definitions for conditional relative entropy and conditional relative entropy by expectations, respectively, it is a natural question whether they coincide in general and, if not, characterize the states for which they do. Let us consider $\rho_{AB}$ and $\sigma_{AB}$ bipartite density matrices in $\hs_{AB}= \hs_A \otimes \hs_B $ and study different cases.

\vspace{0.2cm}

\textbf{Case 1: $\rho_B$, $\sigma_{AB}$ and $\sigma_B$ commute.}

We first assume that $\left[ \rho_{B}, \sigma_{AB} \right]=\left[ \rho_{B}, \sigma_{B} \right] =\left[ \sigma_{B}, \sigma_{AB} \right] =0$. Then, we can rewrite both definitions of conditional relative entropies as:
\begin{eqnarray*}
\entA A {\rho_{AB}} {\sigma_{AB}} &=& \ent {\rho_{AB}} {\sigma_{AB}}  - \ent {\rho_{B}} {\sigma_{B}} \\
&=& \tr[\rho_{AB}(\log \rho_{AB} - \log \sigma_{AB} \rho_B \sigma_B^{-1})]
\end{eqnarray*}
and 
\begin{eqnarray*}
\enteA A {\rho_{AB}} {\sigma_{AB}} &=& \ent {\rho_{AB}} {\E^*_{A}(\rho_{AB})} \\
&=& \tr[\rho_{AB}(\log \rho_{AB} - \log \sigma_{AB} \rho_B \sigma_B^{-1})],
\end{eqnarray*}
so we can see that they coincide.

\vspace{0.2cm}

\textbf{Case 2: $\sigma_{AB}$ has the splitting property.}

Suppose that $\sigma_{AB}= \sigma_A \otimes \sigma_B$. Then, for the conditional relative entropy, we have
\begin{eqnarray*}
\entA A {\rho_{AB}} {\sigma_{A} \otimes \sigma_B} &=& \ent {\rho_{AB}} {\sigma_{A} \otimes \sigma_B} - \ent {\rho_B} {\sigma_B}\\
&=& \ent {\rho_{AB}} {\rho_A \otimes \rho_B}  + \ent {\rho_A \otimes \rho_B}    {\sigma_A \otimes \sigma_B}   -\ent {\rho_B} {\sigma_B}   \\
&=&  I_\rho(A:B) + \ent {\rho_A} {\sigma_A}+\ent {\rho_B} {\sigma_B} -\ent {\rho_B} {\sigma_B} \\
&=& I_\rho(A:B) + \ent {\rho_A} {\sigma_A}.
\end{eqnarray*}

Furthermore, the adjoint of the minimal conditional expectation takes the value $\E^*_A(\rho_{AB}) = \sigma_A \otimes \rho_B$. Thus, the conditional relative entropy by expectations in this case is given by
\begin{eqnarray*}
\enteA A {\rho_{AB}} {\sigma_{AB}} &=& \ent {\rho_{AB}} {\sigma_A \otimes \rho_B} \\
&=& \ent {\rho_{AB}} {\rho_A \otimes \rho_B} + \ent {\rho_A \otimes \rho_B}{\sigma_A \otimes \rho_B} \\
&=& I_\rho(A:B) + \ent {\rho_A} {\sigma_A}.
\end{eqnarray*}

Therefore,
\begin{center}
$ \entA A {\rho_{AB}} {\sigma_A \otimes \sigma_B} = \enteA A {\rho_{AB}} {\sigma_A \otimes \sigma_B} $.
\end{center}

\vspace{0.2cm}

\textbf{Case 3: $\entA A {\rho_{AB}} {\sigma_{AB}} = 0$ or $\enteA A {\rho_{AB}} {\sigma_{AB}} = 0$.}

On the one hand, for the conditional relative entropy by expectations, since it is in particular a relative entropy between two states it is clear that (property (1) of Proposition \ref{prop:REprop}):
\begin{center}
$\ds \enteA A {\rho_{AB}} {\sigma_{AB}} = 0 \, \Leftrightarrow \, \rho_{AB} = \E_A^*(\rho_{AB})$.
\end{center}

On the other hand, for the conditional relative entropy, the situation
\begin{center}
$\ds   \entA A {\rho_{AB}} {\sigma_{AB}} = 0  \, \Leftrightarrow \,  \ent {\rho_{AB}} {\sigma_{AB}} = \ent {\rho_{B}} {\sigma_{B}} $
\end{center}
was addressed and characterized by Petz in \cite{petzmre}. In general, if $\hs$ and $\mathcal{K}$ are two Hilbert spaces, $\mathcal{T}:\BB(\hs) \rightarrow \BB(\mathcal{K})$ is a quantum channel and $\rho$ and $\sigma$ are two states of $\hs$, the following inequality, known as the monotonicity of the relative entropy, holds (\cite{uhlmann}, previously proven for the finite-dimensional case in \cite{lindblad}):
\begin{equation}\label{eq:mon-rel-ent}
\ds \ent \rho \sigma \geq \ent {\mathcal{T}(\rho)} {\mathcal{T}(\sigma)} ,
\end{equation}
and Petz proved that (\ref{eq:mon-rel-ent}) is indeed saturated if, and only if, both $\rho$ and $\sigma$ can be recovered in the following way
\begin{center}
$\ds \widehat{\mathcal{T}} \mathcal{T} (\rho) = \rho  , \phantom{sad}\widehat{\mathcal{T}} \mathcal{T} (\sigma) = \sigma$,
\end{center} 
where $\widehat{\mathcal{T}} $ can be explicitly given by:

\begin{center}
$\ds \widehat{\mathcal{T}} \rho = \sigma^{1/2} \, \mathcal{T}^* \left( (\mathcal{T} \sigma)^{-1/2} \rho \, (\mathcal{T} \sigma)^{-1/2} \right) \sigma^{1/2} $.
\end{center}

Notice from the expression of $\widehat{\mathcal{T}} $ that $\widehat{\mathcal{T}} \mathcal{T} (\sigma) = \sigma$ always holds.

For the particular case of the partial trace, this problem was also addressed in \cite{ssequality}, where the authors showed that, for this channel, having equality in equation (\ref{eq:mon-rel-ent}) is equivalent to having equality in the strong subadditivity  of the relative entropy (Proposition \ref{prop:sigmaprod}), and, using this characterization they provided a decomposition into a direct sum of mutually orthogonal tensor products for the  states which satisfy strong subadditivity with equality.

In what concerns equality in (\ref{eq:mon-rel-ent}) for the partial trace, Petz' result reads as:
\begin{center}
$\ds \entA A {\rho_{AB}} {\sigma_{AB}} = 0 \, \Leftrightarrow \, \rho_{AB}= \sigma_{AB}^{1/2} \, \sigma_B^{-1/2} \, \rho_{B} \, \sigma_B^{-1/2} \,  \sigma_{AB}^{1/2}  = \E^*_A(\rho_{AB})$,
\end{center}
where we recall that $ \E^*_{A}(\rho_{AB})$ is exactly the Petz recovery map for the partial trace $\tr_A$.

Therefore, the kernels of both definitions of conditional relative entropies coincide, i.e., both vanish under the same conditions.

\vspace{0.2cm}

\textbf{Case 4: General case.}

We have seen that both definitions mentioned above coincide, at least, when they are null, $\sigma_{AB}$ is a tensor product, or $\left[ \rho_{B}, \sigma_{AB} \right]=\left[ \rho_{B}, \sigma_{B} \right] =\left[ \sigma_{B}, \sigma_{AB} \right] =0$. In general, as far as we know, the problem of characterizing for which states $\rho_{AB}, \sigma_{AB}$ both definitions coincide is still an open question.

Another natural question that arises in this context is whether one definition could be always greater or equal than the other, i.e., whether the following inequality
\begin{equation}\label{eq:two-defs}
\entA A {\rho_{AB}} {\sigma_{AB}} \ov{?}{\geq} \enteA A {\rho_{AB}} {\sigma_{AB}}
\end{equation}
or the reverse one hold for every $\rho_{AB}, \sigma_{AB} \in \SSS_{AB}$. The left-hand side of this inequality has been widely studied in a series of recent papers (\cite{fawzirenner}, \cite{bertamre}, \cite{swiveled}, \cite{sutterur}, \cite{sutter}, among other results), where the authors provide several lower and upper bounds for our conditional relative entropy by differences.

This result is already known to be false. Let us consider a tripartite Hilbert space $\hs_{ABC}=\hs_A \otimes \hs_B \otimes \hs_C $ and compare both definitions of conditional relative entropy in $C$. Consider $\rho_{ABC} \in \SSS_{ABC}$ and suppose that $\sigma_{ABC}=\identity_A \otimes \rho_{BC}$. Then,
\begin{align*}
\entA C {\rho_{ABC}} {\sigma_{ABC}} & =  \ent {\rho_{ABC}} {\sigma_{ABC}} - \ent{\rho_{AB}} {\sigma_{AB}} \\
& = \ent {\rho_{ABC}} {\rho_{BC}} - \ent{\rho_{AB}} {\rho_{B}} \\
& =  - S[\rho_{ABC}] + S[\rho_{BC}] + S[\rho_{AB}] - S[\rho_B] \\
& = I_\rho (A:BC) - I_\rho (A:B) \\
&= I_\rho(A:C | B),
\end{align*}
where this last term is called \textit{conditonal mutual information}, and
\begin{align*}
\enteA C {\rho_{ABC}} {\sigma_{ABC}} & =  \ent {\rho_{ABC}} {\E^*_C(\rho_{ABC})} \\
& = \ent {\rho_{ABC}^{\phantom{s}}} {\rho_{BC}^{1/2} \, \rho_{B}^{-1/2} \, \rho_{AB}\, \rho_{B}^{-1/2} \, \rho_{BC}^{1/2}}. 
\end{align*}

Hence, inequality (\ref{eq:two-defs}) in the particular case $\sigma_{ABC}=\identity_A \otimes \rho_{BC}$ can be rewritten as 
\begin{equation}\label{eq:two-defs-part}
I_\rho(A:C | B) \ov{?}{\geq} \ent {\rho_{ABC}} {\rho_{BC}^{1/2} \, \rho_{B}^{-1/2} \, \rho_{AB}\, \rho_{B}^{-1/2} \, \rho_{BC}^{1/2}}.
\end{equation}

This problem was addressed in \cite{conjfalse1}, where they considered these two quantities and plotted one quantity against the other for 10.000 randomly chosen pure states of dimension $2 \times 2 \times 2$. They showed  that even though in most of the cases the conditional mutual information is strictly greater than the conditional relative entropy by expectations, there are cases in which the reverse inequality holds. Similar numerical results had also been obtained in \cite{li-winter}.

In the recent paper \cite{conjfalse2}, the authors studied the following inequality:
\begin{equation}\label{eq:two-defs-part2}
I_\rho(A:C | B) \ov{?}{\geq} \un{\Lambda: B \rightarrow BC}{\text{min}}  \ent {\rho_{ABC}} {(\identity_A \otimes \Lambda	)(\rho_{AB})}.
\end{equation}

They tested it on 2.000 randomly chosen pure states of dimension $2 \times 2 \times 2$ and showed that there are states for which inequality (\ref{eq:two-defs-part2}) is violated. For these states, in particular, inequality (\ref{eq:two-defs-part}) is also violated, since

\begin{center}
$\ds  I_\rho(A:C | B) < \un{\Lambda: B \rightarrow BC}{\text{min}}  \ent {\rho_{ABC}} {(\identity_A \otimes \Lambda	)(\rho_{AB})} \leq \ent {\rho_{ABC}} {\rho_{BC}^{1/2} \, \rho_{B}^{-1/2} \, \rho_{AB}\, \rho_{B}^{-1/2} \, \rho_{BC}^{1/2}} $.
\end{center}

After that, they also presented an explicit counterexample for inequality (\ref{eq:two-defs-part2}).

\begin{remark}
A natural question in this context is whether one can recover the conditional entropy of $\rho_{AB}$ when one considers $\sigma_{AB}=\identity_{AB}$ in the conditional relative entropy, analogously to what happens for the von Neumann entropy of $\rho_{AB}$, which is recovered from the relative entropy of $\rho_{AB}$ and $\sigma_{AB}$ when  $\sigma_{AB}=\identity_{AB}$. 

More specifically, given a bipartite Hilbert space and $\rho_{AB}$ a state on it, for the conditional relative entropy in $A$ of $\rho_{AB}$ and $\identity_{AB}$ we have:
\begin{align*}
D_A(\rho_{AB} || \identity_{AB})  &= D(\rho_{AB} || \identity_{AB}) - D(\rho_{B} || d_A \identity_{B}) \\
&= - S(\rho_{AB}) + S(\rho_B) + \tr[\rho_B \log d_A] \\
&= - S(A|B)_{\rho} + \log d_A.
\end{align*}

Moreover, for the conditional relative entropy by expectations in $A$, we have
\begin{align*}
D_A^E(\rho_{AB} || \identity_{AB}) &=  D(\rho_{AB} || \identity_{AB}(d_A \identity_B)^{-1/2} \rho_B (d_A \identity_B)^{-1/2}   \identity_{AB}) \\
&= D(\rho_{AB} || \identity_{A}/d_A \otimes \rho_B ) \\
&= - S(A|B)_{\rho} + \log d_A.
\end{align*}

Hence, from both definitions we can recover the conditional entropy of $\rho$ in $A$ plus an additive factor with the logarithm of the dimension of $\mathcal{H}_A$, due to the fact that both definitions of conditional relative entropies were provided for states, instead of observables. If we compute both conditional relative entropies of $\rho_{AB}$ and $\identity_{AB}/d_{AB}$ (because now they are both states), then we recover the conditional entropy of $\rho_{AB}$ in both situations.

\end{remark}

\subsection{Relation with the classical case}\label{subsec:classical}

In this subsection, we will prove that both definitions presented above extend their classical analogue. Before that, we recall the classical definition for the entropy and the conditional entropy of a function, respectively.

Consider a probability space $(\Omega, \mathcal{F}, \mu)$ and define, for every $f>0$, the \textit{entropy} of $f$ by
\begin{center}
$\text{Ent}_\mu (f):= \mu(f \log f ) - \mu(f) \log \mu(f)$.  
\end{center}

Given a sub-$\sigma$-algebra $\mathcal{G}\subseteq \mathcal{F}$, we define the \textit{conditional entropy} of $f$ in $\mathcal{G}$ by
\begin{center}
$\text{Ent}_\mu (f \mid \mathcal{G}):= \mu(f \log f \mid \mathcal{G}) - \mu(f\mid \mathcal{G}) \log \mu(f\mid \mathcal{G})$,
\end{center}
where $\mu(f \mid \mathcal{G})$ is given by
\begin{center}
$\ds  \int_G \mu(f \mid \mathcal{G}) \, d \mu = \int_G f d \mu  \, \, $ for each $G \in \mathcal{G}$.
\end{center}

Let us consider two measures $\nu$ and $\mu$ in $(\Omega, \mathcal{F})$. We define the \textit{relative entropy} of $\nu$ with respect to $\mu$ by 
\begin{center}
$\ds H(\nu | \mu) := \begin{cases} \mu(f \log f) \, \text{ if } d \nu = f  d \mu, \, f \log f \in L^1 (\mu), \\
+ \infty \text{ \phantom{adsdr} otherwise}.
\end{cases} $
\end{center}

Then, we can relate it to the previous concept by
\begin{center}
$\ds H(\nu | \mu) = \text{Ent}_\mu \left( \frac{d \nu}{d \mu} \right) $,
\end{center}
and analogously to the definition of conditional entropy, we could define the \textit{conditional relative entropy} of $\nu$ with respect to $\mu$ in $\mathcal{G}$ by:

\begin{center}
$\ds  H_{\mathcal{G}}(\nu | \mu) = \text{Ent}_\mu (f | \mathcal{G}) $
\end{center}
for $\ds  f= \frac{d \nu}{d \mu}$.

Let us compare now this setting to the quantum case. We will prove  that, when the states are classical, both the conditional relative entropy and the conditional relative entropy by expectations coincide with the measure of the classical conditional entropy. The measure is necessary due to the fact that the classical conditional entropy of a function is another function, whereas the conditional relative entropy of two states produces a scalar.

We first rewrite the classical conditional entropy as:
\begin{align*}
\text{Ent}_\mu (f \mid \mathcal{G})&= \mu(f \log f \mid \mathcal{G}) - \mu(f\mid \mathcal{G}) \log \mu(f\mid \mathcal{G})\\
&= \mu(f \log f \mid \mathcal{G}) - \mu(f \log \mu(f\mid \mathcal{G})  \mid \mathcal{G}) \\
&=  \mu(f (\log f -  \log \mu(f\mid \mathcal{G}) ) \mid \mathcal{G}).
\end{align*}

Now, since $\mu ( \mu(\cdot| \mathcal{G}) )= \mu(\cdot)$,
\begin{equation}\label{eq:clas-cond-ent}
\mu( \text{Ent}_\mu (f \mid \mathcal{G}) ) = \mu(f (\log f -  \log \mu(f\mid \mathcal{G}) ) )
\end{equation}

Let us consider a bipartite Hilbert space $\hs_{AB}=\hs_A \otimes \hs_B $ and a classical state on it, i.e., of the form:

\begin{center}
$\ds \rho_{AB}= \un{a,b}{\sum} P_{AB}(a,b) \dyad{a}{a}_A \otimes \dyad{b}{b}_B  $.
\end{center}

Then, since the space of observables for each system is an abelian $C^*$-algebra, in virtue of Gelfand's theorem (see \cite{gelfand}, for instance) the composite system of observables can be expressed as
\begin{center}
$C(K) \otimes C(L) = C(K \times L)$,
\end{center}
where both $K$ and $L$ are compact spaces. A state in the composite system is a positive $\rho$ of the dual of $C(K \times L)$, which by the Riesz-Markov theorem (\cite{riesz}, \cite{markov}) corresponds to a regular Borel measure on $K \times L$. Hence, we can identify a classical state $\rho_{AB}$ with a regular measure $\mu$.

\begin{figure}
\begin{center}
\includegraphics[scale=0.42]{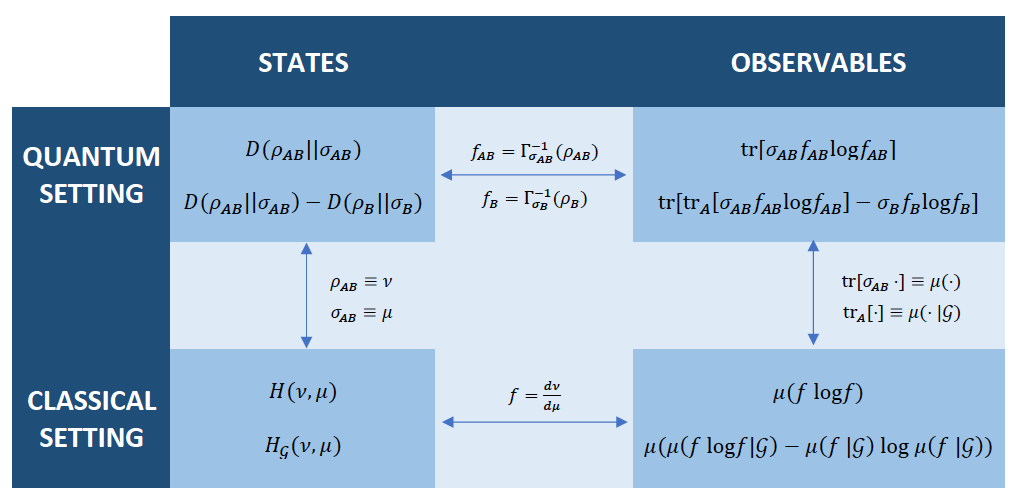}
\end{center}
\caption{Identification between classical and quantum quantities when the states considered are classical.}
\label{fig:identification}
\end{figure}

Moreover, we obtain the corresponding reduced state in one of the components by projecting the measure of $\rho_{AB}$ to that component, so the partial trace of the quantum setting can be interpreted as this operation in the classical setting (which is exactly the operation of the conditioning to a sub-$\sigma$-algebra in the definition of the classical conditional entropy). Thus, we identify $\tr_A[\cdot]$ with $\mu( \cdot | \mathcal{F})$.

Let us also recall that in the quantum setting we are considering states and in the definition of classical entropy, observables. The transition from the Schrödinger picture to the Heisenberg picture can be made by means of the operator:

\begin{center}
$\ds \Gamma^{-1}_{\sigma_{AB}} (\rho_{AB})= {\sigma_{AB}}^{-1/2}\rho_{AB}   \sigma_{AB}^{-1/2} $
\end{center}
for a certain full-rank state $\sigma_{AB}$, which we also consider classical. In particular, $\rho_{AB}$ and $\sigma_{AB}$ commute, as well as their marginals.

If we put all this together (a diagram of this identification can be seen in Figure \ref{fig:identification}), along with equation (\ref{eq:clas-cond-ent}), and identify the trace with respect to $\sigma_{AB}$ with the measure $\mu$, taking into account that $f=d\nu / d \mu$ is identified with $\Gamma^{-1}_{\sigma_{AB}} (\rho_{AB})$, we have:
\begin{align*}
\mu( \text{Ent}_\mu (f \mid \mathcal{G}) )& = \mu(f (\log f -  \log \mu(f\mid \mathcal{G}) ) )\\
&= \tr[\sigma_{AB} \Gamma^{-1}_{\sigma_{AB}} (\rho_{AB}) ( \log \Gamma^{-1}_{\sigma_{AB}} (\rho_{AB}) - \log \tr_A[\Gamma^{-1}_{\sigma_{AB}} (\rho_{AB})])] \\
&= \tr[\rho_{AB}(\log \rho_{AB} \sigma_{AB}^{-1} - \log \rho_{B} \sigma_{B}^{-1})] \\
&= \tr[\rho_{AB} (\log \rho_{AB}   - \log\sigma_{AB} - \log \rho_{B} + \log  \sigma_{B})]\\
&= \entA A {\rho_{AB} } {\sigma_{AB}}= \enteA A {\rho_{AB} } {\sigma_{AB}},
\end{align*}
where we have proven that both the quantum conditional relative entropy and the conditional relative entropy by expectations coincide with the measure of the classical conditional entropy.

\section{Quasi-factorization of the relative entropy}\label{sec:qf}

A result of \textit{quasi-factorization} of the relative entropy is an upper bound for the relative entropy of two states in terms of the sum of some conditional relative entropies, according to the definition of the previous section, in certain subregions, and a multiplicative error term which depends only on certain properties of the second state. The motivation for such results comes from classical spin systems, where a result of quasi-factorization of the classical entropy of a function, proven in both \cite{cesi} and \cite{clasico}, was essential for the simplification of a seminal result of \cite{marti-oliv} connecting the mixing time of some Glauber dynamics with the decay of correlations in their Gibbs states, via a positive log-Sobolev constant  (for more information on this topic, consult Section \ref{sec:logSob}).  

In this subsection, we present several quasi-factorization results for the relative entropy in terms of conditional relative entropies. We will classify these results in two classes, depending on the number of subregions where we condition and their overlap.

\begin{figure}
\begin{center}
\includegraphics[scale=0.8]{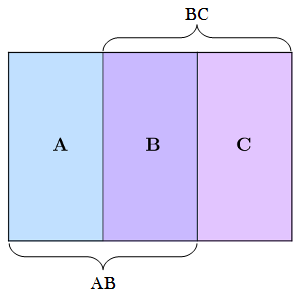}
\end{center}
\caption{Choice of indices in a tripartite Hilbert space $\hs_{ABC}= \hs_A \otimes \hs_B \otimes \hs_C$.}
\label{fig:2}
\end{figure}

In results from the first class, the relative entropy is bounded by the sum of two conditional relative entropies in overlapping regions and a multiplicative error term. Namely, for a tripartite Hilbert space 
$\hs_{ABC}= \hs_A \otimes \hs_B \otimes \hs_C$, we focus on subregions $AB$ and $BC$ (see figure \ref{fig:2}) and prove results of the kind 
\begin{equation}\label{eq:QF-Ov}
\tag{QF-Ov}
\boxed{(1-\xi(\sigma_{ABC}))\ent{\rho_{ABC}}{\sigma_{ABC}} \le \entA {AB} {\rho_{ABC}} {\sigma_{ABC}} + \entA {BC} {\rho_{ABC}} {\sigma_{ABC}}},
\end{equation}
for every $\rho_{ABC}, \sigma_{ABC} \in \SSS_{ABC}$, where $\xi(\sigma_{ABC})$ depends only on $\sigma_{ABC}$ and measures how far $\sigma_{AC}$ is from $\sigma_A \otimes \sigma_C$.

Results of this class constitute quantum analogues to Lemma \ref{lemma:clasico}. We will show below, and in Section \ref{subsec:QF-CRE}, some examples. 

For the second class of results of quasi-factorization, we assume that the regions where we are conditioning the relative entropy in the RHS do not overlap. Thus, imposing strong conditions on the second state, we are able to obtain  quasi-factorization results conditioning the relative entropy to a bigger number of regions. More specifically, for a $n$-partite Hilbert space 
$\hs_{A_1 \ldots A_n}= \un{i=1}{\ov{n}{\bigotimes}} \hs_{A_i} $, we prove results of the kind 
\begin{equation}\label{eq:QF-NonOv}
\tag{QF-NonOv}
\boxed{(1-\xi(\sigma_{A_1 \ldots A_n}))\ent{\rho_{A_1 \ldots A_n}}{\sigma_{A_1 \ldots A_n}} \le  \un{i=1}{\ov{n}{\sum}} \entA {A_i} {\rho_{A_1 \ldots A_n}} {\sigma_{A_1 \ldots A_n}}},
\end{equation}
for every $\rho_{A_1 \ldots A_n}, \sigma_{A_1 \ldots A_n} \in \SSS_{A_1 \ldots A_n}$, where $\xi(\sigma_{A_1 \ldots A_n})$ depends only on $\sigma_{A_1 \ldots A_n}$ and measures in some way how far it is from being a tensor product.

Some examples of results of this class will be presented in this subsection and subsequent ones. Indeed, the main result of Subsection \ref{subsec:QF-sigmaprod} will be used in the next section in quantum spin lattices, as the key step to prove positivity of a log-Sobolev constant.

\begin{remark}
It is clear that, whenever one has a result of the first class (\ref{eq:QF-Ov}), one can construct another one of the second class (\ref{eq:QF-NonOv}), conditioning in the RHS only in two regions, just by assuming that dim($\hs_B$)=1 in the first result. 
\end{remark}

\begin{remark}
In this and the following subsection, we will assume that $\sigma$ is always a tensor product, and, in that scenario, we have seen in Subsection \ref{subsec:comparison} that the definitions of conditional relative entropy and conditional relative entropy by expectations coincide. Hence, we can simply state that, except for Subsection \ref{subsec:QF-CREexp}, all the results of quasi-factorization presented in this paper concern conditional relative entropies, according to Definition \ref{def:CRE}.
\end{remark}

We will now present some results of quasi-factorization and classify them into the two classes mentioned above. Let us separate the study in different cases in increasing order of difficulty. First, we start showing some results that can be proven directly from the properties in the axiomatic definition of conditional relative entropy. When both states are products, we have the following possibilities:

\begin{enumerate}
\item \underline{$\text{dim}(\hs_B)=1$, $\rho_{AC}= \rho_A \otimes \rho_C$ and $\sigma_{AC}= \sigma_A \otimes \sigma_C$:}

\vspace{0.2cm}

From the property of additivity of Proposition \ref{prop:D+}, we can see that
\begin{center}
$\ds D^+_{A,C} ({\rho_{A}\otimes \rho_C} ||{\sigma_{A} \otimes \sigma_C}) = \ent {\rho_{A}} {\sigma_{A}}  + \ent {\rho_{C}} {\sigma_{C}} = \ent   {\rho_{A}\otimes \rho_C} {\sigma_{A}\otimes \sigma_C}.$
\end{center}

Hence, in this case,
\begin{center}
$\ent{\rho_{AC}}{\sigma_{AC}} = \entA {A} {\rho_{AC}} {\sigma_{AC}} + \entA {C} {\rho_{AC}} {\sigma_{AC}}$
\end{center}

constitutes the simplest result of quasi-factorization of both (\ref{eq:QF-Ov}) and (\ref{eq:QF-NonOv}).

\vspace{0.2cm}

\item \underline{Arbitrary dimension of $\hs_B$, $\rho_{ABC}= \rho_A \otimes \rho_B \otimes \rho_C$ and $\sigma_{ABC}= \sigma_A \otimes \sigma_B \otimes \sigma_C$:}

\vspace{0.2cm}

This case is an extension of the previous one.  We have  
\begin{center}
$ \ds \ent{\rho_{ABC}}{\sigma_{ABC}} = \entA {A} {\rho_{ABC}} {\sigma_{ABC}}+\entA {B} {\rho_{ABC}} {\sigma_{ABC}} + \entA {C} {\rho_{ABC}} {\sigma_{ABC}},  $
\end{center}

which is clearly a result of (\ref{eq:QF-NonOv}).

\vspace{0.2cm}

\item \underline{In general, for $n \in \N$, $\hs_{A_1 \ldots A_n}= \un{i=1}{\ov{n}{\bigotimes}} \hs_{A_i} $, $\rho_{A_1 \ldots A_n}= \un{i=1}{\ov{n}{\bigotimes}} \rho_{A_i}$ and $\sigma_{A_1 \ldots A_n}= \un{i=1}{\ov{n}{\bigotimes}} \sigma_{A_i}$ :}

\vspace{0.2cm}

This case is a generalization of the previous one.  Because of the property of semi-additivity, we clearly have  
\begin{center}
$ \ds \ent{\rho_{A_1 \ldots A_n}}{\sigma_{A_1 \ldots A_n}} =  \un{i=1}{\ov{n}{\sum}} \entA {A_i} {\rho_{A_1 \ldots A_n}} {\sigma_{A_1 \ldots A_n}},  $
\end{center}

which is a result of (\ref{eq:QF-NonOv}).

\vspace{0.2cm}

\item \underline{The regions $AB$ and $BC$, $\rho_{ABC}= \rho_A \otimes \rho_B \otimes \rho_C$ and $\sigma_{ABC}= \sigma_A \otimes \sigma_B \otimes \sigma_C$:} 

\vspace{0.2cm}

Under these assumptions we have
\begin{center}
$\ds D^+_{AB,BC} ({\rho_{ABC}} ||{\sigma_{ABC}}) = \ent {\rho_{A}} {\sigma_{A}}  + 2 \ent {\rho_{B}} {\sigma_{B}}+\ent {\rho_{C}} {\sigma_{C}} \geq \ent   {\rho_{ABC}} {\sigma_{ABC}},$
\end{center}

where the last inequality comes from the additivity and non-negativity of the relative entropy. Hence, 
\begin{center}
$\ds  \ent{\rho_{ABC}}{\sigma_{ABC}} \le \entA {AB} {\rho_{ABC}} {\sigma_{ABC}} + \entA {BC} {\rho_{ABC}} {\sigma_{ABC}} $
\end{center}

is a result of (\ref{eq:QF-Ov}).

\end{enumerate}
 
 \begin{remark}
Notice that, in the previous four cases, the term $\xi(\sigma_{ABC})$ does not appear in the quasi-factorization result. This is something reasonable, since this term should measure how far $\sigma_{AC}$ is from $\sigma_A \otimes \sigma_C$ and, by assumption, in this case, this `distance' is zero.   
 \end{remark}

Let us now consider again a tripartite Hilbert space, relax the assumption on $\rho_{ABC}$ and assume only  $\sigma_{ABC}= \sigma_A \otimes \sigma_B \otimes \sigma_C$. Without imposing any condition on $\rho_{ABC}$, we are not able to obtain results of quasi-factorization from properties (1)-(3) in Definition \ref{def:CRE} (and, thus, fulfilled by both the conditional relative entropy and the conditional relative entropy by expectations) as we have just done above. 

However, for the conditional relative entropy  (and  the conditional relative entropy by expectations), we have the following property: If $\sigma_{ABC}= \sigma_A \otimes \sigma_B \otimes \sigma_C$, it is easy to check that the following holds:
\begin{equation}\label{eq:prop-sigma-prod}
\entA A {\rho_{ABC}} {\sigma_{ABC}} =   I_\rho (A : BC) + \ent {\rho_A} {\sigma_A} .
\end{equation}

Indeed, it was explicitely proven in Subsection \ref{subsec:comparison}.

Taking into account this property, for subregions  $AB$ and $BC$, we present another quasi-factorization result of the kind (\ref{eq:QF-Ov}).

\begin{prop}\label{prop:part-qf}
Let $\hs_{ABC}= \hs_A \otimes \hs_B \otimes \hs_C$ and $\rho_{ABC}, \sigma_{ABC} \in \SSS_{ABC}$ such that $\sigma_{ABC}= \sigma_A \otimes \sigma_B \otimes \sigma_C$. The following inequality holds:
\begin{equation}
\ent{\rho_{ABC}}{\sigma_{ABC}} \leq \entA {AB} {\rho_{ABC}} {\sigma_{ABC}}+\entA {BC} {\rho_{ABC}} {\sigma_{ABC}} .
\end{equation}
\end{prop}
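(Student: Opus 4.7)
The plan is to reduce the inequality, via the explicit formula of Theorem~\ref{thm:formula}, to a known statement about the ordinary relative entropy, and then to close the gap using only monotonicity and superadditivity (Proposition~\ref{prop:REprop}).

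First, I would rewrite both conditional relative entropies on the right-hand side using Theorem~\ref{thm:formula}:
\begin{align*}
\entA{AB}{\rho_{ABC}}{\sigma_{ABC}} &= \ent{\rho_{ABC}}{\sigma_{ABC}} - \ent{\rho_C}{\sigma_C},\\
\entA{BC}{\rho_{ABC}}{\sigma_{ABC}} &= \ent{\rho_{ABC}}{\sigma_{ABC}} - \ent{\rho_A}{\sigma_A}.
\end{align*}
Adding these and subtracting $\ent{\rho_{ABC}}{\sigma_{ABC}}$, the claimed inequality becomes equivalent to
\begin{equation*}
\ent{\rho_A}{\sigma_A} + \ent{\rho_C}{\sigma_C} \le \ent{\rho_{ABC}}{\sigma_{ABC}}.
\end{equation*}
So the entire content of the proposition is this last two-line bound; the hypothesis $\sigma_{ABC} = \sigma_A \otimes \sigma_B \otimes \sigma_C$ has not yet been used and this is where it enters.

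The second step is to establish this inequality in two moves. Since $\tr_B$ is a quantum channel, monotonicity of the relative entropy (property (4) of Proposition~\ref{prop:REprop}) applied to $\rho_{ABC}$ and $\sigma_A \otimes \sigma_B \otimes \sigma_C$ yields
\begin{equation*}
\ent{\rho_{ABC}}{\sigma_A \otimes \sigma_B \otimes \sigma_C} \ge \ent{\rho_{AC}}{\sigma_A \otimes \sigma_C}.
\end{equation*}
Then superadditivity of the relative entropy against the product state $\sigma_A \otimes \sigma_C$ (property (6) of Proposition~\ref{prop:REprop}) gives
\begin{equation*}
\ent{\rho_{AC}}{\sigma_A \otimes \sigma_C} \ge \ent{\rho_A}{\sigma_A} + \ent{\rho_C}{\sigma_C}.
\end{equation*}
Chaining these two bounds produces exactly the inequality needed, and reinserting it into the rewriting from the first step completes the proof.

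There is no real obstacle here; the work is purely bookkeeping. The one conceptual point worth emphasising is that the role of the product structure of $\sigma_{ABC}$ is precisely to provide the product reference state $\sigma_A \otimes \sigma_C$ on which superadditivity can be invoked, which is also the reason no correction factor $\xi(\sigma_{ABC})$ appears: here $\sigma_{AC} = \sigma_A \otimes \sigma_C$ exactly, so that ``distance'' is zero.
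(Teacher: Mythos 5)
Your proof is correct, but it follows a slightly different route from the one the paper uses for this proposition. You invoke the explicit formula of Theorem~\ref{thm:formula} to cancel the two copies of $\ent{\rho_{ABC}}{\sigma_{ABC}}$ on the right-hand side, reducing the claim to the clean inequality $\ent{\rho_A}{\sigma_A}+\ent{\rho_C}{\sigma_C}\le \ent{\rho_{ABC}}{\sigma_{ABC}}$, which you then get from monotonicity under $\tr_B$ followed by superadditivity on the $A$--$C$ marginal. The paper instead works from the identity $\entA{AB}{\rho_{ABC}}{\sigma_{ABC}}= I_\rho(AB:C)+\ent{\rho_{AB}}{\sigma_A\otimes\sigma_B}$ (valid because $\sigma$ is a product), lower-bounds the sum using monotonicity and additivity to reassemble $\ent{\rho_{ABC}}{\sigma_{ABC}}$ via the chain rule $\ent{\rho_{ABC}}{\sigma_{ABC}}=I_\rho(BC:A)+\ent{\rho_A\otimes\rho_{BC}}{\sigma_{ABC}}$, and finally discards the leftover $I_\rho(AB:C)\ge 0$. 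The ingredients are the same (monotonicity, additivity, superadditivity, non-negativity), but your reduction is more direct and is in fact exactly the specialization to $H(\sigma_{AC})=0$ of the argument the paper uses later for the general Theorem~\ref{thm:qfAB}; the paper itself remarks that this proposition is a particular case of that theorem. What the paper's longer route buys is that it illustrates how far one can get using only the structural property \eqref{eq:prop-sigma-prod} of the conditional relative entropy, which is the point of that part of the section; what yours buys is brevity and a transparent identification of where the product hypothesis on $\sigma_{ABC}$ enters.
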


\begin{proof}
Due to property (\ref{eq:prop-sigma-prod}),  for both definitions we have:
\begin{align*}
& D^+_{AB,BC}(\rho_{ABC} || \sigma_A \otimes \sigma_B \otimes \sigma_C)  \\
& \phantom{sdsd} = \entA {AB} {\rho_{ABC}} {\sigma_A \otimes \sigma_B \otimes \sigma_C} +  \entA {BC} {\rho_{ABC}} {\sigma_A \otimes \sigma_B \otimes \sigma_C} \\
& \phantom{sdsd} = I_\rho(AB:C) + \ent {\rho_{AB}} {\sigma_{A}\otimes \sigma_{B}} + I_\rho(BC:A) + \ent {\rho_{BC}} {\sigma_{B}\otimes \sigma_{C}}. 
\end{align*}

Now, because of monotonicity of the relative entropy with respect to the partial trace and additivity,
\begin{align*}
\ds \ent {\rho_{AB}} {\sigma_{A}\otimes \sigma_{B}} + \ent {\rho_{BC}} {\sigma_{B}\otimes \sigma_{C}}& \geq \ent {\rho_{A}} {\sigma_{A}} + \ent {\rho_{BC}} {\sigma_{B}\otimes \sigma_{C}} \\
 &\geq \ent {\rho_{A} \otimes \rho_{BC}} {\sigma_{A} \otimes \sigma_{B}\otimes \sigma_{C}},
\end{align*}
and adding this term to $I_\rho(BC:A)$, we have:
\begin{align*}
I_\rho(BC:A) & + \ent {\rho_{A} \otimes \rho_{BC}} {\sigma_{A} \otimes \sigma_{B}\otimes \sigma_{C}} \\
& =   \ent {\rho_{ABC}} {\rho_{A} \otimes \rho_{BC}}  + \ent {\rho_{A} \otimes \rho_{BC}} {\sigma_{A} \otimes \sigma_{B}\otimes \sigma_{C}}\\
& = \ent {\rho_{ABC}} {\sigma_A \otimes \sigma_B \otimes \sigma_C}.
\end{align*}

Therefore, 
\begin{align*}
&D^+_{AB,BC}(\rho_{ABC} || \sigma_A \otimes \sigma_B \otimes \sigma_C)  \geq \\
&\phantom{asasasasaaaa} \geq  I_\rho(AB:C) + I_\rho(BC:A)  + \ent {\rho_{A} \otimes \rho_{BC}} {\sigma_{A} \otimes \sigma_{B}\otimes \sigma_{C}} \\
& \phantom{asasasasaaaa} =  I_\rho(AB:C) + \ent {\rho_{ABC}} {\sigma_A \otimes \sigma_B \otimes \sigma_C} \\
& \phantom{asasasasaaaa} \geq \ent {\rho_{ABC}} {\sigma_A \otimes \sigma_B \otimes \sigma_C} .
\end{align*}
\end{proof}

We will show a more general version of this proposition, when $\sigma$ is not a tensor product, in Subsection \ref{subsec:QF-CRE}.

Considering now the regions $A$, $B$ and $C$, and again due to property (\ref{eq:prop-sigma-prod}), we can prove the following result of (\ref{eq:QF-NonOv}).
\begin{prop}\label{prop:part-qf2}
Let $\hs_{ABC}= \hs_A \otimes \hs_B \otimes \hs_C$ and $\rho_{ABC}, \sigma_{ABC} \in \SSS_{ABC}$ such that $\sigma_{ABC}= \sigma_A \otimes \sigma_B \otimes \sigma_C$. The following inequality holds:
\begin{equation}
\ent{\rho_{ABC}}{\sigma_{ABC}} \leq \entA {A} {\rho_{ABC}} {\sigma_{ABC}}+\entA {B} {\rho_{ABC}} {\sigma_{ABC}} + \entA {C} {\rho_{ABC}} {\sigma_{ABC}}.
\end{equation}
\end{prop}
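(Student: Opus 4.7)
The plan is to mimic the proof of Proposition \ref{prop:part-qf}, exploiting the fully product structure of $\sigma_{ABC}$. Because $\sigma_{ABC}=\sigma_A\otimes\sigma_B\otimes\sigma_C$, identity (\ref{eq:prop-sigma-prod}) applies symmetrically to each single-site conditional relative entropy: for each $X\in\{A,B,C\}$ with complement $\bar X$ in $\{A,B,C\}$, one obtains $D_X(\rho_{ABC}||\sigma_{ABC})=I_\rho(X:\bar X)+\ent{\rho_X}{\sigma_X}$. The first step is to sum these three equalities, which expresses the right-hand side of the claim as
\begin{equation*}
\sum_{X}D_X(\rho_{ABC}||\sigma_{ABC})=I_\rho(A:BC)+I_\rho(B:AC)+I_\rho(C:AB)+\ent{\rho_A}{\sigma_A}+\ent{\rho_B}{\sigma_B}+\ent{\rho_C}{\sigma_C}.
\end{equation*}

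The second step is to expand the left-hand side via additivity of the relative entropy (Proposition \ref{prop:REprop}(5)): since $\sigma_{ABC}$ is a product,
\begin{equation*}
\ent{\rho_{ABC}}{\sigma_{ABC}}=\ent{\rho_{ABC}}{\rho_A\otimes\rho_B\otimes\rho_C}+\ent{\rho_A}{\sigma_A}+\ent{\rho_B}{\sigma_B}+\ent{\rho_C}{\sigma_C}.
\end{equation*}
Cancelling the matching single-site relative entropies against those produced in the previous display reduces the proposition to the purely entropic inequality
\begin{equation*}
\ent{\rho_{ABC}}{\rho_A\otimes\rho_B\otimes\rho_C}\le I_\rho(A:BC)+I_\rho(B:AC)+I_\rho(C:AB).
\end{equation*}

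The final step is to rewrite every term in terms of von Neumann entropies of $\rho$ and its marginals, at which point the inequality collapses to $S(\rho_{AB})+S(\rho_{AC})+S(\rho_{BC})\ge 2\,S(\rho_{ABC})$. This follows from one application of strong subadditivity, $S(\rho_{AB})+S(\rho_{BC})\ge S(\rho_B)+S(\rho_{ABC})$, combined with ordinary subadditivity on the bipartition $(AC)|B$, which yields $S(\rho_{AC})+S(\rho_B)\ge S(\rho_{ABC})$; adding these and cancelling $S(\rho_B)$ gives the bound. The only step requiring any care is recognising this final entropic inequality; the rest of the proof is bookkeeping that relies solely on Proposition \ref{prop:vNEprop}, Proposition \ref{prop:REprop}, and identity (\ref{eq:prop-sigma-prod}).
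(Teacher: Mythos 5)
Your proof is correct and follows essentially the same route as the paper: both sum the identity (\ref{eq:prop-sigma-prod}) over the three sites and reduce the claim to $\ent{\rho_{ABC}}{\rho_A\otimes\rho_B\otimes\rho_C}\le I_\rho(A:BC)+I_\rho(B:AC)+I_\rho(C:AB)$, which in von Neumann entropies is exactly $S(\rho_{AB})+S(\rho_{AC})+S(\rho_{BC})\ge 2S(\rho_{ABC})$, obtained from one strong subadditivity plus one ordinary subadditivity (the paper uses the pair sharing $C$ together with $I_\rho(AB:C)\ge 0$; you use the pair sharing $B$ together with subadditivity on $(AC)|B$ — the same argument up to relabelling). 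The only cosmetic quibble is that your expansion of $\ent{\rho_{ABC}}{\sigma_{ABC}}$ is not literally Proposition \ref{prop:REprop}(5) but the direct chain-rule computation for a product second argument, which the paper itself carries out in Subsection \ref{subsec:comparison}.
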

\begin{proof}

Analogously to the definition of $D^+_{A,B}$, we define $D^+_{A,B,C}$:
\begin{align*}
& D^+_{A,B,C}(\rho_{ABC} || \sigma_A \otimes \sigma_B \otimes \sigma_C):= \\
& \phantom{sd}= \entA A {\rho_{ABC}} {\sigma_A \otimes \sigma_B \otimes \sigma_C} +  \entA B {\rho_{ABC}} {\sigma_A \otimes \sigma_B \otimes \sigma_C} + \entA C {\rho_{ABC}} {\sigma_A \otimes \sigma_B \otimes \sigma_C} \\
 & \phantom{sd}= I_\rho(A:BC) + \ent {\rho_{A}} {\sigma_{A}} + I_\rho(B:AC) + \ent {\rho_{B}} {\sigma_{B}} +  I_\rho(C:AB) + \ent {\rho_{C}} {\sigma_{C}}.
\end{align*}
Now, we have the following lower bound for the mutual informations:
\begin{align*}
& I_\rho(A:BC)+ I_\rho(B:AC)+  I_\rho(C:AB) = \\ 
& \phantom{as}=\tr  \left[ \rho_{ABC} \left(3 \log \rho_{ABC} - \log {\rho_A \otimes \rho_{BC}}  -\log {\rho_B \otimes \rho_{AC}} - \log {\rho_C \otimes \rho_{AB}} \right) \right] \\
& \phantom{as}=\tr  \left[ \rho_{ABC} \left(3 \log \rho_{ABC} - \log {\rho_A \otimes \rho_{B}\otimes \rho_{C}} - \log { \rho_{BC}}  -\log { \rho_{AC}} - \log { \rho_{AB}} \right) \right] \\
& \phantom{as}= \ent {\rho_{ABC}} {\rho_A \otimes \rho_B \otimes \rho_C} + \tr  \left[ \rho_{ABC} \left(2 \log \rho_{ABC} - \log { \rho_{BC}}  -\log { \rho_{AC}} - \log { \rho_{AB}} \right) \right] \\
& \phantom{as}= \ent {\rho_{ABC}} {\rho_A \otimes \rho_B \otimes \rho_C} + \tr  \left[ \rho_{ABC} \left( \log \rho_{ABC} - \log { \rho_{BC}}  -\log { \rho_{AC}} +\log \rho_C  \right) \right] \\
& \phantom{assd} + \tr[\rho_{ABC} ( \log \rho_{ABC} - \log \rho_{AB} \otimes \rho_C)] \\
&  \phantom{as} \geq  \ent {\rho_{ABC}} {\rho_A \otimes \rho_B \otimes \rho_C}+ I_{\rho}(AB:C) \\
&  \phantom{as} \geq  \ent {\rho_{ABC}} {\rho_A \otimes \rho_B \otimes \rho_C},
\end{align*}
where we have used strong subadditivity for the von Neumann entropy and non-negativity for the relative entropy.

Therefore,
\begin{align*}
& D^+_{A,B,C}(\rho_{ABC} || \sigma_A \otimes \sigma_B \otimes \sigma_C) \\
& \phantom{sdsd} \geq \ent {\rho_{ABC}} {\rho_A \otimes \rho_B \otimes \rho_C} + \ent {\rho_{A}} {\sigma_{A}} +\ent {\rho_{B}} {\sigma_{B}}  + \ent {\rho_{C}} {\sigma_{C}} \\
& \phantom{sdsd} = \ent {\rho_{ABC}} {\sigma_A \otimes \sigma_B \otimes \sigma_C}.
\end{align*}
\end{proof}

If we consider two non-overlapping subregions instead of three in the RHS, the quasi-factorization result of Subsection \ref{subsec:QF-CREexp} constitutes a generalization of this proposition, when $\sigma$ is not a tensor product, for the conditional relative entropy by expectations. Moreover, if in the quasi-factorization result of Subsection \ref{subsec:QF-CRE} we assume $\text{dim}(\hs_B)=1$, that result is also a generalization  of this proposition for two subregions when $\sigma$ is not necessarily a tensor product, for the conditional relative entropy. In both results, we will need the explicit expressions of conditional relative entropy and conditional relative entropy by expectations, respectively, oppositely to the cases mentioned above, where we obtained quasi-factorization results just from some properties of the definitions.

Concerning the number of subregions (and, thus, number of conditional relative entropies in the RHS), this proposition can be generalized to $n$-partite Hilbert spaces. We will show that in the following subsection.

\subsection{General quasi-factorization for $\sigma$ a tensor product}\label{subsec:QF-sigmaprod}

In this subsection, we show that, imposing strong conditions on the second state, we manage to prove a quasi-factorization of the relative entropy in terms of many more conditional relative entropies. Instead of tripartite states, we consider now multipartite ones. To simplify notation, let $\hs_\Lambda= \un{x\in \Lambda}{\bigotimes} \hs_x $ be a multipartite Hilbert space, and let $\rho_\Lambda, \sigma_\Lambda \in \SSS_\Lambda$. We will prove that the relative entropy of both states is upper bounded by the sum of all the conditional relative entropies in every $x \in \Lambda$. The multiplicative error term again disappears, since the state considered here is product.

\begin{theorem}\label{thm:factorization}
Let $\Lambda$ be a finite set and let $\rho_{\Lambda}, \sigma_{\Lambda} \in \SSS_\Lambda$ such that $\ds \sigma_\Lambda = \un{x \in \Lambda}{\bigotimes} \, \sigma_x$. The following inequality holds:
\begin{equation}\label{eq:factorization}
 \ent {\rho_{\Lambda}} {\sigma_{\Lambda}} \leq \un{x \in \Lambda}{\sum} \entA x {\rho_{\Lambda}} {\sigma_{\Lambda}}.
\end{equation}

\end{theorem}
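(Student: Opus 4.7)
The plan is to reduce \eqref{eq:factorization} to an inequality purely about the von Neumann entropies of the marginals of $\rho_\Lambda$, and then invoke strong subadditivity. By Theorem \ref{thm:formula}, for every $x \in \Lambda$,
\[
\entA{x}{\rho_\Lambda}{\sigma_\Lambda} = \ent{\rho_\Lambda}{\sigma_\Lambda} - \ent{\rho_{\Lambda \setminus x}}{\sigma_{\Lambda \setminus x}}.
\]
Summing over $x \in \Lambda$ and setting $n = |\Lambda|$, the desired inequality \eqref{eq:factorization} is equivalent to
\[
(n-1)\, \ent{\rho_\Lambda}{\sigma_\Lambda} \geq \sum_{x \in \Lambda} \ent{\rho_{\Lambda \setminus x}}{\sigma_{\Lambda \setminus x}}.
\]

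Next I would exploit the tensor product assumption $\sigma_\Lambda = \bigotimes_{x \in \Lambda} \sigma_x$. For any subset $S \subseteq \Lambda$ this gives
\[
\ent{\rho_S}{\sigma_S} = -S(\rho_S) - \sum_{y \in S} \tr[\rho_y \log \sigma_y].
\]
Plugging this into both sides of the previous display, each single-site term $\tr[\rho_y \log \sigma_y]$ appears with total coefficient $-(n-1)$ on the left-hand side and also on the right-hand side (since each $y$ belongs to $\Lambda \setminus x$ for the $n-1$ values $x \neq y$), so these cross-terms cancel. What remains is the purely entropic inequality
\[
\sum_{x \in \Lambda} S(\rho_{\Lambda \setminus x}) \geq (n-1)\, S(\rho_\Lambda),
\]
which is Han's inequality for the von Neumann entropy.

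Finally, Han's inequality is a standard consequence of strong subadditivity (item 6 of Proposition \ref{prop:vNEprop}): for any two distinct $x, x' \in \Lambda$, applying strong subadditivity to the tripartition $(x,\, \Lambda \setminus \{x,x'\},\, x')$ yields
\[
S(\rho_\Lambda) + S(\rho_{\Lambda \setminus \{x,x'\}}) \leq S(\rho_{\Lambda \setminus x}) + S(\rho_{\Lambda \setminus x'}),
\]
and a short induction on $n$ then delivers the desired bound. The only delicate point in the argument is the bookkeeping of multiplicities when $\sigma_\Lambda$ is expanded as a product; the genuine mathematical content is entirely packaged in strong subadditivity, so I do not anticipate any substantive obstacle.
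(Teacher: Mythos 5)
Your proof is correct and follows essentially the same route as the paper: expand each $\entA{x}{\rho_\Lambda}{\sigma_\Lambda}$ via the explicit formula of Theorem \ref{thm:formula}, use the product structure of $\sigma_\Lambda$ to cancel all $\sigma$-dependent terms, and reduce the claim to the purely entropic inequality $\sum_{x}S(\rho_{\Lambda\setminus x})\geq(|\Lambda|-1)S(\rho_\Lambda)$. The paper invokes this as a special case of the quantum Shearer inequality (Lemma \ref{lemma:shearer}, itself a consequence of strong subadditivity), whereas you re-derive it directly from strong subadditivity as Han's inequality; this is the same mathematical content.
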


The proof of this theorem is based on the following result:

\begin{lemma}\label{lemma:shearer}
Let $\Lambda$ be a finite set and $\rho_{\Lambda} \in \SSS_\Lambda$. The following inequality holds:
\begin{equation}\label{eq:shearer}
 S({\rho_{\Lambda}}) \geq \un{x\in \Lambda}{\sum} S(x | x^c)_\rho,
\end{equation}
where $S(x | x^c)_\rho$ is the conditional entropy:
\begin{center}
$S(x | x^c)_\rho= S(\rho_{\Lambda}) - S(\rho_{x^c})$.
\end{center}
\end{lemma}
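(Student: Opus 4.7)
Rewriting $S(x\mid x^c)_\rho = S(\rho_\Lambda) - S(\rho_{x^c})$, the inequality~\eqref{eq:shearer} is equivalent to
\[
 \sum_{x \in \Lambda} S(\rho_{x^c}) \geq (\abs{\Lambda}-1)\, S(\rho_\Lambda),
\]
which is the quantum analogue of Han's inequality. My plan is to derive it directly from the chain rule for the von Neumann entropy together with the strong subadditivity property already listed in Proposition~\ref{prop:vNEprop}.

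Fix an arbitrary enumeration $\Lambda = \{x_1, \ldots, x_n\}$ and, for each $i$, write $\Lambda_i = \{x_1,\ldots,x_i\}$, with the convention $\Lambda_0 = \emptyset$. The chain rule for the von Neumann entropy is the telescoping identity
\[
 S(\rho_\Lambda) = \sum_{i=1}^n \bigl[ S(\rho_{\Lambda_i}) - S(\rho_{\Lambda_{i-1}}) \bigr] = \sum_{i=1}^n S(x_i \mid \Lambda_{i-1})_\rho,
\]
where $S(A\mid B)_\rho := S(\rho_{AB}) - S(\rho_B)$. The next step is to observe that strong subadditivity is equivalent to the statement that conditioning reduces the (quantum) entropy: for any disjoint $A$, $B$, $C$, one has $S(A\mid B)_\rho \geq S(A\mid BC)_\rho$, which is just the rearrangement of $S(\rho_{AB}) + S(\rho_{BC}) \geq S(\rho_{ABC}) + S(\rho_B)$.

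Applying this with $A = \{x_i\}$, $B = \Lambda_{i-1}$ and $C = \Lambda \setminus \Lambda_i = x_i^c \setminus \Lambda_{i-1}$, every summand in the chain rule is bounded below by the fully conditioned version:
\[
 S(x_i \mid \Lambda_{i-1})_\rho \;\geq\; S(x_i \mid x_i^c)_\rho.
\]
Summing over $i$ gives
\[
 S(\rho_\Lambda) \;=\; \sum_{i=1}^n S(x_i \mid \Lambda_{i-1})_\rho \;\geq\; \sum_{i=1}^n S(x_i \mid x_i^c)_\rho \;=\; \sum_{x \in \Lambda} S(x \mid x^c)_\rho,
\]
which is~\eqref{eq:shearer}. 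The only non-trivial input is strong subadditivity; the ``main obstacle'' is really just packaging it correctly, since the chain rule collapses by construction and the monotonicity of conditional entropy under further conditioning is a direct algebraic rearrangement of strong subadditivity.
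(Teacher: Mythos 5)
Your proof is correct: the chain rule telescopes as claimed, and the bound $S(x_i\mid \Lambda_{i-1})_\rho \geq S(x_i\mid x_i^c)_\rho$ is exactly the rearrangement of strong subadditivity with $A=\{x_i\}$, $B=\Lambda_{i-1}$, $C=\Lambda\setminus\Lambda_i$, so summing gives \eqref{eq:shearer}. The paper does not spell out a proof of this lemma but instead cites references whose argument is likewise based on strong subadditivity, so your write-up simply supplies the standard details of essentially the same approach.
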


This result constitutes a particular case of the quantum version of Shearer's inequality. It has been proven in several papers, such as \cite{depol-channels} and \cite{junge-palazuelos}, where the proof is based in the \textit{strong subadditivity property} of the von Neumann entropy \cite{strongsub}. 

We can now proceed to the proof of Theorem \ref{thm:factorization}.
\begin{proof}
Let us rewrite equation (\ref{eq:factorization}) as:
\begin{equation}\label{eq:factorization2}
 \ent {\rho_{\Lambda}} {\sigma_{\Lambda}} -  \un{x \in \Lambda}{\sum} \entA x {\rho_{\Lambda}} {\sigma_{\Lambda}} \leq 0 ,
\end{equation}
where $\entA x {\rho_{\Lambda}} {\sigma_{\Lambda}}$ is given by
\begin{center}
$\ds \entA x {\rho_{\Lambda}} {\sigma_{\Lambda}}= \ent {\rho_{\Lambda}} {\sigma_{\Lambda}} - \ent {\rho_{x^c}} {\sigma_{x^c}} $.
\end{center}

Hence, the left-hand side of the previous inequality can be expressed by:
\begin{align*}
& \ds \ent {\rho_{\Lambda}} {\sigma_{\Lambda}} -  \un{x \in \Lambda}{\sum} \entA x {\rho_{\Lambda}} {\sigma_{\Lambda}} =\\
&= \left( 1- \abs{\Lambda} \right) \ent {\rho_{\Lambda}} {\sigma_{\Lambda}} + \un{x \in \Lambda}{\sum} \ent {\rho_{x^c}} {\sigma_{x^c}} \\
&= \left( 1- \abs{\Lambda} \right) \tr[ \rho_\Lambda \left( \log \rho_\Lambda - \log \sigma_\Lambda + \frac{1}{1- \abs{\Lambda}} \un{x \in \Lambda}{\sum}  \log \rho_{x^c}  -  \frac{1}{1- \abs{\Lambda}} \un{x \in \Lambda}{\sum}  \log  \sigma_{x^c}  \right)  ].
\end{align*}

If we now consider only the terms concerning $ \sigma_\Lambda$, using the fact that  $\ds  \sigma_\Lambda = \un{x \in \Lambda}{\bigotimes} \, \sigma_x$ we have:
\begin{align*}
\ds \left( \abs{\Lambda} - 1 \right) \log \sigma_\Lambda -& \un{x \in \Lambda}{\sum} \log \sigma_{x^c}= \\
&= \left( \abs{\Lambda} - 1 \right)  \un{x \in \Lambda}{\sum} \log \sigma_x - \un{x \in \Lambda}{\sum} \, \un{y \neq x}{\sum} \log \sigma_{y} \\
&=  \left( \abs{\Lambda} - 1 \right)  \un{x \in \Lambda}{\sum} \log \sigma_x -  \left( \abs{\Lambda} - 1 \right) \un{x \in \Lambda}{\sum} \log  \sigma_x =0.
\end{align*}

Therefore,
\begin{center}
$ \ds \ent {\rho_{\Lambda}} {\sigma_{\Lambda}} - \un{x \in \Lambda}{\sum} \entA x {\rho_{\Lambda}} {\sigma_{\Lambda}} = \left(1- \abs{\Lambda} \right) \tr[\rho_\Lambda \left( \log \rho_\Lambda  + \frac{1}{1- \abs{\Lambda}} \un{x \in \Lambda}{\sum}  \log \rho_{x^c} \right)]  $,
\end{center}
and, thus, equation (\ref{eq:factorization2})  can be rewritten as
\begin{equation}\label{eq:fact-von-neumann}
\left(  \abs{\Lambda}-1 \right)  S(\rho_\Lambda)   -  \un{x \in \Lambda}{\sum}    S(\rho_{x^c}) \leq 0,
\end{equation}
where we are denoting by $S(\rho_\Lambda)$ the von Neumann entropy of $\rho_\Lambda$.

Finally, recalling that the conditional entropy is defined as
\begin{center}
$S(x | x^c)_\rho= S(\rho_{\Lambda}) - S(\rho_{x^c})$,
\end{center}
expression (\ref{eq:fact-von-neumann}) is equivalent to (\ref{eq:shearer}), finishing thus the proof.
\end{proof}

In the following section we will use this result of quasi-factorization of the kind (\ref{eq:QF-NonOv}) to obtain a log-Sobolev constant for the heat-bath dynamics, when the fixed state of the evolution is product.

When $\sigma_{ABC}$ is not a product state, the situation is a bit more complicated. Now, a term $\xi(\sigma_{ABC})$ should appear, measuring how far $\sigma_{AC}$ is from a product state, as a multiplicative error term. In the following two subsections, we provide two results of quasi-factorization of the relative entropy, one for the conditional relative entropy and another (weaker) one for the conditional relative entropy by expectations. As we have mentioned above, for both results we will need the explicit expressions for conditional relative entropy and conditional relative entropy by expectations, respectively, as we will not be able to obtain them from the properties in the definitions.

\subsection{Quasi-factorization for the conditional relative entropy}\label{subsec:QF-CRE}

In this subsection, we present a quasi-factorization result for the relative entropy in terms of conditional relative entropies. We need to consider some overlap in the regions where we are conditioning the relative entropies of the RHS due to the envisaged applications in quantum many body systems. In virtue of the identification between quantum and classical spin systems mentioned in Subsection \ref{subsec:classical},  this result can be seen as the quantum analogue of Lemma \ref{lemma:clasico}. We will show that this result is equivalent to \cite[Theorem 1]{us}.

\begin{theorem}[Quasi-factorization for CRE]
\label{thm:qfAB}
Let $\hs_{ABC}=\hs_A \otimes \hs_B \otimes \hs_C$ be a tripartite Hilbert space and $\rho_{ABC}, \sigma_{ABC} \in \SSS_{ABC}$.
Then, the following inequality holds
\begin{equation}\label{eq:qfAB}
(1-2\norm{H(\sigma_{AC})}_\infty)\ent{\rho_{ABC}}{\sigma_{ABC}} \le \entA {AB} {\rho_{ABC}} {\sigma_{ABC}} + \entA {BC} {\rho_{ABC}} {\sigma_{ABC}},
\end{equation}
where
\begin{center}
$\ds H(\sigma_{AC})=  \sigma_{A}^{-1/2} \otimes \sigma_{C}^{-1/2}\,  \sigma_{AC} \, \sigma_{A}^{-1/2} \otimes \sigma_{C}^{-1/2} - \identity_{AC}$.
\end{center}

Note that $H(\sigma_{AC})=0$ if $\sigma_{AC}$ is a tensor product between $A$ and $C$.
\end{theorem}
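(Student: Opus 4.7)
The proof would proceed in three steps: rewriting the two conditional relative entropies using the explicit formula of Theorem~\ref{thm:formula}, reducing the resulting inequality on $\hs_{ABC}$ to a bipartite statement on $\hs_{AC}$ via the data-processing inequality, and finally estimating a logarithmic difference by means of operator-monotonicity and the definition of $H(\sigma_{AC})$.

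First I would apply Theorem~\ref{thm:formula} to write $\entA{AB}{\rho_{ABC}}{\sigma_{ABC}} = \ent{\rho_{ABC}}{\sigma_{ABC}} - \ent{\rho_C}{\sigma_C}$ and $\entA{BC}{\rho_{ABC}}{\sigma_{ABC}} = \ent{\rho_{ABC}}{\sigma_{ABC}} - \ent{\rho_A}{\sigma_A}$. Substituting these into \eqref{eq:qfAB} and rearranging, the target inequality becomes equivalent to
\begin{equation*}
\ent{\rho_A}{\sigma_A} + \ent{\rho_C}{\sigma_C} \le \bigl(1 + 2\|H(\sigma_{AC})\|_\infty\bigr)\,\ent{\rho_{ABC}}{\sigma_{ABC}}.
\end{equation*}
By the monotonicity of the relative entropy under the partial trace $\tr_B$ (Proposition~\ref{prop:REprop}(4)), the right-hand side dominates $(1+2\|H(\sigma_{AC})\|_\infty)\,\ent{\rho_{AC}}{\sigma_{AC}}$, so it suffices to prove the bipartite version on $\hs_{AC}$.

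On the bipartite system I would combine three ingredients. Superadditivity of the relative entropy (Proposition~\ref{prop:REprop}(6)) gives $\ent{\rho_A}{\sigma_A} + \ent{\rho_C}{\sigma_C} \le \ent{\rho_{AC}}{\sigma_A\otimes\sigma_C}$, while the scalar identity $\ent{\rho_{AC}}{\sigma_A\otimes\sigma_C} = \ent{\rho_{AC}}{\sigma_{AC}} + \tr[\rho_{AC}(\log\sigma_{AC} - \log(\sigma_A\otimes\sigma_C))]$ isolates a logarithmic correction. By construction, $\sigma_A^{-1/2}\otimes\sigma_C^{-1/2}\,\sigma_{AC}\,\sigma_A^{-1/2}\otimes\sigma_C^{-1/2} = I + H(\sigma_{AC})$, and positivity yields the operator sandwich $(1-\|H(\sigma_{AC})\|_\infty)\,\sigma_A\otimes\sigma_C \le \sigma_{AC} \le (1+\|H(\sigma_{AC})\|_\infty)\,\sigma_A\otimes\sigma_C$. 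Operator-monotonicity of $\log$ then controls $\log\sigma_{AC} - \log(\sigma_A\otimes\sigma_C)$ at the operator level by a scalar of order $\|H(\sigma_{AC})\|_\infty$.

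The main obstacle is promoting this crude \emph{additive} estimate of order $\|H(\sigma_{AC})\|_\infty$ into the \emph{multiplicative} factor $2\|H(\sigma_{AC})\|_\infty\,\ent{\rho_{AC}}{\sigma_{AC}}$ demanded by \eqref{eq:qfAB}: a uniform operator bound alone saturates only in the regime $\ent{\rho_{AC}}{\sigma_{AC}}\gtrsim 1$. I expect this step to rely on pairing the logarithmic correction against the mutual information $I_\rho(A:C)$ appearing with the correct sign in the telescoping identity $\ent{\rho_A}{\sigma_A}+\ent{\rho_C}{\sigma_C} - \ent{\rho_{AC}}{\sigma_{AC}} = -I_\rho(A:C) + \tr[\rho_{AC}(\log\sigma_{AC}-\log(\sigma_A\otimes\sigma_C))]$, which is the essential content of \cite[Theorem~1]{us}; as noted by the authors, the present statement is equivalent to that theorem, so the proof could also simply invoke it after the rewriting above.
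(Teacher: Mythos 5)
Your reduction---rewriting the two conditional relative entropies via Theorem~\ref{thm:formula} to obtain $\ent{\rho_A}{\sigma_A}+\ent{\rho_C}{\sigma_C}\le(1+2\norm{H(\sigma_{AC})}_\infty)\ent{\rho_{ABC}}{\sigma_{ABC}}$ and then passing to $\hs_{AC}$ by monotonicity under $\tr_B$---is exactly the paper's proof, which at that point simply invokes \cite[Theorem 1]{us} rather than reproving it. Your intermediate attempt to establish the bipartite core directly via operator monotonicity of $\log$ indeed does not close (the additive-versus-multiplicative obstacle you identify is real, and the proof in \cite{us} uses a Lieb/Golden--Thompson-type argument instead), but since you ultimately fall back on citing that theorem, your argument is complete and coincides with the paper's.
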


\begin{proof}

Let us recall \cite[Theorem 1]{us} for $\hs_{AC}= \hs_A \otimes \hs_C$ :

\begin{thm}\label{thm:superadditivity}
 For any bipartite states $\rho_{AC},\sigma_{AC}$:
\begin{equation}\label{eq:superadditivity}
(1+2\|H(\sigma_{AC})\|_{\infty})\ent{\rho_{AC}}{\sigma_{AC}}\ge \ent{\rho_A}{\sigma_A} + \ent{\rho_C}{\sigma_C},
\end{equation}
where
\begin{center}
$H(\sigma_{AC}) =  \sigma_A^{-1/2} \otimes \sigma_C^{-1/2} \, \sigma_{AC} \, \sigma_A^{-1/2} \otimes \sigma_C^{-1/2} - \identity_{AC}$,
\end{center}
and $\identity_{AC} $ denotes the identity operator in $\hs_{AC}$.

Note that $H(\sigma_{AC})=0$ if $\sigma_{AC}=\sigma_A\otimes \sigma_C$.
\end{thm} 

It is enough to prove the equivalence between Theorem \ref{thm:superadditivity} and Theorem \ref{thm:qfAB}.

\underline{Th. \ref{thm:qfAB} $\Rightarrow$  Th. \ref{thm:superadditivity} :} Let $\rho_{ABC}, \sigma_{ABC} \in \SSS_{ABC}$. Then, 
\begin{align*}
(1-2\norm{H(\sigma_{AC})}_\infty)\ent{\rho_{ABC}}{\sigma_{ABC}} & \le \entA {AB} {\rho_{ABC}} {\sigma_{ABC}} + \entA {BC} {\rho_{ABC}} {\sigma_{ABC}}\\ & = 2 \ent {\rho_{ABC}} {\sigma_{ABC}}-  \ent {\rho_{C}} {\sigma_{C}} -  \ent {\rho_{A}} {\sigma_{A}}.
\end{align*}

Rewriting this to have something more similar to inequality (\ref{eq:superadditivity}), we have
\begin{center}
 $\ds (1+2\|H(\sigma_{AC})\|_{\infty})\ent{\rho_{ABC}}{\sigma_{ABC}}\ge \ent{\rho_A}{\sigma_A} + \ent{\rho_C}{\sigma_C}$,
\end{center}
so considering a particular case in which the dimension of $\hs_B$ is $1$ (thus, $\hs_{ABC}= \hs_A \otimes \hs_C$), we have inequality (\ref{eq:superadditivity}).

\underline{ Th. \ref{thm:superadditivity} $\Rightarrow$  Th. \ref{thm:qfAB}:} From the monotonicity of the relative entropy, we know that
\begin{center}
$ \ent{\rho_{ABC}}{\sigma_{ABC}} \geq \ent{\rho_{AC}}{\sigma_{AC}} $,
\end{center}
and using this together with inequality (\ref{eq:superadditivity}), we have
\begin{center}
 $\ds (1+2\|H(\sigma_{AC})\|_{\infty})\ent{\rho_{ABC}}{\sigma_{ABC}}\ge \ent{\rho_A}{\sigma_A} + \ent{\rho_C}{\sigma_C}$,
\end{center}
which we have just seen that is a reformulation of inequality (\ref{eq:qfAB}).

\end{proof}

\begin{remark}
It is clear that Proposition \ref{prop:part-qf} constitutes a particular case of this theorem where the multiplicative error term disappears, since in that case we were considering $\sigma_{ABC}$ a tensor product. 

\end{remark}

This result is the strongest one that we present in this paper of the kind (\ref{eq:QF-Ov}). A complete proof can be consulted in \cite{us}. We leave for future work the possibility of using this result for similar purposes than Theorem \ref{thm:factorization}, under a sufficiently strong assumption on the decay of correlations in $\sigma$.

\subsection{Quasi-factorization for the conditional relative entropy by expectations}\label{subsec:QF-CREexp}

Now we move to the definition of conditional relative entropies for expectations. For this definition, we can prove the following result, which is an example of (\ref{eq:QF-NonOv}) for two subregions.

\begin{theorem}[Quasi-factorization for conditional expectations]
\label{thm:quasifactorizationexp}
Let $\hs_{AB}=\hs_A \otimes \hs_B $ be a bipartite Hilbert space and $\rho_{AB}, \sigma_{AB} \in \SSS_{AB}$.
Then, the following inequality holds
\begin{equation}\label{eq:quasifactorizationexp}
(1-\xi(\sigma_{AB}))\ent{\rho_{AB}}{\sigma_{AB}} \le \enteA {A} {\rho_{AB}} {\sigma_{AB}} + \enteA {B} {\rho_{AB}} {\sigma_{AB}},
\end{equation}
where
\begin{center}
$\ds \xi(\sigma_{AB})=  2  \left( E_1(t) +  E_2(t)  \right)     $,
\end{center}
and
\begin{center}
$\ds E_1(t) =  \int_{-\infty}^{+\infty} dt \, \beta_0 (t)  \norm{ \sigma_{B}^{\frac{-1+it}{2}}   \sigma_{AB}^{\frac{1-it}{2}}   \sigma_{A}^{\frac{-1+it}{2}}   -\identity_{AB} }_\infty \norm{  \sigma_{A}^{-1/2}  \sigma_{AB}^{\frac{1+it}{2}}  \sigma_{B}^{-1/2} }_\infty  $, \\
$\ds E_2 (t) = \int_{-\infty}^{+\infty} dt \, \beta_0 (t) \, \norm{ \sigma_{B}^{\frac{-1-it}{2}} \,   \sigma_{AB}^{\frac{1+it}{2}}  \, \sigma_{A}^{\frac{-1-it}{2}}   -\identity_{AB} }_\infty$,
\end{center}
with
 \begin{center}
 $\ds \beta_0(t)= \frac{\pi}{2} (\cosh(\pi t)+ 1)^{-1} . $
 \end{center}
Note that $\xi(\sigma_{AB})=0$ if $\sigma_{AB}$ is a tensor product between $A$ and $B$.
\end{theorem}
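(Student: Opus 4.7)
The inequality is equivalent to bounding $\ent{\rho_{AB}}{\sigma_{AB}} - \enteA{A}{\rho_{AB}}{\sigma_{AB}} - \enteA{B}{\rho_{AB}}{\sigma_{AB}}$ above by $\xi(\sigma_{AB})\ent{\rho_{AB}}{\sigma_{AB}}$. Unfolding the definitions of the conditional relative entropies by expectations and of $\E^*_A, \E^*_B$, this rearranges to
\[
\tr\!\left[\rho_{AB}\bigl(\log \E^*_A(\rho_{AB}) + \log \E^*_B(\rho_{AB}) - \log \rho_{AB} - \log \sigma_{AB}\bigr)\right] \leq \xi(\sigma_{AB})\ent{\rho_{AB}}{\sigma_{AB}}.
\]
In the product case $\sigma_{AB} = \sigma_A \otimes \sigma_B$, Subsection \ref{subsec:CREexp} gives $\E^*_A(\rho_{AB}) = \sigma_A \otimes \rho_B$ and $\E^*_B(\rho_{AB}) = \rho_A \otimes \sigma_B$, and the LHS telescopes to $-I_\rho(A{:}B) \leq 0$; so the claim holds with $\xi = 0$, and the work lies entirely in controlling the deviation when $\sigma_{AB}$ is not a product.

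The plan is to extract this deviation through a Hirschman-type integral representation whose boundary weight is precisely $\beta_0(t) = \frac{\pi}{2}(\cosh(\pi t)+1)^{-1}$, the same weight underlying the Sutter--Berta--Tomamichel strengthening of the data-processing inequality. I would apply the improved three-line lemma to an analytic function on the strip $\{0 \leq \text{Re}\,z \leq 1\}$ built from $\sigma_{AB}^{z/2}$, $\sigma_A^{-z/2}$, $\sigma_B^{-z/2}$ and $\rho_A, \rho_B$, chosen so that on the line $\text{Re}\,z = 1$ it reproduces $\log \E^*_A(\rho_{AB}) - \log \sigma_{AB}$ (with an analogous function handling the $B$ contribution), while the opposite boundary $\text{Re}\,z = 0$ contains exactly the rotated sandwiched operator $\sigma_B^{(-1\pm it)/2}\sigma_{AB}^{(1\mp it)/2}\sigma_A^{(-1\pm it)/2}$. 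This rotated operator collapses to $\identity_{AB}$ when $\sigma_{AB}=\sigma_A\otimes\sigma_B$, which is why $E_1$ and $E_2$ are written as norms of exactly such quantities minus $\identity_{AB}$.

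With the integral representation in hand, the rest is operator bookkeeping. Tracing against $\rho_{AB}$ and applying noncommutative Hölder (Proposition \ref{prop:schatten}) against $\log\rho_{AB}-\log\sigma_{AB}$ lets me pull the operator norms of the rotated sandwiches out of the integral with weight $\beta_0(t)$, while the surviving Hölder factor is bounded by $\ent{\rho_{AB}}{\sigma_{AB}}$. The auxiliary norm $\norm{\sigma_A^{-1/2}\sigma_{AB}^{(1+it)/2}\sigma_B^{-1/2}}_\infty$ appearing in $E_1(t)$ arises because one of the dual Hölder factors must be controlled in operator norm rather than cancelled outright; $E_2(t)$ comes from a symmetric piece in which this extra factor is absent. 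The prefactor $2$ in $\xi = 2(E_1+E_2)$ is produced by running the argument once for $\E^*_A$ and once for $\E^*_B$. I expect the main obstacle to be selecting the correct analytic function on the strip so that Hirschman's lemma produces precisely the rotated quantities with the correct signs of $\pm it$, and so that the Hölder step yields a factor proportional to $\ent{\rho_{AB}}{\sigma_{AB}}$ rather than to a different divergence; once those choices are fixed, the remaining estimates (unitary invariance of Schatten norms under the imaginary phases $\sigma_{AB}^{it}$, Hölder, and integrating against $\beta_0$) are routine.
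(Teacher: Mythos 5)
Your reduction is the right first move and matches the paper's Step 1: the difference $\ent{\rho_{AB}}{\sigma_{AB}} - \enteA{A}{\rho_{AB}}{\sigma_{AB}} - \enteA{B}{\rho_{AB}}{\sigma_{AB}}$ equals $-\ent{\rho_{AB}}{M}$ with $M = \exp[-\log\sigma_{AB}+\log\E_A^*(\rho_{AB})+\log\E_B^*(\rho_{AB})]$, and in the product case this collapses to $-I_\rho(A{:}B)\le 0$. From that point on, however, your plan has a genuine gap. The paper does not apply a three-line/Hirschman argument to an analytic function ``reproducing $\log\E_A^*(\rho_{AB})-\log\sigma_{AB}$ on a boundary line''; an interpolation lemma controls norms of operator-valued analytic functions, not logarithms of operators, and as described your construction has no way to get off the ground. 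The actual chain is: (i) non-negativity of the relative entropy gives $-\ent{\rho_{AB}}{M}\le\log\tr M$; (ii) Lieb's triple-matrix extension of Golden--Thompson, $\tr[\exp(-f+g+h)]\le\tr[e^h\,\mathcal T_{e^f}(e^g)]$, bounds $\tr M$ by $\tr[\E_A^*(\rho_{AB})\,\mathcal T_{\sigma_{AB}}(\E_B^*(\rho_{AB}))]$; (iii) only then does the Sutter--Berta--Tomamichel integral representation of $\mathcal T_{\sigma_{AB}}$ enter, producing the weight $\beta_0(t)$ and the rotated powers $\sigma_{AB}^{(1\pm it)/2}$. Steps (i) and (ii) are the load-bearing ideas here and are absent from your proposal.

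Second, your account of how $\ent{\rho_{AB}}{\sigma_{AB}}$ appears on the right-hand side is not correct. Applying H\"older against $\log\rho_{AB}-\log\sigma_{AB}$ does not produce the relative entropy. What the paper does is first center the marginals, replacing $\rho_A,\rho_B$ by $\rho_A-\sigma_A,\rho_B-\sigma_B$ inside the integrand (the cross terms cancel because $\tr[\rho_A-\sigma_A]=\tr[\rho_B-\sigma_B]=0$, and the same cancellation kills the pure product term, which is why $E_1,E_2$ vanish for product $\sigma_{AB}$); it then uses $\log x\le x-1$, submultiplicativity, H\"older and unitary invariance to extract the factor $\norm{\rho_A-\sigma_A}_1\norm{\rho_B-\sigma_B}_1$; and finally Pinsker's inequality together with data processing bounds this product by $2\ent{\rho_{AB}}{\sigma_{AB}}$. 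That application of Pinsker is also where the factor $2$ in $\xi=2(E_1+E_2)$ comes from --- not, as you suggest, from running the argument once for $\E_A^*$ and once for $\E_B^*$: the argument is run only once, with both conditional expectations appearing simultaneously in $M$.
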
 

This proof can be split into four steps. The first part of the proof is analogous to the one of \cite[Theorem 1]{us}, but we include it here for the sake of clearness. However, from the second half of the second step, the proof gets much more complicated, leading to the error term shown in the statement of the theorem, which, despite going in the same spirit than its analogue in  Theorem \ref{thm:qfAB}, is less intuitive.


\begin{step2}
\label{step:21}
For density matrices $\rho_{AB}, \sigma_{AB} \in \SSS_{AB}$, it holds that
\begin{equation}\label{eq:step-21}
\ent {\rho_{AB}} {\sigma_{AB}} \le \enteA{A} {\rho_{AB}} {\sigma_{AB}} + \enteA{B} {\rho_{AB}} {\sigma_{AB}} + \log \tr M,
\end{equation}
where
$  M = \exp \bqty{ - \log {\sigma_{AB}} + \log \E_{A}^*(\rho_{AB})  + \log \E_{B}^*(\rho_{AB})  } $
and equality holds (both sides being equal to zero) if $\rho_{AB} =
\sigma_{AB}$. \\
Moreover, if $\sigma_{AB} = \sigma_A \otimes \sigma_B$, then $\log \tr M =0$.
\end{step2}

From the definition of conditional relative entropy by expectations it follows that:
\begin{align*}
\ds & \ent {\rho_{AB}} {\sigma_{AB}} - \enteA{A} {\rho_{AB}} {\sigma_{AB}} - \enteA{B} {\rho_{AB}} {\sigma_{AB}}= \\ &= \ent {\rho_{AB}} {\sigma_{AB}}  - \ent {\rho_{AB}}  {\E_{A}^*(\rho_{AB})}  - \ent {\rho_{AB}}  {\E_{B}^*(\rho_{AB})}\\
&=\tr \left[ {\rho_{AB}} \left( - \log {\rho_{AB}} \underbrace{  - \log {\sigma_{AB}}  + \log \E_{A}^*(\rho_{AB})+\log \E_{B}^*(\rho_{AB}) }_{\log M} \right) \right] \\
& = - \ent {\rho_{AB}} M.
\end{align*}

Now, since $\tr[M]\neq 1$ in general,
\begin{center}
$ \ds \ent {\rho_{AB}} M = \ent {\rho_{AB}} {M/\tr[M]} -\log \tr[M] \geq  -\log \tr[M] $,
\end{center}
due to the non-negativity property of the relative entropy.

If $\rho_{AB} = \sigma_{AB}$, $\E_{A}^*(\rho_{AB})= \sigma_{AB}$, and the same for $\E_{B}^*$, so $\log M = \log \sigma_{AB}$ and both sides are equal to zero. Also, if $\sigma_{AB} = \sigma_A \otimes \sigma_B$, we have $ \E_{A}^*(\rho_{AB})= \sigma_A \otimes \rho_B$ and  $ \E_{B}^*(\rho_{AB})= \rho_A \otimes \sigma_B$, so $M=\rho_A \otimes \rho_B$. Hence, $\log \tr M= 0$. 

\begin{step2}\label{step:22}
With the same notation of step \ref{step:21}, we have that
\begin{equation}
\log \tr M \le \int_{- \infty}^{+ \infty} dt \beta_0(t) \tr[\sigma_A^{-1/2} (\rho_A - \sigma_A) \sigma_A^{-1/2} \sigma_{AB}^{\frac{1-it}{2}} \sigma_B^{-1/2} (\rho_B - \sigma_B) \sigma_B^{-1/2} \sigma_{AB}^{\frac{1+it}{2}} ],
\end{equation}
with
 \begin{center}
 $\ds \beta_0(t)= \frac{\pi}{2} (\cosh(\pi t)+ 1)^{-1}  $.
 \end{center}

\end{step2}

\begin{proof}

In the proof of this step, we make use of the following theorem of Lieb \cite{lieb}, which extends the Golden-Thompson inequality.

\begin{thm}
\label{thm:Lieb}
Let $f, g$ be Hermitian operators, and define
\begin{equation}
	\mathcal T_g(f) = \int_0^\infty \dd{t} (g+t)^{-1} f (g+t)^{-1} .
\end{equation}
$\mathcal T_g$ is positive-semidefinite if $g$ is. 

For $h$ a Hermitian operator, we have that
\begin{equation}
	\tr[ \exp(-f+g+h)] \le \tr[ e^h \mathcal T_{e^f}(e^g)].
\end{equation}
\end{thm}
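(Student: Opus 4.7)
\textbf{Proof plan for Step \ref{step:22}.}
The natural starting point is Lieb's extended Golden--Thompson inequality (Theorem \ref{thm:Lieb}), applied to $M = \exp\bqty{-\log\sigma_{AB} + \log \E_A^*(\rho_{AB}) + \log \E_B^*(\rho_{AB})}$ with the identifications $f = \log \sigma_{AB}$, $g = \log \E_A^*(\rho_{AB})$, $h = \log \E_B^*(\rho_{AB})$, followed by the elementary bound $\log x \le x - 1$. This immediately yields
\begin{equation*}
  \log \tr M \;\le\; \tr\bqty{\E_B^*(\rho_{AB})\, \mathcal{T}_{\sigma_{AB}}(\E_A^*(\rho_{AB}))} - 1,
\end{equation*}
where $\mathcal{T}_{\sigma_{AB}}(Y) = \int_0^\infty (\sigma_{AB}+t)^{-1} Y (\sigma_{AB}+t)^{-1}\, dt$. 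The rest of the argument will show that the right-hand side equals the $\beta_0$-weighted integral in the statement.

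Since $\E_A^*(\sigma_{AB}) = \sigma_{AB}$, I introduce the ``centered'' operators $\Delta_A := \E_A^*(\rho_{AB}) - \sigma_{AB} = \sigma_{AB}^{1/2}\sigma_B^{-1/2}(\rho_B - \sigma_B)\sigma_B^{-1/2}\sigma_{AB}^{1/2}$ and analogously $\Delta_B := \sigma_{AB}^{1/2}\sigma_A^{-1/2}(\rho_A - \sigma_A)\sigma_A^{-1/2}\sigma_{AB}^{1/2}$, both of which have vanishing trace. A direct computation using that $\sigma_{AB}$ commutes with $(\sigma_{AB}+t)^{-1}$ produces the two identities $\mathcal{T}_{\sigma_{AB}}(\sigma_{AB}) = \identity_{AB}$ and $\tr\bqty{\sigma_{AB}\, \mathcal{T}_{\sigma_{AB}}(Y)} = \tr Y$ for every $Y$. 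Expanding $\E_A^*(\rho_{AB}) = \sigma_{AB} + \Delta_A$ and $\E_B^*(\rho_{AB}) = \sigma_{AB} + \Delta_B$ inside the trace, the three terms that are linear in the $\Delta$'s collapse to $1 + \tr\Delta_A + \tr\Delta_B + \tr\Delta_B = 1$, leaving
\begin{equation*}
  \tr\bqty{\E_B^*(\rho_{AB})\, \mathcal{T}_{\sigma_{AB}}(\E_A^*(\rho_{AB}))} - 1 \;=\; \tr\bqty{\Delta_B\, \mathcal{T}_{\sigma_{AB}}(\Delta_A)}.
\end{equation*}

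The final and most delicate step is to rewrite $\mathcal{T}_{\sigma_{AB}}$ as a Fourier-type integral of imaginary powers of $\sigma_{AB}$. In the spectral basis $\sigma_{AB} = \sum_i \lambda_i \ketbra{i}{i}$, the scalar evaluation $\int_0^\infty[(\lambda_i + t)(\lambda_j + t)]^{-1} dt = (\log \lambda_i - \log\lambda_j)/(\lambda_i - \lambda_j)$ combined with the Fourier identity $w/\sinh w = \int_{-\infty}^{+\infty} \beta_0(t)\, e^{-iwt}\, dt$ (a standard fact that follows, e.g., from the Mittag--Leffler expansion of $w/\sinh w$ or from contour integration, and which fixes precisely the weight $\beta_0(t) = \tfrac{\pi}{2}(\cosh(\pi t)+1)^{-1}$) yields the operator identity
\begin{equation*}
  \mathcal{T}_{\sigma_{AB}}(Y) \;=\; \int_{-\infty}^{+\infty} \beta_0(t)\, \sigma_{AB}^{-(1+it)/2}\, Y\, \sigma_{AB}^{-(1-it)/2}\, dt .
\end{equation*}
Substituting $Y = \Delta_A$, the two outer $\sigma_{AB}^{1/2}$ factors of $\Delta_A$ combine with the $\sigma_{AB}^{-(1\pm it)/2}$ weights to $\sigma_{AB}^{-it/2}$ on one side and $\sigma_{AB}^{it/2}$ on the other; inserting $\Delta_B$ from the left, using cyclicity of the trace to move the stray $\sigma_{AB}^{it/2}$ past $\sigma_{AB}^{1/2}$ from $\Delta_B$, and collecting powers, the operator string simplifies to $\tr\bqty{\sigma_A^{-1/2}(\rho_A - \sigma_A)\sigma_A^{-1/2}\, \sigma_{AB}^{(1-it)/2}\, \sigma_B^{-1/2}(\rho_B - \sigma_B)\sigma_B^{-1/2}\, \sigma_{AB}^{(1+it)/2}}$, exactly matching the integrand of the statement.

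The principal obstacle is the derivation of the $\beta_0$-weighted integral representation of $\mathcal{T}_{\sigma_{AB}}$: once the Fourier identity $\hat\beta_0(w) = w/\sinh w$ is available, the remainder reduces to routine bookkeeping---Lieb's inequality, the elementary logarithm bound, the centering of $\Delta_A$ and $\Delta_B$, and trace cyclicity---without any further analytical input. It is also worth noting that equality is attained when $\rho_{AB} = \sigma_{AB}$, since then $\Delta_A = \Delta_B = 0$ and both sides vanish, consistent with the $\log\tr M=0$ case isolated in Step \ref{step:21}.
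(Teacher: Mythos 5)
Your proposal does not prove the statement it was assigned. The statement under review is Theorem \ref{thm:Lieb} itself, namely Lieb's triple-matrix extension of the Golden--Thompson inequality, $\tr[\exp(-f+g+h)] \le \tr[e^h \, \mathcal{T}_{e^f}(e^g)]$ for Hermitian $f,g,h$. Your text is instead a proof plan for Step \ref{step:22} of Theorem \ref{thm:quasifactorizationexp}, and it opens by \emph{invoking} Theorem \ref{thm:Lieb} as a known tool. In other words, the entire content of the statement is assumed rather than established; nothing in your argument bears on why $\tr[\exp(-f+g+h)]$ is controlled by $\tr[e^h \, \mathcal{T}_{e^f}(e^g)]$. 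A genuine proof of this inequality is nontrivial: the standard route (Lieb, 1973) deduces it from the concavity of the map $A \mapsto \tr[\exp(h+\log A)]$ on positive definite operators, which in turn rests on Lieb's concavity theorem for $(A,B) \mapsto \tr[K^* A^s K B^{1-s}]$; none of the ingredients you deploy (the bound $\log x \le x-1$, the centering of $\E_A^*(\rho_{AB})$ and $\E_B^*(\rho_{AB})$ around $\sigma_{AB}$, the Fourier representation of $\beta_0$) substitutes for that input. For what it is worth, the paper itself does not reprove Theorem \ref{thm:Lieb} either --- it quotes it from \cite{lieb} inside the proof of Step \ref{step:22} --- so if the task was to supply the missing proof, reproducing Lieb's concavity argument (or the multivariate trace inequalities of \cite{sutter}, which give an independent derivation) is what is required.

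As a secondary remark, the portion of your write-up that actually addresses Step \ref{step:22} is essentially sound and close to the paper's own argument: the paper likewise applies Theorem \ref{thm:Lieb} together with the representation $\mathcal{T}_g(f) = \int_{-\infty}^{+\infty} dt\, \beta_0(t)\, g^{\frac{-1-it}{2}} f g^{\frac{-1+it}{2}}$ (quoted as Lemma \ref{lemma:lieb-sutter} from \cite{sutter}) and then centers $\rho_A - \sigma_A$ and $\rho_B - \sigma_B$ by an explicit trace computation. Your variant, which centers first via the identities $\mathcal{T}_{\sigma_{AB}}(\sigma_{AB}) = \identity_{AB}$ and $\tr[\sigma_{AB}\, \mathcal{T}_{\sigma_{AB}}(Y)] = \tr Y$, is a clean equivalent. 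But this does not repair the central problem: the theorem you were asked to prove remains unproven.
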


We use an alternative definition of this superoperator to obtain a necessary tool for the proof of Step \ref{step:22}. In \cite[Lemma 3.4]{sutter}, Sutter, Berta and Tomamichel prove the following result:

\begin{lemma}\label{lemma:lieb-sutter}
For  ${f} $ a positive semidefinite operator and $g $ a Hermitian operator the following holds:
\begin{center}
$\ds \mathcal{T}_g(f)= \int_{-\infty}^{+\infty} dt \, \beta_0 (t) \, g^{\frac{-1-it}{2}} \,f \, g^{\frac{-1+it}{2}} ,$
\end{center}
with
 \begin{center}
 $\ds \beta_0(t)= \frac{\pi}{2} (\cosh(\pi t)+ 1)^{-1}  $.
 \end{center}
\end{lemma}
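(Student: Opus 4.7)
The plan is to apply Lieb's extension of the Golden--Thompson inequality (Theorem \ref{thm:Lieb}) directly to $M=\exp(-f+g+h)$ with the identifications
\begin{equation*}
f=\log\sigma_{AB},\qquad g=\log\E_{A}^*(\rho_{AB}),\qquad h=\log\E_{B}^*(\rho_{AB}),
\end{equation*}
so that $e^{f}=\sigma_{AB}$, $e^{g}=\sigma_{AB}^{1/2}\sigma_B^{-1/2}\rho_B\sigma_B^{-1/2}\sigma_{AB}^{1/2}$ and $e^{h}=\sigma_{AB}^{1/2}\sigma_A^{-1/2}\rho_A\sigma_A^{-1/2}\sigma_{AB}^{1/2}$. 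Lieb's inequality gives
$\tr M\le\tr\bigl[\E_{B}^*(\rho_{AB})\,\mathcal{T}_{\sigma_{AB}}(\E_{A}^*(\rho_{AB}))\bigr]$, after which Lemma \ref{lemma:lieb-sutter} rewrites $\mathcal{T}_{\sigma_{AB}}$ as the integral against $\beta_0(t)$.

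Next I would carry out the algebraic simplification: inside the integrand, the sandwich $\sigma_{AB}^{\frac{-1-it}{2}}\E_A^*(\rho_{AB})\sigma_{AB}^{\frac{-1+it}{2}}$ collapses to $\sigma_{AB}^{-it/2}\sigma_B^{-1/2}\rho_B\sigma_B^{-1/2}\sigma_{AB}^{it/2}$ because $\sigma_{AB}^{\frac{-1\mp it}{2}}\sigma_{AB}^{1/2}=\sigma_{AB}^{\mp it/2}$. Multiplying by $\E_{B}^*(\rho_{AB})=\sigma_{AB}^{1/2}\sigma_A^{-1/2}\rho_A\sigma_A^{-1/2}\sigma_{AB}^{1/2}$, taking the trace and cycling factors yields
\begin{equation*}
\tr M\le\int_{-\infty}^{+\infty}dt\,\beta_0(t)\,F(t),\qquad F(t):=\tr\!\bqty{\sigma_A^{-1/2}\rho_A\sigma_A^{-1/2}\sigma_{AB}^{\frac{1-it}{2}}\sigma_B^{-1/2}\rho_B\sigma_B^{-1/2}\sigma_{AB}^{\frac{1+it}{2}}}.
\end{equation*}

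To pass from $\tr M$ to $\log\tr M$ and obtain the centred form $(\rho_A-\sigma_A)$, $(\rho_B-\sigma_B)$ in the statement, I would use the elementary bound $\log x\le x-1$ and then replace $\rho_A,\rho_B$ by the differences through a bookkeeping computation: expanding the integrand with $\rho_A=(\rho_A-\sigma_A)+\sigma_A$ and $\rho_B=(\rho_B-\sigma_B)+\sigma_B$ produces $F(t)$ plus three correction terms. Each correction simplifies by cancellations such as $\sigma_A^{-1/2}\sigma_A\sigma_A^{-1/2}=\identity_A$ and $\sigma_{AB}^{\frac{1-it}{2}}\sigma_{AB}^{\frac{1+it}{2}}=\sigma_{AB}$, and by applying the identity $\tr[f_A g_{AB}]=\tr[f_A g_A]$ together with $\tr[\rho_A]=\tr[\rho_B]=\tr[\sigma_{AB}]=1$, so that the three correction terms evaluate to $-1$, $-1$, $+1$. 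Hence the centred integrand equals $F(t)-1$.

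Finally, I would compute $\int_{-\infty}^{+\infty}\beta_0(t)\,dt=1$ (using $\cosh(\pi t)+1=2\cosh^{2}(\pi t/2)$ and $\int\operatorname{sech}^2 u\,du=\tanh u$). Combining, $\log\tr M\le\tr M-1\le\int dt\,\beta_0(t)F(t)-1=\int dt\,\beta_0(t)(F(t)-1)$, which is exactly the bound in the statement. The main obstacle I anticipate is the clean bookkeeping in the simplification step: keeping track of the commutation between the imaginary-power factors $\sigma_{AB}^{\pm it/2}$ and the sandwich operators on $A$ and $B$, and verifying that all four resulting trace terms collapse as needed via the reductions $\tr_B\sigma_{AB}=\sigma_A$ and $\tr_A\sigma_{AB}=\sigma_B$.
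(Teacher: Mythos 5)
Your proposal does not prove the statement at hand. Lemma \ref{lemma:lieb-sutter} is the operator identity
\begin{equation*}
\int_0^\infty \dd{s}\,(g+s)^{-1} f (g+s)^{-1} \;=\; \int_{-\infty}^{+\infty} \dd{t}\,\beta_0(t)\, g^{\frac{-1-it}{2}}\, f\, g^{\frac{-1+it}{2}},
\end{equation*}
i.e.\ the rewriting of Lieb's superoperator $\mathcal{T}_g$ as an integral against $\beta_0$; in the paper this is simply quoted from Sutter, Berta and Tomamichel \cite{sutter} and not proved. What you have written is instead an argument for Step \ref{step:22} of Theorem \ref{thm:quasifactorizationexp} (the bound on $\log\tr M$), and in the course of it you explicitly invoke Lemma \ref{lemma:lieb-sutter} itself (``after which Lemma \ref{lemma:lieb-sutter} rewrites $\mathcal{T}_{\sigma_{AB}}$ as the integral against $\beta_0(t)$''). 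So the target identity is assumed rather than established: as a proof of the lemma the proposal is off-target and circular.

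A genuine proof goes through the spectral decomposition $g=\sum_j\lambda_j P_j$ (with $g>0$ so that the complex powers make sense). Both sides act on each block $P_j f P_k$ by scalar multiplication: the left-hand side multiplies it by $\int_0^\infty (\lambda_j+s)^{-1}(\lambda_k+s)^{-1}\dd{s}=\frac{\log\lambda_j-\log\lambda_k}{\lambda_j-\lambda_k}$, while the right-hand side multiplies it by $(\lambda_j\lambda_k)^{-1/2}\int_{-\infty}^{+\infty}\beta_0(t)e^{-i\omega t}\dd{t}$ with $\omega=\frac{1}{2}\log(\lambda_j/\lambda_k)$. The essential input is therefore the Fourier-transform identity $\int_{-\infty}^{+\infty}\beta_0(t)e^{-i\omega t}\dd{t}=\omega/\sinh\omega$, after which the two scalars agree because $\sinh\omega=(\lambda_j-\lambda_k)/(2\sqrt{\lambda_j\lambda_k})$. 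None of this appears in your proposal; the only related computation you perform, $\int_{-\infty}^{+\infty}\beta_0(t)\dd{t}=1$, is merely the special case $\omega=0$ and does not yield the identity.
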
 

Applying Theorem \ref{thm:Lieb} to inequality (\ref{eq:step-21}), we have
\begin{align*}
\tr M =&
\tr\left[ \exp( - \underbrace{\log \sigma_{AB} }_{f} +
\underbrace{\log \E_{A}^*(\rho_{AB})}_{h} +
\underbrace{\log\E_{B}^*(\rho_{AB}) }_{g} ) \right] \\
\leq & \tr[ \E_{A}^*(\rho_{AB})\mathcal T_{\sigma_{AB}} (\E_{B}^*(\rho_{AB}))],
\end{align*}
and in virtue of Lemma \ref{lemma:lieb-sutter}, 
\begin{center}
$\ds \tr M \leq \tr[ \E_{A}^*(\rho_{AB})  \int_{-\infty}^{+\infty} dt \, \beta_0 (t) \, \sigma_{AB}^{\frac{-1-it}{2}}  \E_{B}^*(\rho_{AB})  \sigma_{AB}^{\frac{-1+it}{2}}] $.
\end{center}

Now, replacing the values of $ \E_{A}^*(\rho_{AB})  $ and $\E_{B}^*(\rho_{AB}) $, and using the linearity of the trace, we have
\begin{align*}
\tr M & \leq  \int_{-\infty}^{+\infty} dt \, \beta_0 (t) \,\tr[ \E_{A}^*(\rho_{AB})   \sigma_{AB}^{\frac{-1-it}{2}}  \E_{B}^*(\rho_{AB})  \sigma_{AB}^{\frac{-1+it}{2}}] \\
& =\int_{-\infty}^{+\infty} dt \, \beta_0 (t) \tr[  \sigma_{AB}^{1/2}  \sigma_{B}^{-1/2} \rho_B \sigma_{B}^{-1/2} \sigma_{AB}^{1/2}   \sigma_{AB}^{\frac{-1-it}{2}} \sigma_{AB}^{1/2}  \sigma_{A}^{-1/2} \rho_A \sigma_{A}^{-1/2} \sigma_{AB}^{1/2}  \sigma_{AB}^{\frac{-1+it}{2}}   ] \\
& =\int_{-\infty}^{+\infty} dt \, \beta_0 (t) \tr[  \sigma_{B}^{-1/2} \rho_B \sigma_{B}^{-1/2}   \sigma_{AB}^{\frac{1-it}{2}}   \sigma_{A}^{-1/2} \rho_A \sigma_{A}^{-1/2}  \sigma_{AB}^{\frac{1+it}{2}}   ]
\end{align*}

If we substract $\sigma_B$ from $\rho_B$ and $\sigma_A$ from $\rho_A$ in the term inside the integral of the previous expression, we have
\begin{align*}
& \tr[  \sigma_{B}^{-1/2} \left( \rho_B  - \sigma_B \right) \sigma_{B}^{-1/2}   \sigma_{AB}^{\frac{1-it}{2}}   \sigma_{A}^{-1/2} \left( \rho_A - \sigma_A \right)  \sigma_{A}^{-1/2}  \sigma_{AB}^{\frac{1+it}{2}}   ] = \\
& \phantom{asdASadsa} = \tr[  \sigma_{B}^{-1/2} \rho_B \sigma_{B}^{-1/2}   \sigma_{AB}^{\frac{1-it}{2}}   \sigma_{A}^{-1/2} \rho_A \sigma_{A}^{-1/2}  \sigma_{AB}^{\frac{1+it}{2}}   ] \\
& \phantom{asdadsASaaaa} + \tr[  \sigma_{B}^{-1/2} \sigma_B \sigma_{B}^{-1/2}   \sigma_{AB}^{\frac{1-it}{2}}   \sigma_{A}^{-1/2} \sigma_A \sigma_{A}^{-1/2}  \sigma_{AB}^{\frac{1+it}{2}}   ] \\
& \phantom{asdaASdsaaaa} - \tr[  \sigma_{B}^{-1/2} \sigma_B \sigma_{B}^{-1/2}   \sigma_{AB}^{\frac{1-it}{2}}   \sigma_{A}^{-1/2} \rho_A \sigma_{A}^{-1/2}  \sigma_{AB}^{\frac{1+it}{2}}   ] \\
& \phantom{asdaASdsaaaa} - \tr[  \sigma_{B}^{-1/2} \rho_B \sigma_{B}^{-1/2}   \sigma_{AB}^{\frac{1-it}{2}}   \sigma_{A}^{-1/2} \sigma_A \sigma_{A}^{-1/2}  \sigma_{AB}^{\frac{1+it}{2}}   ] \\
& \phantom{asdASadsa} = \left( \tr[  \sigma_{B}^{-1/2} \rho_B \sigma_{B}^{-1/2}   \sigma_{AB}^{\frac{1-it}{2}}   \sigma_{A}^{-1/2} \rho_A \sigma_{A}^{-1/2}  \sigma_{AB}^{\frac{1+it}{2}}   ] + \tr[  \sigma_{AB}^{\frac{1-it}{2}}   \sigma_{AB}^{\frac{1+it}{2}}   ]  \right) \\
& \phantom{asdaASdsaaaa} - \left( \tr[  \sigma_{AB}^{\frac{1-it}{2}}   \sigma_{A}^{-1/2} \rho_A \sigma_{A}^{-1/2}  \sigma_{AB}^{\frac{1+it}{2}}   ] + \tr[  \sigma_{B}^{-1/2} \rho_B \sigma_{B}^{-1/2}   \sigma_{AB}^{\frac{1-it}{2}}  \sigma_{AB}^{\frac{1+it}{2}}   ] \right) \\
& \phantom{asdaASdsa} = \left( \tr[  \sigma_{B}^{-1/2} \rho_B \sigma_{B}^{-1/2}   \sigma_{AB}^{\frac{1-it}{2}}   \sigma_{A}^{-1/2} \rho_A \sigma_{A}^{-1/2}  \sigma_{AB}^{\frac{1+it}{2}}   ] + 1 \right) \\
& \phantom{asdadASsaaaa} - \left( \tr[  \sigma_{A} \sigma_{A}^{-1/2} \rho_A \sigma_{A}^{-1/2} ] + \tr[  \sigma_{B}^{-1/2} \rho_B \sigma_{B}^{-1/2}  \sigma_{B} ] \right) \\
& \phantom{asdASadsa} = \tr[  \sigma_{B}^{-1/2} \rho_B \sigma_{B}^{-1/2}   \sigma_{AB}^{\frac{1-it}{2}}   \sigma_{A}^{-1/2} \rho_A \sigma_{A}^{-1/2}  \sigma_{AB}^{\frac{1+it}{2}}   ] + 1 -1 -1,
\end{align*}
where we have used some properties of the trace, such as its linearity, cyclicity and the fact that if $f_A \in \A_A$ and $g_{AB} \in \SSS_{AB} $  then
\begin{center}
$\ds \tr[f_A g_{AB}] = \tr[f_A g_A]  $.
\end{center}

Therefore, we have the following equality:
\begin{align*}
&\tr[  \sigma_{B}^{-1/2} \rho_B \sigma_{B}^{-1/2}   \sigma_{AB}^{\frac{1-it}{2}}   \sigma_{A}^{-1/2} \rho_A \sigma_{A}^{-1/2}  \sigma_{AB}^{\frac{1+it}{2}}   ]   = \\
& \phantom{asdadsa} =  \tr[  \sigma_{B}^{-1/2} \left( \rho_B  - \sigma_B \right) \sigma_{B}^{-1/2}   \sigma_{AB}^{\frac{1-it}{2}}   \sigma_{A}^{-1/2} \left( \rho_A - \sigma_A \right)  \sigma_{A}^{-1/2}  \sigma_{AB}^{\frac{1+it}{2}}   ]  + 1,
\end{align*}
and hence
\begin{center}
$\ds \tr M \leq \int_{-\infty}^{+\infty} dt \, \beta_0 (t) \left(  \tr[  \sigma_{B}^{-1/2} \left( \rho_B  - \sigma_B \right) \sigma_{B}^{-1/2}   \sigma_{AB}^{\frac{1-it}{2}}   \sigma_{A}^{-1/2} \left( \rho_A - \sigma_A \right)  \sigma_{A}^{-1/2}  \sigma_{AB}^{\frac{1+it}{2}}   ]  + 1 \right) $
\end{center}

If we now use the following inequality for positive real numbers
\begin{center}
 $\log(x)\le x-1$,
\end{center}
and the monotonicity of the logarithm and the fact that $\beta_0(t)$ integrates $1$, we can then conclude
\begin{align*}
& \log \tr M \leq \\
& \; \leq \log \left[ \int_{-\infty}^{+\infty} dt \, \beta_0 (t) \left(  \tr[  \sigma_{B}^{-1/2} \left( \rho_B  - \sigma_B \right) \sigma_{B}^{-1/2}   \sigma_{AB}^{\frac{1-it}{2}}   \sigma_{A}^{-1/2} \left( \rho_A - \sigma_A \right)  \sigma_{A}^{-1/2}  \sigma_{AB}^{\frac{1+it}{2}}   ]  + 1 \right) \right] \\
& \; = \log \left[ \int_{-\infty}^{+\infty} dt \, \beta_0 (t)   \tr[  \sigma_{B}^{-1/2} \left( \rho_B  - \sigma_B \right) \sigma_{B}^{-1/2}   \sigma_{AB}^{\frac{1-it}{2}}   \sigma_{A}^{-1/2} \left( \rho_A - \sigma_A \right)  \sigma_{A}^{-1/2}  \sigma_{AB}^{\frac{1+it}{2}}   ]  + 1 \right]\\
& \; \leq \int_{-\infty}^{+\infty} dt \, \beta_0 (t)  \tr[  \sigma_{B}^{-1/2} \left( \rho_B  - \sigma_B \right) \sigma_{B}^{-1/2}   \sigma_{AB}^{\frac{1-it}{2}}   \sigma_{A}^{-1/2} \left( \rho_A - \sigma_A \right)  \sigma_{A}^{-1/2}  \sigma_{AB}^{\frac{1+it}{2}}   ].  
\end{align*}

\end{proof}

\begin{step2}\label{step:23}
With the same notation of the previous steps,
\begin{equation}
 \tr[  \sigma_{B}^{-1/2} \left( \rho_B  - \sigma_B \right) \sigma_{B}^{-1/2}   \sigma_{AB}^{\frac{1-it}{2}}   \sigma_{A}^{-1/2} \left( \rho_A - \sigma_A \right)  \sigma_{A}^{-1/2}  \sigma_{AB}^{\frac{1+it}{2}}   ] = \xi_1 + \xi_2,
\end{equation}
where
\begin{center}
$\ds \xi_1 = \tr[ T_B \left( \sigma_{AB}^{\frac{1-it}{2}}-\left( \sigma_A \otimes \sigma_B  \right)^{\frac{1-it}{2}} \right) T_A \, \sigma_{AB}^{\frac{1+it}{2}}  ]$, \\
$\ds \xi_2 =\tr[ T_B  \left( \sigma_A \otimes \sigma_B  \right)^{\frac{1-it}{2}}  T_A  \left( \sigma_{AB}^{\frac{1+it}{2}} -\left( \sigma_A \otimes \sigma_B   \right)^{\frac{1+it}{2}}  \right) ] $,
\end{center}
for certain observables $T_A \in \A_A$ and $T_B \in \A_B$.

Notice that both $\xi_1$ and $\xi_2$ vanish when $\sigma_{AB}$ is a tensor product.

\end{step2}

\begin{proof}

Let us first denote
\begin{center}
$ T_A :=  \sigma_{A}^{-1/2} \left( \rho_A - \sigma_A \right)  \sigma_{A}^{-1/2} $,\\
$ T_B :=  \sigma_{B}^{-1/2} \left( \rho_B  - \sigma_B \right) \sigma_{B}^{-1/2}   $,
\end{center}
to simplify notation. Hence
\begin{center}
$\ds \tr[  \sigma_{B}^{-1/2} \left( \rho_B  - \sigma_B \right) \sigma_{B}^{-1/2}   \sigma_{AB}^{\frac{1-it}{2}}   \sigma_{A}^{-1/2} \left( \rho_A - \sigma_A \right)  \sigma_{A}^{-1/2}  \sigma_{AB}^{\frac{1+it}{2}}   ] = \tr[ T_B \sigma_{AB}^{\frac{1-it}{2}} T_A   \sigma_{AB}^{\frac{1+it}{2}}  ] $
\end{center}

Now, we add and substract $\left( \sigma_A \otimes \sigma_B  \right)^{\frac{1-it}{2}} $ to $  \sigma_{AB}^{\frac{1-it}{2}}  $ and $\left( \sigma_A \otimes \sigma_B  \right)^{\frac{1+it}{2}} $ to $  \sigma_{AB}^{\frac{1+it}{2}}  $. We will later use some combinations of these terms in the error terms, so that we recover the property that the error terms vanish whenever $\sigma_{AB}$ is a tensor product. 
\begin{align*}
 \tr[ T_B \sigma_{AB}^{\frac{1-it}{2}} T_A   \sigma_{AB}^{\frac{1+it}{2}}  ] &=  \tr \left[ T_B \left( \sigma_{AB}^{\frac{1-it}{2}}-\left( \sigma_A \otimes \sigma_B   \right)^{\frac{1-it}{2}} +\left( \sigma_A \otimes \sigma_B  \right)^{\frac{1-it}{2}}  \right)  \right. \\
& \phantom{ass} \left. \cdot \, T_A  \left( \sigma_{AB}^{\frac{1+it}{2}} -\left( \sigma_A \otimes \sigma_B   \right)^{\frac{1+it}{2}} +\left( \sigma_A \otimes \sigma_B   \right)^{\frac{1+it}{2}}  \right) \right] \\
& = \underbrace{\tr[ T_B \left( \sigma_{AB}^{\frac{1-it}{2}}-\left( \sigma_A \otimes \sigma_B  \right)^{\frac{1-it}{2}} \right) T_A  \sigma_{AB}^{\frac{1+it}{2}}  ]}_{\xi_1} \\
& \phantom{ass} + \underbrace{\tr[ T_B  \left( \sigma_A \otimes \sigma_B  \right)^{\frac{1-it}{2}}  T_A  \left( \sigma_{AB}^{\frac{1+it}{2}} -\left( \sigma_A \otimes \sigma_B   \right)^{\frac{1+it}{2}}  \right) ]}_{\xi_2} \\
& \phantom{ass} +\underbrace{\tr[ T_B  \left( \sigma_A \otimes \sigma_B \right)^{\frac{1-it}{2}}  T_A  \left( \sigma_A \otimes \sigma_B \right)^{\frac{1+it}{2}}   ]}_{\xi_3}.
\end{align*}

There is only left to prove that $\xi_3$ is $0$. For that, let us replace again the values of $T_A$ and $T_B$ in the expression of $\xi_3$. 
\begin{align*}
\xi_3 & = \tr[ \sigma_{B}^{-1/2} \left( \rho_B  - \sigma_B \right) \sigma_{B}^{-1/2}  \left( \sigma_A \otimes \sigma_B \right)^{\frac{1-it}{2}}  \sigma_{A}^{-1/2} \left( \rho_A - \sigma_A \right)  \sigma_{A}^{-1/2} \left( \sigma_A \otimes \sigma_B \right)^{\frac{1+it}{2}}   ]\\
&= \tr[  \sigma_B^{\frac{1+it}{2}} \sigma_{B}^{-1/2} \left( \rho_B  - \sigma_B \right) \sigma_{B}^{-1/2} \sigma_B^{\frac{1-it}{2}}  \sigma_A^{\frac{1-it}{2}}   \sigma_{A}^{-1/2} \left( \rho_A - \sigma_A \right)  \sigma_{A}^{-1/2} \sigma_A^{\frac{1+it}{2}}    ] \\
&= \tr[ \sigma_B^{\frac{1+it}{2}} \sigma_{B}^{-1/2} \left( \rho_B - \sigma_B \right) \sigma_{B}^{-1/2} \sigma_B^{\frac{1-it}{2}} ] \tr[\sigma_A^{\frac{1-it}{2}}   \sigma_{A}^{-1/2} \left( \rho_A - \sigma_A \right)  \sigma_{A}^{-1/2} \sigma_A^{\frac{1+it}{2}}  ] \\
&= \tr[\rho_B - \sigma_B] \tr[\rho_A - \sigma_A]\\
&= 0,
\end{align*}
where we have used the fact that states with disjoint supports commute and the factorization of the trace under tensor products.

Therefore, 
\begin{center}
$\ds \tr[ T_B \, \sigma_{AB}^{\frac{1-it}{2}} \, T_A  \, \sigma_{AB}^{\frac{1+it}{2}}  ] = \xi_1 + \xi_2$.
\end{center}
\end{proof}
\begin{step2}\label{step:24}
With the same notation of the previous steps:
\begin{center}
$\ds \log \tr M \leq 2    \left( \int_{-\infty}^{+\infty} dt \, \beta_0 (t) \, \norm{  S_1(t)}_\infty \norm{  \sigma_{A}^{-1/2}  \sigma_{AB}^{\frac{1+it}{2}}  \sigma_{B}^{-1/2} }_\infty + \int_{-\infty}^{+\infty} dt \, \beta_0 (t) \, \norm{  S_2(t)}_\infty  \right) \ent {\rho_{AB}} {\sigma_{AB}} $,
\end{center}
where $S_1(t)$ and $S_2(t)$ depend only on $\sigma_{AB}$ and vanish when $\sigma_{AB}= \sigma_A \otimes \sigma_B$.
\end{step2}

\begin{proof} Let us bound separately $\xi_1$ and $\xi_2$.

First, we denote:
\begin{center}
$\ds S_1(t) :=  \sigma_{AB}^{\frac{1-it}{2}}-\left( \sigma_A \otimes \sigma_B  \right)^{\frac{1-it}{2}} $, \\
$\ds S_2 (t) := \sigma_{AB}^{\frac{1+it}{2}} -\left( \sigma_A \otimes \sigma_B  \right)^{\frac{1+it}{2}} $,
\end{center}
again to simplify notation. Using the submultiplicativity of the Schatten norms, we have for $\xi_1$
\begin{align*}
\xi_1 &=  \int_{-\infty}^{+\infty} dt \, \beta_0 (t) \, \tr[ T_B \left( \sigma_{AB}^{\frac{1-it}{2}}-\left( \sigma_A \otimes \sigma_B   \right)^{\frac{1-it}{2}} \right) T_A  \sigma_{AB}^{\frac{1+it}{2}}  ] \\
&= \int_{-\infty}^{+\infty} dt \, \beta_0 (t) \, \tr[ \left( \rho_B  - \sigma_B \right) \sigma_{B}^{-1/2} \,  S_1(t)\, \sigma_{A}^{-1/2} \left( \rho_A - \sigma_A \right)  \sigma_{A}^{-1/2}  \sigma_{AB}^{\frac{1+it}{2}}  \sigma_{B}^{-1/2} ]\\
& \leq \norm{\rho_B - \sigma_B}_1\int_{-\infty}^{+\infty} dt \, \beta_0 (t) \, \norm{ \sigma_{B}^{-1/2} \,  S_1(t)\, \sigma_{A}^{-1/2} \left( \rho_A - \sigma_A \right)  \sigma_{A}^{-1/2}  \sigma_{AB}^{\frac{1+it}{2}}  \sigma_{B}^{-1/2} }_1
\end{align*}
and in virtue of Hölder's inequality,
\begin{align*}
\xi_1 & \leq \norm{\rho_B- \sigma_B}_1 \int_{-\infty}^{+\infty} dt \, \beta_0 (t) \, \norm{ \sigma_{B}^{-1/2} \,  S_1(t)\, \sigma_{A}^{-1/2} \left( \rho_A - \sigma_A \right)  \sigma_{A}^{-1/2}  \sigma_{AB}^{\frac{1+it}{2}}  \sigma_{B}^{-1/2} }_1 \\
& \leq \norm{\rho_B - \sigma_B}_1 \int_{-\infty}^{+\infty} dt \, \beta_0 (t) \, \norm{ \sigma_{B}^{-1/2} \,  S_1(t)\, \sigma_{A}^{-1/2} }_\infty   \norm{\left( \rho_A - \sigma_A \right)  \sigma_{A}^{-1/2}  \sigma_{AB}^{\frac{1+it}{2}}  \sigma_{B}^{-1/2} }_1 \\
& \leq \norm{\rho_B - \sigma_B}_1  \norm{\rho_A - \sigma_A }_1  \int_{-\infty}^{+\infty} dt \, \beta_0 (t) \, \norm{ \sigma_{B}^{-1/2} \,  S_1(t)\, \sigma_{A}^{-1/2} }_\infty   \norm{  \sigma_{A}^{-1/2}  \sigma_{AB}^{\frac{1+it}{2}}  \sigma_{B}^{-1/2} }_\infty.
\end{align*}

Now, for the first norm inside the integral, we have 
\begin{align*}
 \norm{ \sigma_{B}^{-1/2} \,  S_1(t)\, \sigma_{A}^{-1/2} }_\infty &=  \norm{ \sigma_{B}^{-1/2} \,  \left( \sigma_{AB}^{\frac{1-it}{2}}-\left( \sigma_A \otimes \sigma_B  \right)^{\frac{1-it}{2}} \right)\, \sigma_{A}^{-1/2} }_\infty \\
 &= \norm{ \sigma_{B}^{-1/2} \,   \sigma_{AB}^{\frac{1-it}{2}}  \, \sigma_{A}^{-1/2}   -\left( \sigma_A \otimes \sigma_B  \right)^{\frac{-it}{2}}  }_\infty \\
  &= \norm{ \sigma_{B}^{\frac{-1+it}{2}} \,   \sigma_{AB}^{\frac{1-it}{2}}  \, \sigma_{A}^{\frac{-1+it}{2}}   -\identity_{AB} }_\infty,
\end{align*}
because of the unitarily invariance of Schatten norms.

Finally, using Pinsker's inequality and the data-processing inequality, we have:
\begin{center}
$\ds  \norm{\rho_B - \sigma_B}_1 \leq \sqrt{2 \ent {\rho_B} {\sigma_{B}} }\leq  \sqrt{ 2\ent {\rho_{AB}} {\sigma_{AB}} } $,
\end{center}
and analogously for the term with support in $A$. Thus, we can bound $\xi_1$ by
\begin{center}
$\ds  \xi_1 \leq \left( 2  \int_{-\infty}^{+\infty} dt \, \beta_0 (t)  \norm{ \sigma_{B}^{\frac{-1+it}{2}}   \sigma_{AB}^{\frac{1-it}{2}}   \sigma_{A}^{\frac{-1+it}{2}}   -\identity_{AB} }_\infty \norm{  \sigma_{A}^{-1/2}  \sigma_{AB}^{\frac{1+it}{2}}  \sigma_{B}^{-1/2} }_\infty  \right)  \ent {\rho_{AB}} {\sigma_{AB}} $.
\end{center}

We can do the same for $\xi_2$. First,
\begin{align*}
\xi_2 &=  \int_{-\infty}^{+\infty} dt \, \beta_0 (t) \, \tr[ T_B  \left( \sigma_A \otimes \sigma_B  \right)^{\frac{1-it}{2}}  T_A  \left( \sigma_{AB}^{\frac{1+it}{2}} -\left( \sigma_A \otimes \sigma_B  \right)^{\frac{1+it}{2}}  \right) ]  \\
&=  \int_{-\infty}^{+\infty} dt \, \beta_0 (t) \, \tr[  \left( \rho_B  - \sigma_B \right) \sigma_{B}^{-1/2}  \left( \sigma_A \otimes \sigma_B  \right)^{\frac{1-it}{2}} \sigma_{A}^{-1/2} \left( \rho_A - \sigma_A \right)  \sigma_{A}^{-1/2} S_2(t)  \,  \sigma_{B}^{-1/2} ]  \\
& \leq \norm{\rho_B - \sigma_B}_1\int_{-\infty}^{+\infty} dt \, \beta_0 (t) \, \norm{ \sigma_{B}^{-1/2}  \left( \sigma_A \otimes \sigma_B  \right)^{\frac{1-it}{2}} \sigma_{A}^{-1/2} \left( \rho_A - \sigma_A \right)  \sigma_{A}^{-1/2} S_2(t)  \,  \sigma_{B}^{-1/2}}_1,
\end{align*}
where we have used the submultiplicativity of Schatten norms. Using again Hölder's inequality twice, we can bound this term by:
\begin{align*}
\xi_2 &\leq \norm{\rho_B - \sigma_B}_1\int_{-\infty}^{+\infty} dt \, \beta_0 (t) \, \norm{ \sigma_{B}^{-1/2}  \left( \sigma_A \otimes \sigma_B   \right)^{\frac{1-it}{2}} \sigma_{A}^{-1/2} \left( \rho_A - \sigma_A \right)  \sigma_{A}^{-1/2} S_2(t) \,  \sigma_{B}^{-1/2}}_1 \\
& \leq \norm{\rho_B - \sigma_B}_1\int_{-\infty}^{+\infty} dt \, \beta_0 (t) \, \norm{ \sigma_{B}^{-1/2}  \left( \sigma_A \otimes \sigma_B  \right)^{\frac{1-it}{2}} \sigma_{A}^{-1/2}}_\infty \norm{ \left( \rho_A - \sigma_A \right)  \sigma_{A}^{-1/2} S_2(t) \,  \sigma_{B}^{-1/2}}_1 \\
& \leq \norm{\rho_B - \sigma_B}_1  \norm{\rho_A - \sigma_A}_1 \int_{-\infty}^{+\infty} dt \, \beta_0 (t) \, \norm{ \sigma_{B}^{-1/2}  \left( \sigma_A \otimes \sigma_B   \right)^{\frac{1-it}{2}} \sigma_{A}^{-1/2}}_\infty \norm{  \sigma_{A}^{-1/2} S_2(t) \,  \sigma_{B}^{-1/2}}_\infty
\end{align*}

For the first term inside the integral, it is clear that 
\begin{center}
$\ds  \norm{ \sigma_{B}^{-1/2}  \left( \sigma_A \otimes \sigma_B \right)^{\frac{1-it}{2}} \sigma_{A}^{-1/2}}_\infty = 1$.
\end{center}

Therefore,
\begin{center}
$\ds \xi_2 \leq   \norm{\rho_B - \sigma_B}_1  \norm{\rho_A - \sigma_A}_1 \,  \int_{-\infty}^{+\infty} dt \, \beta_0 (t) \, \norm{  \sigma_{A}^{-1/2} S_2(t) \,  \sigma_{B}^{-1/2}}_\infty$,
\end{center}
and again
\begin{center}
$\ds   \norm{  \sigma_{A}^{-1/2} S_2(t) \,  \sigma_{B}^{-1/2}}_\infty=   \norm{ \sigma_{B}^{\frac{-1-it}{2}} \,   \sigma_{AB}^{\frac{1+it}{2}}  \, \sigma_{A}^{\frac{-1-it}{2}}   -\identity_{AB} }_\infty$
\end{center}
because of the unitarily invariance of Schatten norms.

Finally, as in the case of $\xi_1$, in virtue of Pinsker's inequality and the data-processing inequality, we obtain:
\begin{center}
$\ds  \xi_2 \leq \left( 2  \,  \int_{-\infty}^{+\infty} dt \, \beta_0 (t) \, \norm{ \sigma_{B}^{\frac{-1-it}{2}} \,   \sigma_{AB}^{\frac{1+it}{2}}  \, \sigma_{A}^{\frac{-1-it}{2}}   -\identity_{AB} }_\infty  \right) \ent {\rho_{AB}} {\sigma_{AB}} $
\end{center}

Notice that when $\sigma_{AB}= \sigma_{A} \otimes \sigma_B $, both $S_1(t)$ and $S_2(t)$ vanish, obtaining then an error term for the quasi-factorization result that vanishes when $\sigma_{AB}$ is a product.

\end{proof}

 \begin{remark}
 The result of quasi-factorization obtained in this subsection is much worse than the one obtained in the previous subsection for the conditional relative entropy. This fact might support the idea that the best definition for conditional relative entropy is the original one, and not the modification obtained when taking out the property of semi-monotonicity. 
 
 \end{remark}
 
\begin{remark}
The bounds are not tight. In particular, in the fourth step, we bound $\xi_1$ and $\xi_2$ in a very loose way, giving space to possible improvements of the bounds, and, hence, to a possibly better result of quasi-factorization.
\end{remark} 

\begin{remark}
Similarly to what we mentioned in the previous subsection, Proposition \ref{prop:part-qf2} can be also seen as a particular case of this theorem, when the number of subregions considered is $2$. We notice again that in the simplification given by the proposition the multiplicative error term disappears, since in that case we were considering $\sigma_{ABC}$ a tensor product. 

\end{remark}

\begin{remark}
Throughout the proof of the theorem, we are not using too strongly the fact that we are working with a specific conditional expectaction, the minimal conditional expectaction. The application of Lieb's Theorem of course is independent of this fact, but the bound that follows is not. Going back to the beginning of Step \ref{step:24}, one possible way of defining a more general condition might be the following: From the properties of the conditional expectation, we have
\begin{align*}
\tr \left[ \E_{B}^*(\rho_{AB}-\sigma_{AB}) \right. &\mathcal T_{\sigma_{AB}}  \left.(\E_{A}^*(\rho_{AB}-\sigma_{AB})) \right]= \\
 &=  \tr[ (\E_{B}^*(\rho_{AB}) -\sigma_{AB}) \mathcal T_{\sigma_{AB}}( \E_{A}^*(\rho_{AB}) -\sigma_{AB}
) ] \\
&= \tr[ \E_{B}^*(\rho_{AB}) \mathcal T_{\sigma_{AB}} (\E_{A}^*(\rho_{AB}))]
- \tr[ \E_{A}^*(\rho_{AB})] - \tr[ \E_{B}^*(\rho_{AB})] + \tr[ \sigma_{AB} ]\\
&=\tr[ \E_{B}^*(\rho_{AB}) \mathcal T_{\sigma_{AB}} (\E_{A}^*(\rho_{AB}))] -1,
\end{align*}
where we have used that $\mathcal{T}_{\sigma_{AB}}$ is self-adjoint with respect to the Hilbert-Schmidt product and $\mathcal{T}_{\sigma_{AB}}(\sigma_{AB})= \identity$. Furthermore, we can also write this term as:
\begin{center}
$ \tr[ \E_{B}^*(\rho_{AB}-\sigma_{AB}) \mathcal T_{\sigma_{AB}}
(\E_{A}^*(\rho_{AB}-\sigma_{AB}))]=$\\
$ = \tr[ \E_{B}^*(\rho_{A}-\sigma_{A}) \mathcal T_{\sigma_{AB}}
(\E_{A}^*(\rho_{B}-\sigma_{B}))]$,
\end{center}
since $\E_{A}^*(\eta_{AB})= \E_{A}^*(\eta_{B})$ for every $\eta_{AB} \in \SSS_{AB}$ and the same holds for $\E_{B}^*$.
Therefore, we can directly derive that
\[ \log \tr M \le
\tr[ \E_{B}^*(\rho_{A}-\sigma_{A}) \mathcal T_{\sigma_{AB}}
(\E_{A}^*(\rho_{B}-\sigma_{B}))], \]
for any conditional expectation.
Now let
\[ H = \E_{B} \circ \mathcal T_{\sigma_{AB}} \circ \E_{A}^*, \]
so that $\log \tr M \le \tr[ (\rho_{A}-\sigma_{A}) H(\rho_{B}-\sigma_{B})]$.
Since we have that
\[ \tr[ (\rho_{A} - \sigma_{A}) \identity (\rho_{B} - \sigma_{B}) ]=
\tr[ \rho_{A} - \sigma_{A}] \tr[ \rho_{B}-\sigma_{B}  ]= 0, \]
we can subtract the identity superoperator from the previous bound,
and we obtain that the error term is bounded as follows
\[ \log \tr M \le \norm{ H-\identity }_{\infty} \norm{\rho_{B}-\sigma_{B}}_1\norm{\rho_{A}-\sigma_{A}}_1 ,\]
obtaining a result which is analogous to Steps \ref{step:23} and \ref{step:24}, which were devoted to bounding $\norm{ H-\identity }_{\infty} $ in an appropriate way.
\end{remark}

\section{Quantum spin lattices: log-Sobolev constant}\label{sec:logSob}

In this section, we will study open quantum many body systems, which are weakly coupled to an environment. They constitute realistic physical systems and are relevant for quantum information processing. These systems interact with the environment in a considerable way and, thus, the resulting dynamics are dissipative. We shall use for such systems the Markov approximation, which states that the continuous time evolution of a state of such system is given by a quantum Markov semigroup.

Consider a quantum spin lattice system, which will be assumed to live on a $d$-dimensional finite square lattice, and will be denoted by $\Lambda \subseteq \Z^d$. To every site $x $ in $\Lambda$, we associate a finite dimensional local Hilbert space  $\HH_x$. Then, the Hilbert space associated to the spin lattice $\Lambda$ is given by $\HH_\Lambda= \underset{x \in \Lambda}{\bigotimes} \HH_x $. We denote the set of observables in $\Lambda$ by $\A_\Lambda$, and the set of states by $\SSS_\Lambda$.

In virtue of the Markov approximation mentioned above, in the Schrödinger picture, given an initial state of the system $\rho_{\Lambda} \in \SSS_\Lambda$, its evolution under the dissipative dynamics is given by a quantum Markov semigroup, which is nothing but  a one-parameter family of linear, CPTP maps (quantum channels, \cite{wolf})  $  \qty{\mathcal{T}^*_t}_{t\geq 0} $ on $ \SSS_\Lambda$, verifying:
\begin{enumerate}
\item $\mathcal{T}^*_0= \identity$.
\item $\mathcal{T}^*_t \circ \mathcal{T}^*_s= \mathcal{T}^*_{t+s}$.
\end{enumerate}

The generator of this semigroup is denoted by $\mathcal{L}_\Lambda^*$, called \textit{Lindbladian} (or \textit{Liouvillian}) and satisfies
\begin{equation}\label{eq:gen-semigroup}
\frac{d}{dt} \mathcal{T}^*_t =\mathcal{L}_\Lambda^* \circ \mathcal{T}^*_t = \mathcal{T}^*_t \circ \mathcal{L}_\Lambda^* .
\end{equation}

Thus, we can write the elements of the quantum Markov semigroup as
\begin{center}
$ \mathcal{T}^*_t  = e^{t \mathcal{L}_\Lambda^*} $. 
\end{center}

The notation $^*$ appears since we are in the Schrödinger picture, and denotes that this quantum channel may be seen as the dual of another one in the Heisenberg  picture. Given $\rho_\Lambda \in \SSS_\Lambda$, let us denote
\begin{center}
$\ds  \rho_t := \mathcal{T}_t^*(\rho_\Lambda) $
\end{center}
for every $t \geq 0$ (when the omission of the subindex does not cause any confusion). With this notation, equation (\ref{eq:gen-semigroup}) can be rewritten as the quantum dynamical master equation: 
\begin{center}
$\ds \partial_t \rho_t = \mathcal{L}_\Lambda^* (\rho_t)$. 
\end{center}

We say that a certain state $\sigma_\Lambda$ is an \textit{invariant state} of $  \qty{\mathcal{T}^*_t}_{t\geq 0} $  if 
\begin{center}
$\ds  \sigma_t :=  \mathcal{T}_t^*(\sigma_\Lambda)  = \sigma_\Lambda $
\end{center}
for every $t \geq 0$.

Throughout all this section, we will restrict to the \textit{primitive} case, i.e., $  \qty{\mathcal{T}^*_t}_{t\geq 0} $ has a unique full-rank invariant state. An interesting problem concerning quantum Markov semigroups is the study of the convergence to this unique invariant state, which can be done bounding the mixing time. 

The \textit{mixing time} of a quantum Markov semigroup is defined, given an initial state, as the time that the process spends to get close to the invariant state, i.e., the fixed point of the evolution. More specifically, it is given by the following expression
\begin{equation}
\tau (\epsilon)= \text{min} \qty{t>0 \, : \, \un{\rho_\Lambda \in \SSS_\Lambda}{\text{sup}} \,  \norm{\rho_t - \sigma_\Lambda}_1 \leq \epsilon  }.
\end{equation}

Let us assume that the quantum Markov proccess studied is \textit{reversible}, i.e., satisfies the \textit{detailed balance condition}:

\begin{center}
$\ds  \left\langle  f, \mathcal{L}_\Lambda(g) \right\rangle_{\sigma_\Lambda} = \left\langle  \mathcal{L}_\Lambda(f), g \right\rangle_{\sigma _\Lambda} $
\end{center}
for every $f, g \in \A_\Lambda$, where $\mathcal{L}_\Lambda$ is the generator of the evolution semigroup in the Heisenberg picture. 

Different bounds for the mixing time can be obtained by means of the optimal constants for some quantum functional inequalities, such as the \textit{spectral gap} for the \textit{Poincaré inequality} \cite{chi2} and the  \textit{logarithmic Sobolev constant} for the  \textit{logarithmic Sobolev inequality} \cite{kast-temme}.  In this section we will focus on the latter. There exists a whole family of \textit{logarithmic Sobolev inequalities} (\textit{log-Sobolev inequalities} for short), which can be indexed by an integer parameter, as done in \cite{kast-temme}. We will see below the well-known fact that the 1-log-Sobolev inequality, also known as modified log-Sobolev inequality, provides an upper bound for the mixing time. Since this is the only log-Sobolev inequality that will appear in this paper, we will just call it \textit{log-Sobolev inequality}, in a slight abuse of notation. 

The idea of bounding the mixing time in terms of log-Sobolev constants is based on two facts:
\begin{enumerate}
\item Finding a positive functional that bounds the convergence of the semigroup to the fixed point and bounding its derivative in terms of the functional itself. The role of this functional will be played by the relative entropy of $\rho_t$ and $\sigma_\Lambda$:
\begin{center}
$\ds  D(\rho_t || \sigma_\Lambda)= \tr[\rho_t (\log \rho_t - \log \sigma_\Lambda)]$.
\end{center}
\item Pinsker's inequality \cite{pinsker}: 
\begin{center}
$\ds   \norm{\rho_t - \sigma_\Lambda}_1 \leq  \sqrt{2 \ent {\rho_t} {\sigma_\Lambda}} $.
\end{center}
\end{enumerate}

Let us elaborate this first point. Since $\rho_t$ evolves according to $\LL_\Lambda^*$, the derivative of $D(\rho_t || \sigma_\Lambda)$ is given by
\begin{center}
$\partial_t D(\rho_t || \sigma_\Lambda) = \tr[\mathcal{L}_\Lambda^*(\rho_t) (\log \rho_t - \log \sigma_\Lambda)],$
\end{center}
which is a negative quantity (since the relative entropy of $\rho_t$ and $\sigma_\Lambda$ decreases with $t$), and we want to find a lower bound for the negative derivative of $D(\rho_t || \sigma_\Lambda)$ in terms of itself:
\begin{equation}\label{eq:bound-rel-ent}
2 \alpha D(\rho_t || \sigma_\Lambda) \leq -\tr[\mathcal{L}_\Lambda^*(\rho_t) (\log \rho_t - \log \sigma_\Lambda)].
\end{equation}

It is clear that, for each $\rho_t$, there exists an $\alpha$ that makes possible the previous inequality. However, finding a global $\alpha$ that works for every $\rho_t$ is far from trivial. Indeed, in general such quantity does not exist. A global constant for the previous inequality is called a \textit{log-Sobolev constant}.

\begin{defi}
Let $\LL_\Lambda^*$ be a Liouvillian in the Schrödinger picture and let $\sigma_\Lambda$ be the unique invariant full-rank state of the semigroup generated by $\LL_\Lambda^*$. We define the \textit{log-Sobolev constant} of $\mathcal{L}_\Lambda^*$ by
\begin{center}
$ \displaystyle \alpha(\mathcal{L}_\Lambda^*):= \underset{\rho_\Lambda \in \SSS_\Lambda}{\text{inf}}\frac{-\tr[\LL_\Lambda^*(\rho_\Lambda)(\log\rho_\Lambda-\log \sigma_\Lambda)]}{2 D(\rho_\Lambda || \sigma_\Lambda)}  $
\end{center}

\end{defi}

Assume that for a certain Liouvillian $\mathcal{L}^*_\Lambda$ a positive log-Sobolev constant exists. Then, we can integrate equation (\ref{eq:bound-rel-ent}) to obtain
\begin{equation}
D(\rho_t || \sigma_\Lambda)  \leq D(\rho_\Lambda || \sigma_\Lambda) e^{-2 \, \alpha(\mathcal{L}_\Lambda^*) \, t},
\end{equation}
and putting this together with Pinsker's inequality, we have:
\begin{equation}
\norm{\rho_t - \sigma_\Lambda}_1 \leq \sqrt{  2 D(\rho_\Lambda || \sigma_\Lambda) } \,  e^{-  \alpha(\mathcal{L}_\Lambda^*) \, t}.
\end{equation}

Finally, since $ D(\rho_\Lambda || \sigma_\Lambda) $ becomes maximal for a full-rank state $\sigma_\Lambda$, which is the case, when $\rho_\Lambda$ corresponds to a rank-one projector onto the minimal eigenvalue of $\sigma_\Lambda$, we obtain:
\begin{equation}
\norm{\rho_t - \sigma_\Lambda}_1 \leq \sqrt{  2 \log(1/ \sigma_\text{min})  } \,  e^{-  \alpha(\mathcal{L}_\Lambda^*) \, t}.
\end{equation}

Therefore, positive log-Sobolev constants can be used in upper bounds for the mixing time, providing an exponential improvement with respect to a bound in terms of the spectral gap. Proving whether a Lindbladian has a positive log-Sobolev constant is, thus, a fundamental problem in open quantum many-body systems.

In the following subsection, we will show that the heat-bath dynamics, with product fixed point, has a positive log-Sobolev constant. The global Lindbladian will be defined as the sum of local ones in the following form:
\begin{equation}\label{eq:lindblad-local}
\ds  \mathcal{L}^*_\Lambda = \un{x \in \Lambda}{\sum} \mathcal{T}_x^* - \identity_\Lambda ,
\end{equation}
where $ \mathcal{T}_x^*$ are certain quantum channels with a fixed point $\sigma_\Lambda$ verifying
\begin{equation}\label{eq:split-property}
\ds   \sigma_\Lambda = \un{x \in \Lambda}{\bigotimes} \, \sigma_x. 
\end{equation}

Our example constitutes a generalization of a particular case studied in \cite{depol-channels} and \cite{doubly-stoch}, where the authors consider doubly stochastic channels, i.e., channels verifying
\begin{center}
$\ds \mathcal{T}_x^* (\identity_\Lambda ) = \mathcal{T}_x (\identity_\Lambda ) = \identity_\Lambda  $,
\end{center}
and prove that, if the fixed point is $\sigma_\Lambda= \identity_\Lambda/ \text{dim}(\Lambda)$, the log-Sobolev constant of a Lindbladian of the form (\ref{eq:lindblad-local}) is lower bounded by $1/2$ and, hence, positive. Clearly, the identity verifies property (\ref{eq:split-property}), giving our result more generality in what concerns the fixed point. However, we only manage to prove positivity of the log-Sobolev constant for a certain channel (the Petz recovery map for the partial trace), whereas they obtain it for every channel with the identity as fixed point. 

Proving the existence of a positive log-Sobolev constant for a Lindbladian of the form (\ref{eq:lindblad-local}) for any quantum channel with fixed point satisfying  (\ref{eq:split-property}) is left as an open question.

\subsection{Example of positive log-Sobolev constant}

In this subsection, we show that the heat-bath dynamics, with product fixed point, has a positive log-Sobolev constant\footnote{After the completion of this work, we came to know that this constant was already proven to be positive in \cite[Theorem 9]{hypercontractivity}. However, in that result, the authors presented a lower bound  for the log-Sobolev constant that depends on some local gaps and the minimum eigenvalues of some local stationary states, whereas the bound that we give in this paper is universal and independent of any other quantity (indeed, it is $1/2$). Moreover, our proof is completely different and the techniques that we use in our paper are arguably simpler and allow us to establish an estrategy, based on quasi-factorization results for the relative entropy, that might be of use to prove positivity of log-Sobolev constants for more general dynamics in many-body systems.

Furthermore, simultaneously to our article, in an independent paper in the context of quantum hypothesis testing \cite{beigidattarouze}, the same theorem has also been obtained.}. Namely, given $\Lambda \subset \Z^d$ a quantum spin lattice,  if we take a product state 
\begin{center}
$\ds \sigma_\Lambda= \un{x \in \Lambda}{\bigotimes} \, \sigma_x $
\end{center}
on it, define for every $x \in \Lambda$ the minimal conditional expectation with respect to $\sigma_\Lambda$, $\E_x^*$,  as in Subsection \ref{subsec-cond-exp}, and consider the Lindbladian $\LL^*_x := \E_x^*- \identity_\Lambda $, then the global Lindbladian 
\begin{center}
$\ds \LL^*_\Lambda= \un{x \in \Lambda}{\sum} \LL^*_x $
\end{center}
is shown to have a positive log-Sobolev constant.

Let us first recall the definition of the minimal conditional expectation with respect to $\sigma_\Lambda$:
\begin{center}
$\ds \E^*_x (\rho_\Lambda)= \sigma_{\Lambda}^{1/2} \sigma_{x^c}^{-1/2} \rho_{x^c} \sigma_{x^c}^{-1/2} \sigma_{\Lambda}^{1/2}  $
\end{center} 
for all $\rho_\Lambda \in \SSS_{\Lambda}$. Since  $\sigma_\Lambda$ is a product state, we can write $\E_x^*(\rho_\Lambda)$ as

\begin{center}
$\ds \E^*_x (\rho_\Lambda)= \sigma_x  \otimes \rho_{x^c} $.
\end{center} 

Hence, for every $\rho_\Lambda \in \SSS_\Lambda$,
\begin{center}
$\ds  \LL^*_\Lambda ( \rho_\Lambda)= \un{x \in \Lambda}{\sum} \left(  \sigma_x  \otimes \rho_{x^c}  - \rho_\Lambda \right) $.
\end{center}

Noticing the definition of the global Lindbladian as the sum of local ones, one could think on the possibility of reducing the study of a quantity defined on the global Lindbladian to an analogous quantity defined on the Lindbladian associated to every site. Following this idea, we can define a \textit{conditional log-Sobolev constant}, on every subset $A \subset \Lambda$, as an auxiliary quantity for the proof of positivity of the global log-Sobolev constant.

\begin{defi}
Let $\Lambda \subset \Z^d$ be a finite lattice and let $\ds \LL^*_\Lambda=  \un{x \in \Lambda}{\sum} \LL^*_x$ be a global Lindbladian for the Schrödinger picture.  Given $A \subset \Lambda$, we define the \textit{conditional log-Sobolev constant} of $\LL^*_\Lambda$ in $A$ by 
\begin{center}
$ \displaystyle \alpha_\Lambda(\LL^*_A):= \underset{\rho_\Lambda \in \SSS_\Lambda}{\text{inf}}\frac{-\tr[\LL_A^*(\rho_\Lambda)(\log\rho_\Lambda-\log \sigma_\Lambda)]}{2 D_A(\rho_\Lambda || \sigma_\Lambda)}  $,
\end{center}
where $\sigma_\Lambda$ is the fixed point of the evolution, and $\ds  D_A(\rho_\Lambda || \sigma_\Lambda) $ is the conditional relative entropy.

\end{defi}

\begin{remark}
In Subsection \ref{subsec:comparison}, we have shown that, when $\sigma_\Lambda$ is product, both the conditional relative entropy and the conditional relative entropy by expectations coincide. Since that is the case studied in this subsection, any of them might be the one that appears in the definition of conditional log-Sobolev constant.

Indeed, for every $\rho_\Lambda \in \SSS_\Lambda$ and $A \subset \Lambda$,
\begin{center}
$\ds  \entA A {\rho_\Lambda} {\sigma_\Lambda} =  \enteA A {\rho_\Lambda} {\sigma_\Lambda}  = I_\rho (A : A^c) + \ent {\rho_A} {\sigma_A} $.
\end{center}

In this case, these definitions also coincide with the one that appears in \cite{bardet} and \cite{bardet-rouze}  under the name of \textit{decoherence-free relative entropy}.

\end{remark}

Now, we can prove the existence of a positive conditional log-Sobolev constant for every local Liouvillian in $x \in \Lambda$, $\LL^*_x$, and use this result to obtain a positive global log-Sobolev constant for $\LL^*_\Lambda$.

Taking a look at the definition of conditional log-Sobolev constant in $x\in \Lambda$, one can notice that the numerator of the global log-Sobolev constant comes from the sum of the conditional ones. However, the denominators lack a relation of this kind. Therefore, we need the following result of factorization of the relative entropy, which was proven in Subsection \ref{subsec:QF-sigmaprod}, to compare both conditional and global log-Sobolev constants:

\begin{theorem}
Let $\Lambda \subset \Z^d$ be a finite lattice and let $\rho_{\Lambda}, \sigma_{\Lambda} \in \SSS_\Lambda$ such that $\ds \sigma_\Lambda = \un{x \in \Lambda}{\bigotimes} \, \sigma_x$. The following inequality holds:
\begin{equation}
 \ent {\rho_{\Lambda}} {\sigma_{\Lambda}} \leq \un{x \in \Lambda}{\sum} \entA x {\rho_{\Lambda}} {\sigma_{\Lambda}}.
\end{equation}

\end{theorem}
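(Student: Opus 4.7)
The plan is to reduce the inequality to the quantum version of Shearer's inequality, which in turn follows from strong subadditivity of the von Neumann entropy. First I would use the explicit formula for the conditional relative entropy established in Theorem \ref{thm:formula}, namely $D_x(\rho_\Lambda\|\sigma_\Lambda) = D(\rho_\Lambda\|\sigma_\Lambda) - D(\rho_{x^c}\|\sigma_{x^c})$, to rewrite the claimed inequality as
\begin{equation*}
(|\Lambda|-1)\,D(\rho_\Lambda\|\sigma_\Lambda) \;\geq\; \sum_{x\in\Lambda} D(\rho_{x^c}\|\sigma_{x^c}).
\end{equation*}

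Next I would exploit the tensor-product structure $\sigma_\Lambda=\bigotimes_{x\in\Lambda}\sigma_x$ to expand both sides. Since $\log\sigma_\Lambda=\sum_x \log\sigma_x$ (each summand tensored with the appropriate identity), we have $\tr[\rho_\Lambda\log\sigma_\Lambda]=\sum_x \tr[\rho_x\log\sigma_x]$, and similarly for each $x^c$ with the sum running over $y\neq x$. A short accounting shows that every term involving $\log\sigma_y$ appears exactly $|\Lambda|-1$ times on each side of the inequality, so all such contributions cancel. What remains is purely a statement about von Neumann entropies:
\begin{equation*}
(|\Lambda|-1)\,S(\rho_\Lambda)\;\leq\;\sum_{x\in\Lambda} S(\rho_{x^c}).
\end{equation*}

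This is precisely the quantum analogue of Shearer's inequality, and rewriting it in terms of conditional entropies $S(x\mid x^c)_\rho := S(\rho_\Lambda)-S(\rho_{x^c})$ turns it into the equivalent form $S(\rho_\Lambda)\geq \sum_x S(x\mid x^c)_\rho$. I would then invoke this as a known consequence of strong subadditivity of the von Neumann entropy (stated as Lemma~\ref{lemma:shearer} above, and proved for instance by iterating strong subadditivity over a chosen ordering of the sites in $\Lambda$).

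The main obstacle, such as it is, is only bookkeeping: one has to be careful that the tensor product structure of $\sigma_\Lambda$ really does make the $\sigma$-dependent terms cancel exactly with the combinatorial factor $|\Lambda|-1$, so that no residual term involving $\sigma$ survives. Once this cancellation is verified, the whole statement collapses onto Shearer's inequality and no further quantum information inputs are needed. Notably, this clean reduction explains why no multiplicative error term $\xi(\sigma_\Lambda)$ appears here: the product structure of $\sigma_\Lambda$ eliminates any mismatch, consistent with the general philosophy (made quantitative in Theorems~\ref{thm:qfAB} and \ref{thm:quasifactorizationexp}) that the error term should measure the deviation of $\sigma_\Lambda$ from being a tensor product.
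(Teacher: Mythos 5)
Your proposal is correct and follows essentially the same route as the paper: rewrite the claim via the formula $D_x(\rho_\Lambda\|\sigma_\Lambda)=D(\rho_\Lambda\|\sigma_\Lambda)-D(\rho_{x^c}\|\sigma_{x^c})$, verify that the tensor-product structure of $\sigma_\Lambda$ makes all $\log\sigma$ terms cancel against the factor $|\Lambda|-1$, and reduce the statement to the quantum Shearer inequality $(|\Lambda|-1)S(\rho_\Lambda)\leq\sum_{x}S(\rho_{x^c})$, which is Lemma \ref{lemma:shearer}. No gaps.
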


In the following lemma we will prove that the Lindbladian defined at the beginning of this subsection has a positive conditional log-Sobolev constant. Indeed, we will show that this constant can be lower bounded by $1/2$. This, together with the previous result of factorization of the relative entropy, will be later used to prove positivity of the global log-Sobolev constant. 

\begin{lemma}\label{lemma:logSoblocal}
For every $x \in \Lambda$ and for $\LL^*_x$ defined as above, the following holds:
 \begin{center}
 $\ds \alpha_\Lambda(\LL^*_x) \geq \frac{1}{2}$.
 \end{center}
\end{lemma}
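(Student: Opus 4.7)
The plan is to show directly that the entropy production numerator in the definition of $\alpha_\Lambda(\LL^*_x)$ is at least $D_x(\rho_\Lambda \,\|\, \sigma_\Lambda)$, which forces the ratio to be $\geq 1/2$. Since $\LL^*_x(\rho_\Lambda) = \E^*_x(\rho_\Lambda) - \rho_\Lambda = \sigma_x \otimes \rho_{x^c} - \rho_\Lambda$, the numerator is
\begin{equation*}
-\tr[\LL^*_x(\rho_\Lambda)(\log\rho_\Lambda - \log\sigma_\Lambda)] = \tr[(\rho_\Lambda - \E^*_x(\rho_\Lambda))(\log\rho_\Lambda-\log\sigma_\Lambda)].
\end{equation*}

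The crucial algebraic step is to recognize the identity
\begin{equation*}
-\tr[\LL^*_x(\rho_\Lambda)(\log\rho_\Lambda - \log\sigma_\Lambda)] = \ent{\rho_\Lambda}{\E^*_x(\rho_\Lambda)} + \ent{\E^*_x(\rho_\Lambda)}{\rho_\Lambda}.
\end{equation*}
To prove this I will expand the trace into four pieces and use the factorization $\log\sigma_\Lambda = \log\sigma_x + \log\sigma_{x^c}$ (with identities implicit). The crossterms involving $\log\sigma_{x^c}$ cancel because $\tr_x[\E^*_x(\rho_\Lambda)] = \rho_{x^c}$; meanwhile the $\log\sigma_x$ terms reassemble into $\tr[\sigma_x\log\sigma_x] - \tr[\rho_x\log\sigma_x]$, which is precisely what is needed to turn $\tr[\rho_\Lambda\log\rho_\Lambda]$ and $-\tr[\E^*_x(\rho_\Lambda)\log\rho_\Lambda]$ into the two relative entropies on the right-hand side. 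This is the only nontrivial step, but it is essentially bookkeeping once one uses that $\sigma_\Lambda$ is a tensor product (and thus $\E^*_x(\rho_\Lambda) = \sigma_x \otimes \rho_{x^c}$).

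Once the identity is established, I drop the second (non-negative) term to get
\begin{equation*}
-\tr[\LL^*_x(\rho_\Lambda)(\log\rho_\Lambda - \log\sigma_\Lambda)] \geq \ent{\rho_\Lambda}{\E^*_x(\rho_\Lambda)} = \enteA{x}{\rho_\Lambda}{\sigma_\Lambda}.
\end{equation*}
Because $\sigma_\Lambda$ is a product state, Case 2 of Subsection~\ref{subsec:comparison} gives $\enteA{x}{\rho_\Lambda}{\sigma_\Lambda} = \entA{x}{\rho_\Lambda}{\sigma_\Lambda} = D_x(\rho_\Lambda\,\|\,\sigma_\Lambda)$. Dividing through by $2D_x(\rho_\Lambda\,\|\,\sigma_\Lambda)$ and taking the infimum over $\rho_\Lambda \in \SSS_\Lambda$ immediately yields $\alpha_\Lambda(\LL^*_x) \geq 1/2$.

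The only real obstacle is the algebraic identity; everything else is a direct application of earlier propositions (non-negativity of the relative entropy, and the equality of the two conditional relative entropies when $\sigma_\Lambda$ is a tensor product). It is worth noting that the bound $1/2$ is universal — it does not depend on $\sigma_\Lambda$ — which is consistent with the footnote's claim that this proof gives a cleaner constant than the local-gap bound of \cite{hypercontractivity}.
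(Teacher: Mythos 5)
Your proof is correct and is essentially the paper's own argument in a slightly repackaged form: your symmetric identity ``numerator $= \ent{\rho_\Lambda}{\E^*_x(\rho_\Lambda)} + \ent{\E^*_x(\rho_\Lambda)}{\rho_\Lambda}$'' is exactly the paper's decomposition of the entropy production into $D_x(\rho_\Lambda\|\sigma_\Lambda) + \ent{\sigma_x\otimes\rho_{x^c}}{\rho_\Lambda}$ (the two agree via the Case~2 equality $\enteA{x}{\rho_\Lambda}{\sigma_\Lambda}=\entA{x}{\rho_\Lambda}{\sigma_\Lambda}$ for product $\sigma_\Lambda$, which you also invoke), and both proofs conclude by discarding the same non-negative term $\ent{\sigma_x\otimes\rho_{x^c}}{\rho_\Lambda}$.
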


\begin{proof}
Let us write explicitly each term in the definition of $\alpha_\Lambda(\LL^*_x)$:
\begin{align*}
\ds \alpha_\Lambda(\LL^*_x) &=  \underset{\rho_\Lambda \in \SSS_\Lambda}{\text{inf}}\frac{-\tr[\LL_x^*(\rho_\Lambda)(\log\rho_\Lambda-\log \sigma_\Lambda)]}{2 D_x(\rho_\Lambda || \sigma_\Lambda)} \\
&= \underset{\rho_\Lambda \in \SSS_\Lambda}{\text{inf}}\frac{\tr[(\rho_\Lambda - \sigma_x \otimes \rho_{x^c} )(\log\rho_\Lambda-\log \sigma_\Lambda)]}{2 ( \ent {\rho_\Lambda} {\sigma_\Lambda} - \ent {\rho_{x^c}}  {\sigma_{x^c}})}\\
&=  \underset{\rho_\Lambda \in \SSS_\Lambda}{\text{inf}}\frac{ \ent {\rho_\Lambda} {\sigma_\Lambda} - \tr[ \sigma_x \otimes \rho_{x^c} \, (\log\rho_\Lambda-\log \sigma_\Lambda)]}{2 ( \ent {\rho_\Lambda} {\sigma_\Lambda} - \ent {\rho_{x^c}}  {\sigma_{x^c}})}.
\end{align*}

Consider now the second term in the numerator. Since $\sigma_\Lambda$, in particular, splits as a tensor product between the regions $x$ and $x^c$, we have:
\begin{align*}
&\tr[ \sigma_x \otimes \rho_{x^c} \, (\log\rho_\Lambda-\log \sigma_\Lambda)] =\\
&\phantom{sadadasdad}= \tr[ \sigma_x \otimes \rho_{x^c} \, (\log\rho_\Lambda - \log \sigma_x \otimes \rho_{x^c}+\log \sigma_x \otimes \rho_{x^c}-\log \sigma_x \otimes \sigma_{x^c})]\\
&\phantom{sadadasdad}= \tr[ \sigma_x \otimes \rho_{x^c} \, (\log\rho_\Lambda - \log \sigma_x \otimes \rho_{x^c})] + \tr[  \rho_{x^c} \, (\log \rho_{x^c}-\log \sigma_{x^c})]\\
&\phantom{sadadasdad}= - \ent { \sigma_x \otimes \rho_{x^c}} {\rho_\Lambda} + \ent {\rho_{x^c}} {\sigma_{x^c}}.
\end{align*}

Therefore, $\alpha_\Lambda(\LL_x)$ is given by:

\begin{align*}
\ds \alpha_\Lambda(\LL^*_x) &=   \underset{\rho_\Lambda \in \SSS_\Lambda}{\text{inf}}\frac{ \ent {\rho_\Lambda} {\sigma_\Lambda} + \ent { \sigma_x \otimes \rho_{x^c}} {\rho_\Lambda} - \ent {\rho_{x^c}} {\sigma_{x^c}}}{2 ( \ent {\rho_\Lambda} {\sigma_\Lambda} - \ent {\rho_{x^c}}  {\sigma_{x^c}})} \\
&= \frac{1}{2} +  \underset{\rho_\Lambda \in \SSS_\Lambda}{\text{inf}}\frac{  \ent { \sigma_x \otimes \rho_{x^c}} {\rho_\Lambda}}{2 ( \ent {\rho_\Lambda} {\sigma_\Lambda} - \ent {\rho_{x^c}}  {\sigma_{x^c}})} \\
& \geq \frac{1}{2},
\end{align*}
since $ \ent {\rho_\Lambda} {\sigma_\Lambda} - \ent {\rho_{x^c}}  {\sigma_{x^c}} \geq 0$ (Property of monotonicity of the relative entropy) and $ \ent { \sigma_x \otimes \rho_{x^c}} {\rho_\Lambda} \geq 0$ (Property of non-negativity of the relative entropy).

\end{proof}

Finally, we are in position of proving positivity of the global log-Sobolev constant from the previous lemma and Theorem \ref{thm:factorization}.

\begin{theorem}
 $\LL_\Lambda^*$ defined as above has a global positive log-Sobolev constant.

\end{theorem}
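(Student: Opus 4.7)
The plan is to combine the local log-Sobolev bound from Lemma \ref{lemma:logSoblocal} with the quasi-factorization result of Theorem \ref{thm:factorization} to obtain a global bound. The key observation is that the numerator in the global log-Sobolev constant decomposes as a sum over sites (since $\mathcal{L}_\Lambda^* = \sum_{x \in \Lambda} \mathcal{L}_x^*$ and the trace is linear), while the denominator is a single global relative entropy; Theorem \ref{thm:factorization} is precisely what is needed to bridge the gap between $D(\rho_\Lambda \| \sigma_\Lambda)$ and $\sum_{x} D_x(\rho_\Lambda \| \sigma_\Lambda)$.

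First I would fix an arbitrary $\rho_\Lambda \in \SSS_\Lambda$ and write
\begin{equation*}
-\tr[\mathcal{L}_\Lambda^*(\rho_\Lambda)(\log \rho_\Lambda - \log \sigma_\Lambda)] = \sum_{x \in \Lambda} \bigl( -\tr[\mathcal{L}_x^*(\rho_\Lambda)(\log \rho_\Lambda - \log \sigma_\Lambda)] \bigr).
\end{equation*}
For each site $x$, Lemma \ref{lemma:logSoblocal} gives
\begin{equation*}
-\tr[\mathcal{L}_x^*(\rho_\Lambda)(\log \rho_\Lambda - \log \sigma_\Lambda)] \ge 2 \alpha_\Lambda(\mathcal{L}_x^*) \, D_x(\rho_\Lambda \| \sigma_\Lambda) \ge D_x(\rho_\Lambda \| \sigma_\Lambda),
\end{equation*}
since $\alpha_\Lambda(\mathcal{L}_x^*) \ge 1/2$. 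Summing these local inequalities over $x \in \Lambda$ and applying Theorem \ref{thm:factorization} (which applies because $\sigma_\Lambda = \bigotimes_{x \in \Lambda} \sigma_x$ is a tensor product), I obtain
\begin{equation*}
-\tr[\mathcal{L}_\Lambda^*(\rho_\Lambda)(\log \rho_\Lambda - \log \sigma_\Lambda)] \ge \sum_{x \in \Lambda} D_x(\rho_\Lambda \| \sigma_\Lambda) \ge D(\rho_\Lambda \| \sigma_\Lambda).
\end{equation*}

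Dividing by $2 D(\rho_\Lambda \| \sigma_\Lambda)$ and taking the infimum over $\rho_\Lambda$ then yields $\alpha(\mathcal{L}_\Lambda^*) \ge 1/2 > 0$, which establishes the theorem (with the same universal bound $1/2$ that appears in the local result). No real obstacle arises at this stage: the two nontrivial ingredients (positivity of the conditional log-Sobolev constants and quasi-factorization of the relative entropy for a product fixed point) have already been proved, so the global proof is a direct assembly. The only minor subtlety is making sure that the same assignment of the conditional relative entropy is used both in the definition of $\alpha_\Lambda(\mathcal{L}_x^*)$ and in the quasi-factorization bound, which is guaranteed by the remark that under the product assumption on $\sigma_\Lambda$ the two candidate definitions of $D_x$ coincide.
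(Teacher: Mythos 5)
Your proposal is correct and follows essentially the same route as the paper: both combine Lemma \ref{lemma:logSoblocal} with Theorem \ref{thm:factorization}, the only difference being that you chain the inequalities starting from the numerator while the paper starts from $D(\rho_\Lambda\|\sigma_\Lambda)$ and works toward the Dirichlet form. The resulting bound $\alpha(\LL_\Lambda^*)\ge 1/2$ is the same.
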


\begin{proof}
In virtue of the result of factorization proven above (Theorem \ref{thm:factorization}), we know that
\begin{equation}\label{eq:fact2}
 \ent {\rho_{\Lambda}} {\sigma_{\Lambda}} \leq \un{x \in \Lambda}{\sum} \entA x {\rho_{\Lambda}} {\sigma_{\Lambda}}
\end{equation}
for every $\rho_\Lambda \in \SSS_\Lambda$.

From the definition of $\alpha_\Lambda(\LL^*_x)$, it is clear that the following holds for every $x \in \Lambda$

\begin{center}
$\ds  \entA x {\rho_{\Lambda}} {\sigma_{\Lambda}} \leq \frac{-\tr[\LL_x^*(\rho_\Lambda)(\log \rho_\Lambda - \log \sigma_\Lambda)]}{\alpha_\Lambda(\LL_x)} $.
\end{center}

Putting this together with equation (\ref{eq:fact2}), we have:
\begin{align*}
\ent {\rho_{\Lambda}} {\sigma_{\Lambda}}& \leq \un{x \in \Lambda}{\sum} \entA x {\rho_{\Lambda}} {\sigma_{\Lambda}} \\
& \leq \un{x \in \Lambda}{\sum} \frac{-\tr[\LL_x^*(\rho_\Lambda)(\log \rho_\Lambda - \log \sigma_\Lambda)]}{\alpha_\Lambda(\LL^*_x)} \\
& \leq \frac{1}{\un{x \in \Lambda}{\text{inf}} \, \alpha_\Lambda(\LL^*_x)} \un{x \in \Lambda}{\sum} -\tr[\LL_x^*(\rho_\Lambda)(\log \rho_\Lambda - \log \sigma_\Lambda)] \\
&= \frac{1}{\un{x \in \Lambda}{\text{inf}} \, \alpha_\Lambda(\LL^*_x)} \left( - \tr[ \LL_\Lambda^*(\rho_\Lambda)(\log \rho_\Lambda - \log \sigma_\Lambda) ] \right) \\
& \leq 2 \left( - \tr[ \LL_\Lambda^*(\rho_\Lambda)(\log \rho_\Lambda - \log \sigma_\Lambda) ] \right), 
\end{align*}
where, in the fourth line, we have used the definition of $\LL_\Lambda^*$ and, in the fifth line, Lemma \ref{lemma:logSoblocal}. This expression holds for every $\rho_\Lambda \in \SSS_\Lambda$.

Finally, recalling the definition of $\alpha(\LL^*_\Lambda)$, we have

\begin{center}
$\ds \alpha(\LL^*_\Lambda) = \underset{\rho_\Lambda \in \SSS_\Lambda}{\text{inf}}\frac{-\tr[\LL_\Lambda^*(\rho_\Lambda)(\log\rho_\Lambda-\log \sigma_\Lambda)]}{2 D(\rho_\Lambda || \sigma_\Lambda)} \geq \frac{1}{2}$.
\end{center}

Hence,  $\LL^*_\Lambda$  has a global positive log-Sobolev constant, which is greater or equal than $1/2$.
\end{proof}

\begin{remark}
The structure of the proof followed to obtain positivity for the log-Sobolev constant is an analogous quantum version of a simplification to the one used in \cite{clasico} and \cite{cesi} to prove a bound on a log-Sobolev constant that connects  the decay of correlations in the Gibbs state of a classical spin model to the mixing time of the associated Glauber dynamics. One could then hope that the results of quasi-factorization of the relative entropy of the previous sections might be of use to obtain positive log-Sobolev constants for certain dynamics and connect it with a decay of correlations on the Gibbs state above the critical temperature. This is left for future work.

\end{remark}

\section{Conclusions}

In this paper, we have introduced a new quantity in quantum information theory, the \textit{conditional relative entropy}, and we have characterized it axiomatically, as well as presented several results of quasi-factorization of the relative entropy in terms of this conditional relative entropy.

Afterwards, we have weakened the definition of conditional relative entropy and presented an example of this weaker definition, which we have called \textit{conditional relative entropy by expectations}. We have compared both definitions, seen that both extend their classical analogue, and presented for the latter some weaker results of quasi-factorization.

Finally, in the last part of the paper, we have seen that a result of quasi-factorization is the key tool to prove that the heat-bath dynamics, with product fixed point, has a positive log-Sobolev constant. 

However, throughout the whole manuscript, we have also left several open questions. We proceed now to discuss them in more detail.

The main open question that yields from this work is related to the main result of Section \ref{sec:logSob}. More specifically, in that section, we show that a result of quasi-factorization of the relative entropy, when the second state is a tensor product, is the key tool to prove that the heat-bath dynamics, with product fixed point, has a positive log-Sobolev constant. Whether the same strategy might be followed to obtain a positive log-Sobolev constant for the heat-bath dynamics, in a more general setting, from the stronger result of quasi-factorization presented in Subsection \ref{subsec:QF-CRE}, under the assumption of a decay of correlations in the fixed point, is left as an open question.

\begin{problem}
Use the result of quasi-factorization of the relative entropy in terms of conditional relative entropies of Subsection \ref{subsec:QF-CRE} to obtain positive log-Sobolev constant for the heat-bath dynamics, with a general fixed point $\sigma$.
\end{problem}

This result of quasi-factorization could also be adapted to the Davies generator setting, and the same strategy might lead to a positive log-Sobolev constant for this dynamics. 

\begin{problem}
Can one adapt the result of Subsections \ref{subsec:QF-CRE} or \ref{subsec:QF-CREexp} to the Davies generator, to obtain a positive log-Sobolev constant for this dynamics? 
\end{problem}

To tackle this problem, it is likely that we first need to improve the result of quasi-factorization for the conditional expectation.

\begin{problem}
Improve the result of quasi-factorization of the conditional relative entropy by expectations of Subsection \ref{subsec:QF-CREexp}, by improving the bound that we obtained for the error term. 
\end{problem}

Concerning the definition of conditional relative entropy presented in this paper, we have shown several clues that allow us to think that the definition is reasonable. However, there is some space to possibly improve it, in some sense, so that we can obtain results of quasi-factorization more easily, for example.

\begin{problem}
Improve the definition of conditional relative entropy. One idea to do that could be to add the property proven in equation (\ref{eq:prop-sigma-prod}) to the definition.

Is there any possible axiomatic definition for conditional relative entropy from which we can immediately obtain results of quasi-factorization?
\end{problem}

Moreover, in Subsection \ref{subsec:comparison}, we have compared the definitions of conditional relative entropy and conditional relative entropy by expectations. On the one side, we have shown several cases where they coincide, and on the other side, we have seen that this cannot hold always. We leave the possibility of studying in general for which cases both expressions are the same as an open problem:

\begin{problem}
Characterize for which states $\rho_{AB}, \sigma_{AB} \in \SSS_{AB}$, the following holds:
\begin{center}
$\ds  \entA A {\rho_{AB}} {\sigma_{AB}} = \enteA A  {\rho_{AB}} {\sigma_{AB}} $,
\end{center}
or, at least, find more examples where this equality holds. 
\end{problem}

Finally, when introducing the result of positivity of the log-Sobolev constant for the heat-bath dynamics with product fixed point, we have mentioned that proving the existence of a positive log-Sobolev constant for a more general Lindbladian of the same form (sum of local terms) for any quantum channel with product fixed point is left as an open question.

\begin{problem}
For $\hs_\Lambda= \un{x \in \Lambda}{\bigotimes} \hs_x $, and $\sigma_\Lambda=\un{x \in \Lambda}{\bigotimes} \sigma_x $, prove, if true, that, if $\mathcal{T}_x^*$ is a quantum channel with $\sigma_x$ as fixed point for every $x \in \Lambda$, then
\begin{center}
$\ds  \un{x \in \Lambda}{\sum} \mathcal{T}_x^* - \identity_\Lambda $
\end{center}
has a positive log-Sobolev constant.
\end{problem}

\section*{Acknowledgment}

The authors would like to thank David Sutter, Ivan Bardet and Andreas Bluhm for very helpful conversations. AC and DPG acknowledge support from MINECO (grants MTM2014-54240-P and MTM2017-88385-P), from Comunidad de Madrid (grant QUITEMAD+- CM, ref. S2013/ICE-2801), and the European Research Council (ERC) under the European Union’s Horizon 2020 research and innovation programme (grant agreement No 648913). AC is partially supported by a La Caixa-Severo Ochoa grant (ICMAT Severo Ochoa project SEV-2011-0087, MINECO). AL acknowledges financial support from the European Research Council (ERC Grant Agreement no 337603), the Danish Council for Independent Research (Sapere Aude) and VILLUM FONDEN via the QMATH Centre of Excellence (Grant No. 10059). This work has been partially supported by ICMAT Severo Ochoa project SEV-2015-0554 (MINECO).

\end{document}